\def\eqref#1{equation~\ref{#1}}
\def\1{\bm{1}}
\def\vpi{{\bm{\pi}}}
\def\mM{{\bm{M}}}
\DeclareMathAlphabet{\mathsfit}{\encodingdefault}{\sfdefault}{m}{sl}
\SetMathAlphabet{\mathsfit}{bold}{\encodingdefault}{\sfdefault}{bx}{n}
\def\sR{{\mathbb{R}}}
\DeclareMathOperator*{\argmax}{arg\,max}
\newcommand{\rulesep}{\unskip\ \vrule\ }
\newcommand{\beq}{\begin{equation}}
\newcommand{\eeq}{\end{equation}}
\newcommand{\beqa}{\begin{eqnarray}}
\newcommand{\eeqa}{\end{eqnarray}}
\newcommand{\beqan}{\begin{eqnarray*}}
\newcommand{\eeqan}{\end{eqnarray*}}
\newcommand{\bi}{\begin{itemize}}
\newcommand{\ei}{\end{itemize}}
\newcommand{\be}{\begin{enumerate}}
\newcommand{\ee}{\end{enumerate}}
\tikzset{%
  highlight/.style={rectangle,rounded corners,fill=red!15,draw,fill opacity=0.2,thick,inner sep=0pt}
}
\newcommand{\tikzmark}[2]{\tikz[overlay,remember picture,baseline=(#1.base)] \node (#1) {#2};}
\newcommand{\Highlight}[3]{%
    \tikz[overlay,remember picture]{
    \node[highlight,draw={#2},fill={#3},fit={([shift={(-0.5em,0mm)}] {#1}_left.north west) ([shift={(0.7em,0mm)}] {#1}_right.south east)}] (#1) {};}
}
\definecolor{ForestGreen}{rgb}{0.13, 0.55, 0.13}
\newcommand\extraboldgreen[1]{\textcolor{ForestGreen}{\pmb{\bm{#1}}}}
\newtheorem{example}{Example}
\newtheorem{definition}{Definition}
\DeclareBoldMathCommand{\balpha}{$\alpha$}
\newcommand{\alpharank}{$\alpha$-Rank\xspace}
\newcommand{\metassccs}{meta-SSCCs\xspace}
\newcommand{\metasscc}{meta-SSCC\xspace}
\newcommand{\alphapsro}{$\alpha$-PSRO\xspace}
\newcommand{\balpharank}{\balpha\textbf{-Rank}\xspace}
\newcommand{\NashConv}{\textsc{NashConv}\xspace}
\newcommand{\alphaConv}{\alpha\text{-\textsc{Conv}}\xspace}
\newcommand{\PBRScore}{\textsc{PBR-Score}\xspace}
\newcommand{\PCSScore}{\textsc{PCS-Score}\xspace}
\newcommand{\oracle}{\mathcal{O}}
\newcommand{\metasolver}{\mathcal{M}}
\newcommand{\cmark}{\ding{51}\xspace}%
\newcommand{\xmark}{\ding{55}\xspace}%
\NewDocumentCommand{\psro}{ > {\SplitArgument { 1 }{,}} m }{\printPSRO#1}
\NewDocumentCommand{\printPSRO}{mm}{PSRO({#1}, {#2})} 
\newcommand{\tightparagraph}[1]{\textbf{{#1}}\,\,} 
\renewcommand\cite{\GenericError{}
    {Error: \string\cite\space should not be used!}
    {The standard LaTeX command `\string\cite' should be avoided, because it behaves like `\string\citet' for author-year citations, but like `\string\citep' for numerical ones.}{}}%
\crefname{equation}{}{} 
\crefname{appsec}{Appendix}{Appendices}
\title{A Generalized Training Approach\\ for Multiagent Learning}
\author{%
    Paul Muller\\
    {\small \texttt{pmuller@\ldots}} \\
    \And 
    Shayegan Omidshafiei\\
    {\small \texttt{somidshafiei@\ldots}} \\
    \And
    Mark Rowland\\
    {\small \texttt{markrowland@\ldots}} \\
    \And
    Karl Tuyls\\
    {\small \texttt{karltuyls@\ldots}} \\
    \AND
    Julien Perolat\\
    {\small \texttt{perolat@\ldots}} \\
    \And
    Siqi Liu\\
    {\small \texttt{liusiqi@\ldots}} \\
    \And
    Daniel Hennes\\
    {\small \texttt{hennes@\ldots}} \\
    \And
    Luke Marris\\
    {\small \texttt{marris@\ldots}} \\
    \AND
    Marc Lanctot\\
    {\small \texttt{lanctot@\ldots}} \\
    \And
    Edward Hughes\\
    {\small \texttt{edwardhughes@\ldots}} \\
    \And
    Zhe Wang\\
    {\small \texttt{zhewang@\ldots}} \\
    \And
    Guy Lever\\
    {\small \texttt{guylever@\ldots}} \\
    \AND
    Nicolas Heess\\
    {\small \texttt{heess@\ldots}} \\
    \And
    Thore Graepel\\
    {\small \texttt{thore@\ldots}} \\
    \And
    Remi Munos\\
    {\small \texttt{munos@\ldots}} \\
    \AND
    \vspace{-20pt}{}\\
    {}\\
    \hspace{100pt}{\tt\ldots google.com}. \hspace{30pt} DeepMind. 
}
\begin{document}

\maketitle

\begin{abstract}
    This paper investigates a population-based training regime based on game-theoretic principles called Policy-Spaced Response Oracles (PSRO).
    PSRO is general in the sense that it (1) encompasses well-known algorithms such as fictitious play and double oracle as special cases, and (2) in principle applies to general-sum, many-player games.
    Despite this, prior studies of PSRO have been focused on two-player zero-sum games,
    a regime wherein Nash equilibria are tractably computable.
    In moving from two-player zero-sum games to more general settings, computation of Nash equilibria quickly becomes infeasible.
    Here, we extend the theoretical underpinnings of PSRO by considering an alternative solution concept, \alpharank, which is unique (thus faces no equilibrium selection issues, unlike Nash) and applies readily to general-sum, many-player settings.
    We establish convergence guarantees in several games classes, and identify links between Nash equilibria and \alpharank.
    We demonstrate the competitive performance of \alpharank-based PSRO against an exact Nash solver-based PSRO in 2-player Kuhn and Leduc Poker.
    We then go beyond the reach of prior PSRO applications by considering 3- to 5-player poker games, yielding instances where \alpharank achieves faster convergence than approximate Nash solvers, thus establishing it as a favorable general games solver.
    We also carry out an initial empirical validation in MuJoCo soccer, illustrating the feasibility of the proposed approach in another complex domain.
\end{abstract}

\section{Introduction}
Creating agents that learn to interact in large-scale systems is a key challenge in artificial intelligence.
Impressive results have been recently achieved in restricted settings (e.g., zero-sum, two-player games) using game-theoretic principles such as iterative best response computation \citep{lanctot2017unified}, self-play \citep{silver2017mastering}, and evolution-based training \citep{jaderberg2018human,liu2019emergent}.
A key principle underlying these approaches is to iteratively train a growing population of player policies, with population evolution informed by heuristic skill ratings (e.g., Elo \citep{elo1978rating}) or game-theoretic solution concepts such as Nash equilibria.
A general application of this principle is embodied by the Policy-Space Response Oracles (PSRO) algorithm and its related extensions \citep{lanctot2017unified,BalduzziGB0PJG19}.
Given a game (e.g., poker), PSRO constructs a higher-level meta-game by simulating outcomes for all match-ups of a population of players' policies. 
It then trains new policies for each player (via an \emph{oracle}) against a distribution over the existing meta-game policies (typically an approximate Nash equilibrium, obtained via a \emph{meta-solver}\footnote{A meta-solver is a method that computes, or approximates, the solution concept that is being deployed.}), appends these new policies to the meta-game population, and iterates.
In two-player zero sum games, fictitious play \citep{brown1951iterative}, double oracle \citep{mcmahan2003planning}, and independent reinforcement learning can all be considered instances of PSRO, demonstrating its representative power \citep{lanctot2017unified}.

Prior applications of PSRO have used Nash equilibria as the policy-selection distribution \citep{lanctot2017unified,BalduzziGB0PJG19}, which limits the scalability of PSRO to general games:
Nash equilibria are intractable to compute in general \citep{daskalakis2009complexity};
computing \emph{approximate} Nash equilibria is also intractable, even for some classes of two-player games \citep{daskalakis2013complexity};
finally, when they can be computed, Nash equilibria suffer from a selection problem \citep{harsanyi1988general,goldberg2013complexity}.
It is, thus, evident that the reliance of PSRO on the Nash equilibrium as the driver of population growth is a key limitation, preventing its application to general games.
Recent work has proposed a scalable alternative to the Nash equilibrium, called \alpharank, which applies readily to general games \citep{omidshafiei2019alpha}, making it a promising candidate for population-based training.
Given that the formal study of PSRO has only been conducted under the restricted settings determined by the limitations of Nash equilibria, establishing its theoretical and empirical behaviors under alternative meta-solvers remains an important and open research problem.

We study several PSRO variants in the context of general-sum, many-player games, providing convergence guarantees in several classes of such games for PSRO instances that use \alpharank as a meta-solver.
We also establish connections between Nash and \alpharank in specific classes of games, and identify links between \alpharank and the Projected Replicator Dynamics employed in prior PSRO instances \citep{lanctot2017unified}.
We develop a new notion of best response that guarantees convergence to the \alpharank distribution in several classes of games, verifying this empirically in randomly-generated general-sum games. 
We conduct empirical evaluations in Kuhn and Leduc Poker, first establishing our approach as a competitive alternative to Nash-based PSRO by focusing on two-player variants of these games that have been investigated in these prior works.
We subsequently demonstrate empirical results extending beyond the reach of PSRO with Nash as a meta-solver by evaluating training in 3- to 5-player games.
Finally, we conduct preliminary evaluations in MuJoCo soccer \citep{liu2019emergent}, another complex domain wherein we use reinforcement learning agents as oracles in our proposed PSRO variants, illustrating the feasibility of the approach.

\section{Preliminaries}\label{sec:preliminaries}

\tightparagraph{Games}
We consider $K$-player games, where each player $k \in [K]$ has a finite set of pure strategies $S^k$. 
Let $S = \prod_k S^k$ denote the space of pure strategy profiles.
Denote by $S^{-k} = \prod_{l \not= k} S^l$ the set of pure strategy profiles excluding those of player $k$.
Let $\mM(s) = (\mM^1(s),\ldots,\mM^K(s)) \in \sR^K$ denote the vector of expected player payoffs for each $s \in S$.
A game is said to be \emph{zero-sum} if $\sum_k \mM^k(s) = 0$ for all $s \in S$.
A game is said to be \emph{symmetric} if all players have identical strategy sets $S^k$, and for any permutation $\rho$, strategy profile $(s^1, \ldots, s^K) \in S$, and index $k \in [K]$, one has $\mM^k(s^1,\ldots,s^K) = \mM^{\rho(k)}(s^{\rho(1)}, \ldots, s^{\rho(K)})$. A mixed strategy profile is defined as $\vpi \in \Delta_{S}$, a tuple representing the probability distribution over pure strategy profiles $s \in S$.
The expected payoff to player $k$ under a mixed strategy profile $\vpi$ is given by $\mM^k(\vpi) = \sum_{s \in S} \vpi(s) \mM^k(s)$.

\tightparagraph{Nash Equilibrium (NE)}
Given a mixed profile $\vpi$, the \emph{best response} for a player $k$ is defined $\text{BR}^{k}(\vpi)=\argmax_{\bm\nu \in \Delta_{S^k}} [\mM^{k}({\bm\nu,\vpi^{-k}})]$.
A factorized mixed profile $\vpi(s) = \prod_k \vpi^k(s^k)$ is a \emph{Nash equilibrium (NE)} if $\vpi^{k} \in \text{BR}^{k}(\vpi)$ for all $k \in [K]$.
Define $\NashConv(\vpi) = \sum_{k} \mM^{k}({\text{BR}^k(\vpi),\vpi^{-k}}) - \mM^{k}({\vpi})$; roughly speaking, this measures ``distance'' from an NE \citep{lanctot2017unified}.
In prior PSRO instances \citep{lanctot2017unified}, a variant of the replicator dynamics \citep{taylor1978evolutionary,Smith73}, called the Projected Replicator Dynamics (PRD), has been used as an approximate Nash meta-solver  (see \cref{sec:prd_and_alpharank} for details on PRD).

\tightparagraph{\balpharank}
While NE exist in all finite games \citep{nash1950equilibrium}, their computation is intractable in general games, and their non-uniqueness leads to an equilibrium-selection problem \citep{harsanyi1988general,goldberg2013complexity}. 
This limits their applicability as the underlying driver of training beyond the two-player, zero-sum regime. 
Recently, an alternate solution concept called \alpharank was proposed by \citet{omidshafiei2019alpha}, the key associated benefits being its uniqueness and efficient computation in many-player and general-sum games, making it a promising means for directing multiagent training.

The \alpharank distribution is computed by constructing the \emph{response graph} of the game:
each strategy profile $s \in S$ of the game is a node of this graph;
a directed edge points from any profile $s \in S$ to $\sigma \in S$ in the graph if (1) $s$ and $\sigma$ differ in only a single player $k$'s strategy and (2) $\mM^k(\sigma) > \mM^k(s)$.
\alpharank constructs a random walk along this directed graph, perturbing the process by 
injecting a small probability of backwards-transitions from $\sigma$ to $s$ (dependent on a parameter, $\alpha$, whose value is prescribed by the algorithm);
this ensures irreducibility of the resulting Markov chain and the existence of a unique stationary distribution, $\vpi \in \Delta_{S}$, called the \alpharank distribution.
The masses of $\vpi$ are supported by the sink strongly-connected components (SSCCs) of the response graph \citep{omidshafiei2019alpha}. 
For more details on \alpharank, see \cref{appendix:alpharank_overview} and \citet{rowland2019multiagent}. 

\tightparagraph{Oracles} 
We define an oracle $\oracle$ as an abstract computational entity that, given a game, computes policies with precise associated properties.  
For instance, a best-response oracle $\oracle^k(\vpi) = \text{BR}^k(\vpi)$
computes the best-response policy for any player $k$, given a profile $\vpi$.
One may also consider approximate-best-response oracles that, e.g., use reinforcement learning to train a player $k$'s policy against a fixed distribution over the other players' policies, $\vpi^{-k}$.
Oracles play a key role in population-based training, as they compute the policies that are incrementally added to players' growing policy populations \citep{mcmahan2003planning,lanctot2017unified,BalduzziGB0PJG19}.
The choice of oracle $\oracle$ also affects the training convergence rate and final equilibrium reached (e.g., Nash or \alpharank).

\tightparagraph{Empirical Game-theoretic Analysis}
PSRO relies on principles from empirical game-theoretic analysis (EGTA) \citep{walsh2002analyzing,PhelpsPM04,wellman2006methods}.
Given a game (e.g., poker), EGTA operates via construction of a higher-level `meta-game', where strategies $s$ correspond to policies (e.g., `play defensively' in poker) rather than atomic actions (e.g., `fold').
A meta-payoff table $\mM$ is then constructed by simulating games for all joint policy combinations, with entries corresponding to the players' expected utilities under these policies. 
Game-theoretic analysis can then be conducted on the meta-game in a manner analogous to the lower-level game, albeit in a much more scalable manner.
As the theoretical discussion hereafter pertains to the meta-game, we use $s$, $\mM$, and $\vpi$ to respectively refer to policies, payoffs, and distributions at the meta-level, rather than the underlying low-level game. In our analysis, it will be important to distinguish between SSCCs of the underlying game, and of the meta-game constructed by PSRO; we refer to the latter as \metassccs. 

\section{Policy-Space Response Oracles: Nash and Beyond}\label{sec:algorithm_psro}

\begin{figure}[t]
    \newcommand{\negSpace}{-10pt}
    \centering
    \includegraphics[width=0.95\textwidth]{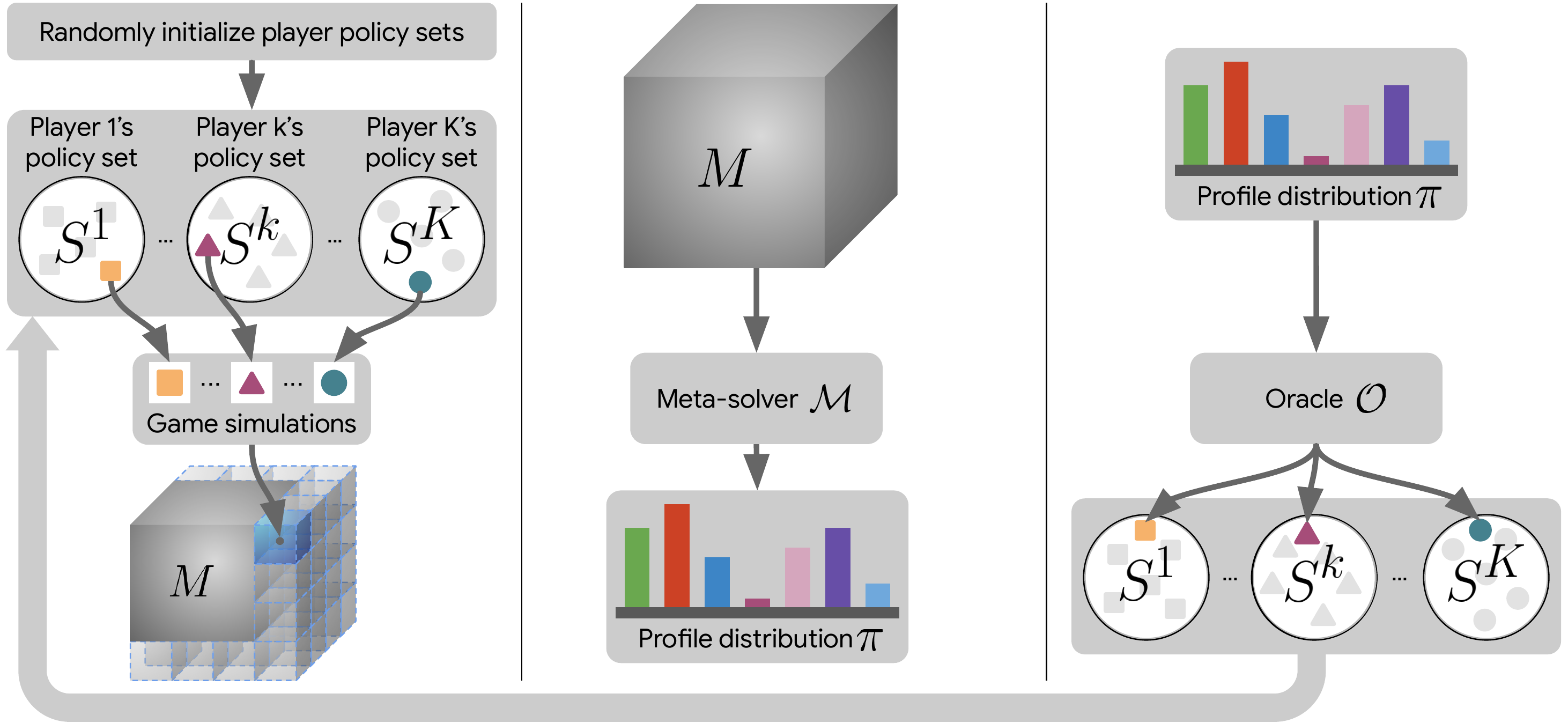}
    \begin{subfigure}[t]{0.3\textwidth}
        \vspace{\negSpace}
        \caption{Complete: compute missing payoff tensor $\mM$ entries via game simulations.}
        \label{fig:gpsro_overview_phase1}
    \end{subfigure}%
    \hfill%
    \begin{subfigure}[t]{0.3\textwidth}
        \vspace{\negSpace}
        \caption{Solve: given the updated payoff tensor $\mM$, calculate meta-strategy $\vpi$ via meta-solver $\metasolver$.}
        \label{fig:gpsro_overview_phase2}
    \end{subfigure}%
    \hfill%
    \begin{subfigure}[t]{0.3\textwidth}
        \vspace{\negSpace}
        \caption{Expand: append a new policy to each player's policy space using the oracle $\oracle$.}
        \label{fig:gpsro_overview_phase3}
    \end{subfigure}
    \vspace{-14pt}
    \caption{Overview of PSRO($\metasolver$, $\oracle$) algorithm phases.}
    \label{fig:gpsro_overview}
     \vspace{-10pt}
\end{figure}

\begin{algorithm}[t]
    \caption{\psro{$\metasolver$, $\oracle$}}
    \label{alg:gpsro}
    \begin{algorithmic}[1]
    \State{Initialize the players' policy set $S = \prod_k S^k$ via random policies}
    \For{iteration $\in \{1,2,\cdots\}$}
        \State{Update payoff tensor $\mM$ for new policy profiles in $S$ via game simulations}\Comment{(\cref{fig:gpsro_overview_phase1})}
        \State{Compute the meta-strategy $\vpi$ using meta-solver $\metasolver(\mM)$}\Comment{(\cref{fig:gpsro_overview_phase2})}
        \State{Expand the policy space for each player $k \in [K]$ via $S^k \gets S^k \cup \oracle^k(\vpi)$}\Comment{(\cref{fig:gpsro_overview_phase3})} \label{algline:expand}
    \EndFor
    \end{algorithmic}
\end{algorithm}

We first overview Policy-Space Response Oracles (PSRO) prior to presenting our findings.
Given an underlying game (e.g., Poker), PSRO first initializes the policy space $S$ using randomly-generated policies, then expands the players' policy populations in three iterated phases:
\textbf{complete}, \textbf{solve}, and \textbf{expand} (see \cref{alg:gpsro} and \cref{fig:gpsro_overview}).
In the \textbf{complete} phase, a meta-game consisting of all match-ups of these joint policies is synthesized, with missing payoff entries in $\mM$ completed through game simulations.
Next, in the \textbf{solve} phase, a meta-solver $\metasolver$  computes a profile $\vpi$ over the player policies (e.g., Nash, \alpharank, or uniform distributions).
Finally, in the \textbf{expand} phase, an oracle $\oracle$ computes at least one new policy $s'_k$ for each player $k \in [K]$, given profile $\vpi$.
As other players' policy spaces $S^{-k}$ and profile $\vpi^{-k}$ are fixed, this phase involves solving a single-player optimization problem.
The new policies are appended to the respective players' policy sets, and the algorithm iterates.
We use \psro{$\metasolver$, $\oracle$} to refer to the PSRO instance using meta-solver $\metasolver$ and oracle $\oracle$.
Notably, PSRO-based training for two-player symmetric games can be conducted using a single population of policies that is shared by all players (i.e., $S^k$ is identical for all $k$).
Thus, we henceforth refer to two-player symmetric games as `single-population games', and more generally refer to games that require player-specific policy populations as `multi-population games'. 
Recent investigations of PSRO have solely focused on Nash-based meta-solvers and best-response-based oracles \citep{lanctot2017unified,BalduzziGB0PJG19}, with theory focused around the two-player zero-sum case.
Unfortunately, these guarantees do not hold in games beyond this regime, making investigation of alternative meta-solvers and oracles critical for further establishing PSRO's generalizability.

\section{Generalizing PSRO Theory}\label{sec:analysis}

\begin{table}[t]
    \centering
    \tabcolsep=0.30cm
    \scalebox{1}{%
    \begin{tabular}{cccc}
        \toprule
        Game type & $\metasolver$ & $\oracle$ & 
        Converges to \alpharank? \\
        \midrule
        SP & \alpharank & BR & \xmark (\cref{ex:psro-alpha+br}) \\
        SP & \alpharank & PBR & \cmark (Sub-SSCC,$^{\dagger}$ \cref{prop:single_pop_conv}) \\
        MP & \alpharank & BR & \xmark (\cref{ex:psro-mpalpha+br})  \\
        MP & \alpharank & PBR & \cmark (With novelty-bound oracle,$^{\dagger}$ \cref{prop:partial_scc_conv}) \\ \hline
        SP / MP & Uniform or Nash & BR  & 
        \xmark (\cref{ex:psro,ex:psro-mp}, \cref{appendix:psro_nash_br_examples}) \\
        \bottomrule
    \end{tabular}
    }
    \vspace{-8pt}
    \caption{Theory overview. SP and MP, resp., denote single and multi-population games.
    BR and PBR, resp., denote best response and preference-based best response.
    $^\dagger$Defined in the noted propositions.
    }
    \label{table:theorysummary}
    \vspace{-10pt}
\end{table}

This section establishes theoretical properties of PSRO for several useful classes of general games.
We summarize our results in \cref{table:theorysummary}, giving a full exposition below.

\subsection{Establishing Convergence to \alpharank}\label{sec:convergence_to_alpharank}

\begin{wraptable}{h}{6cm}
    \vspace{-30pt}
    \centering
        \begin{tabular}{cc|c|c|c|c|c|}
          & \multicolumn{1}{c}{} & \multicolumn{5}{c}{Player $2$} \\
          & \multicolumn{1}{c}{} & \multicolumn{1}{c}{$A$}  & \multicolumn{1}{c}{$B$}  & \multicolumn{1}{c}{$C$} & \multicolumn{1}{c}{$D$} & \multicolumn{1}{c}{$X$} \\\cline{3-7}
           \parbox[t]{2mm}{\multirow{5}{*}{\rotatebox[origin=c]{90}{Player $1$}}} 
                    & $A$ & 0 & $-\phi$ & 1 & $\phi$ & $-\varepsilon$ \\ \cline{3-7}
                    & $B$ & $\phi$ & 0 & $-\phi^2$ & 1 & $-\varepsilon$ \\\cline{3-7}
                    & $C$ & $-1$ & $\phi^2$ & 0 & $-\phi$ & $-\varepsilon$ \\\cline{3-7}
                    & $D$ & $-\phi$ & $-1$ & $\phi$ & 0 & $-\varepsilon$ \\\cline{3-7}
                    & $X$ & $\varepsilon$ & $\varepsilon$ & $\varepsilon$ & $\varepsilon$ & 0 \\\cline{3-7}
        \end{tabular}
    \vspace{-5pt}
    \caption{Symmetric zero-sum game used to analyze the behavior of PSRO in \cref{ex:psro-alpha+br}. Here, $0 < \varepsilon \ll 1$ and $\phi \gg 1$.
    }
    \vspace{-10pt}
    \label{table:game-ex2}
\end{wraptable}
It is well-known that \psro{Nash, BR} will eventually return an NE in two-player zero-sum games \citep{mcmahan2003planning}.
In more general games, where Nash faces the issues outlined earlier, \alpharank appears a promising meta-solver candidate as it applies to many-player, general-sum games and has no selection problem.
However, open questions remain regarding convergence guarantees of PSRO when using \alpharank, and whether standard BR oracles suffice for ensuring these guarantees.
We investigate these theoretical questions, namely, whether particular variants of PSRO can converge to the \alpharank distribution for the underlying game.

A first attempt to establish convergence to \alpharank might involve running PSRO to convergence (until the oracle returns a strategy already in the convex hull of the known strategies), using \alpharank as the meta-solver, and a standard best response oracle.
However, the following example shows that this will not work in general for the single-population case (see  \cref{fig:ex:psro-alpha+br} for a step-by-step illustration).

\begin{example}\label{ex:psro-alpha+br}
     Consider the symmetric zero-sum game specified in \cref{table:game-ex2}.
     As $X$ is the sole sink component of the game's response graph (as illustrated in \cref{fig:ex:psro-alpha+br_a}), the single-population \alpharank distribution for this game puts unit mass on $X$. We now show that a PSRO algorithm that computes best responses to the \alpharank distribution over the current strategy set need not recover strategy $X$, by computing directly the strategy sets of the algorithm initialized with the set $\{C\}$.
     \begin{enumerate}[leftmargin=0.5cm,topsep=0pt,itemsep=-1ex,partopsep=1ex,parsep=1ex]
         \item The initial strategy space consists only of the strategy $C$; the best response against $C$ is $D$.
         \item The \alpharank distribution over $\{C,D\}$ puts all mass on $D$; the best response against $D$ is $A$.
         \item The \alpharank distribution over $\{C,D,A\}$ puts all mass on $A$; the best response against $A$ is $B$.
         \item The \alpharank distribution over $\{C,D,A,B\}$ puts mass $(\nicefrac{1}{3}, \nicefrac{1}{3}, \nicefrac{1}{6}, \nicefrac{1}{6})$ on $(A, B, C, D)$ respectively. For $\phi$ sufficiently large, the payoff that $C$ receives against $B$ dominates all others, and since $B$ has higher mass than $C$ in the \alpharank distribution, the best response is $C$.
     \end{enumerate}
     Thus, \psro{\alpharank, BR} leads to the algorithm terminating with strategy set $\{A,B,C,D\}$ and not discovering strategy $X$ in the sink strongly-connected component.
\end{example}

This conclusion also holds in the multi-population case, as the following counterexample shows.
 \begin{example}\label{ex:psro-mpalpha+br}
     Consider the game in \cref{table:game-ex2}, treating it now as a multi-population problem. It is readily verified that the multi-population \alpharank distributions obtained by PSRO with initial strategy sets consisting solely of $C$ for each player are: (i) a Dirac delta at the joint strategy $(C,C)$, leading to best responses of $D$ for both players; (ii) a Dirac delta at $(D,D)$ leading to best responses of $A$ for both players; (iii) a Dirac delta at $(A,A)$, leading to best responses of $B$ for both players; and finally (iv) a distribution over joint strategies of the 4$\times$4 subgame induced by strategies $A,B,C,D$ that leads to a best response \emph{not} equal to $X$; thus, the full \alpharank distribution is again \emph{not} recovered.
\end{example}

\subsection{A New Response Oracle}\label{sec:a_new_response_oracle}

The previous examples indicate that the use of standard best responses in PSRO may be the root cause of the incompatibility with the \alpharank solution concept.
Thus, we define the \emph{Preference-based Best Response (PBR) oracle}, which is more closely aligned with the dynamics defining \alpharank, and which enables us to establish desired PSRO guarantees with respect to \alpharank. 

Consider first the single-population case.
Given an $N$-strategy population $\{s_1,\ldots,s_N\}$ and corresponding meta-solver distribution $(\vpi_i)_{i=1}^N\!\in\!\Delta_N$, a PBR oracle is defined as any function satisfying
\begin{align}\label{eq:abr_singlepop}
    \textstyle \text{PBR}\big(\sum_{i} \vpi_i s_i\big) \subseteq \argmax_{\sigma} \sum_{i} \vpi_i \mathbbm{1}\left\lbrack \mM^1(\sigma, s_i) > \mM^2(\sigma, s_i)  \right\rbrack \, ,
\end{align}
where the $\argmax$ returns the \emph{set} of policies optimizing the objective, and the optimization is over pure strategies in the underlying game. The intuition for the definition of PBR is that we would like the oracle to return strategies that will receive high mass under \alpharank when added to the population; objective \cref{eq:abr_singlepop} essentially encodes the probability flux that the vertex corresponding to $\sigma$ would receive in the random walk over the \alpharank response graph (see \cref{sec:preliminaries} or \cref{appendix:alpharank_overview} for further details).

We demonstrate below that the use of the PBR resolves the issue highlighted in \cref{ex:psro-alpha+br} (see \cref{fig:ex:psro-alpha+pbr} in \cref{appendix:examples} for an accompanying visual). 

\begin{example}\label{ex:psro-alpha+pbr}
     Steps 1 to 3 of correspond exactly to those of \cref{ex:psro-alpha+br}.
     In step 4, the \alpharank distribution over $\{C,D,A,B\}$ puts mass $(\nicefrac{1}{3}, \nicefrac{1}{3}, \nicefrac{1}{6}, \nicefrac{1}{6})$ on $(A, B, C, D)$ respectively. $A$ beats $C$ and $D$, thus its PBR score is $\nicefrac{1}{3}$. $B$ beats $A$ and $D$, thus its PBR score is $\nicefrac{1}{2}$. $C$ beats $B$, its PBR score is thus $\nicefrac{1}{3}$. $D$ beats $C$, its PBR score is thus $\nicefrac{1}{6}$. Finally, $X$ beats every other strategy, and its PBR score is thus $1$. Thus, there is only one strategy maximizing PBR, $X$, which is then chosen, thereby recovering the SSCC of the game and the correct \alpharank distribution at the next timestep.
\end{example}

In the multi-population case, consider a population of $N$ strategy profiles $\{s_1,\ldots,s_N\}$  and corresponding meta-solver distribution $(\vpi_i)_{i=1}^N$.
Several \metassccs may exist in the multi-population \alpharank response graph.
In this case, we run the PBR oracle for each \metasscc separately, as follows. Suppose there are $\ell$ \metassccs, and denote by $\vpi^{(\ell)}$ the distribution $\vpi$ restricted to the $\ell\textsuperscript{th}$ \metasscc, for all $1\leq \ell \leq L$.
The PBR for player $k$ on the $\ell\textsuperscript{th}$ \metasscc is then defined by
\begin{align}\label{eq:abr}
    \textstyle \text{PBR}^k\big(\sum_{i} \vpi^{(\ell)}_i s_i\big) \subseteq \argmax_{\sigma} \sum_{i} \vpi^{(\ell)}_i \mathbbm{1}\left\lbrack  \mM^k(\sigma,s_i^{-k})  > \mM^k(s_i^k, s^{-k}_i) \right\rbrack \, .
\end{align}
Thus, the PBR oracle generates one new strategy for each player for every \metasscc in the \alpharank response graph; we return this full set of strategies and append to the policy space accordingly, as in \cref{algline:expand} of \cref{alg:gpsro}.
Intuitively, this leads to a \emph{diversification} of strategies introduced by the oracle, as each new strategy need only perform well against a subset of prior strategies.
This hints at interesting links with the recently-introduced concept of rectified-Nash BR \citep{BalduzziGB0PJG19}, which also attempts to improve diversity in PSRO, albeit only in two-player zero-sum games.

We henceforth denote \psro{\alpharank, PBR} as \alphapsro for brevity.
We next define $\alphaConv$, an approximate measure of convergence to \alpharank. We restrict discussion to the multi-population case here, describing the single-population case in \cref{sec:sp-alphaconv}. With the notation introduced above, we define $\PBRScore^k(\sigma; \vpi, S) = \sum_{i} \sum_{\ell} \vpi^{(\ell)}_i \mathbbm{1}\left\lbrack  \mM^k(\sigma,s^{-k}_i)  > \mM^k(s^k_i, s^{-k}_i) \right\rbrack$, and 
\begin{align}
   \textstyle \alphaConv = \sum_{k} \max_{\sigma} \PBRScore^k(\sigma) - \max_{s \in S^k} \PBRScore^k(s) \, ,  
\end{align}
where $\max_{\sigma}$ is taken over the pure strategies of the underlying game.
Unfortunately, in the multi-population case, a \PBRScore of 0 does not necessarily imply $\alpha$-partial convergence.
We thus introduce  a further measure, \PCSScore,  defined by $\PCSScore = \frac{\text{\# of \alphapsro strategy profiles in the underlying game's SSCCs}}{\text{\# of \alphapsro strategy profiles in \metassccs}}$, which assesses the quality of the \alphapsro population.
We refer readers to \cref{appendix:pbr_in_nfg} for pseudocode detailing how to implement these measures in practice.

\subsection{\alphapsro: Theory, Practice, and Connections to Nash}\label{sec:alphapsro_theory}

We next study the theoretical properties of \psro{\alpharank, PBR}, which we henceforth refer to as \alphapsro for brevity.
We consider that \alphapsro has converged if no new strategy has been returned by PBR for any player at the end of an iteration.
Proofs of all results are provided in \cref{sec:proofs}.

\begin{definition}
    A PSRO algorithm is said to converge \textbf{$\alpha$-fully} (resp., \textbf{$\alpha$-partially}) to an SSCC of the underlying game if its strategy population contains the full SSCC (resp., a sub-cycle of the SSCC, denoted a `sub-SSCC') after convergence. 
\end{definition}

\begin{definition}
    We also adapt PBR to be what we call \textbf{novelty-bound} by restricting the $\argmax$ in Equation \cref{eq:abr_singlepop} to be over strategies not already included in the population with $\PBRScore > 0$.
\end{definition}

In particular, the novelty-bound version of the PBR oracle is given by restricting the arg max appearing in \cref{eq:abr} to only be over strategies not already present in the population.

These definitions enable the following results for \alphapsro in the single- and multi-population cases.
\begin{restatable}[]{proposition}{PropPartialSSCCConv}\label{prop:partial_scc_conv}
    If at any point the population of \alphapsro contains a member of an SSCC of the game, then \alphapsro will $\alpha$-partially converge to that SSCC.
\end{restatable}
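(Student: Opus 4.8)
The plan is to argue by contradiction, using two elementary facts. First, \alphapsro terminates: with the novelty-bound PBR oracle every non-terminal iteration adds to some player's finite pure-strategy set a strategy not previously present, so the process stops after finitely many iterations. Second, since strategies are never removed from a population, if a member $s^\star$ of an SSCC $C$ of the game ever appears then it is still present in the terminal population. So I would suppose, for contradiction, that at termination the population contains $s^\star \in C$ but no sub-SSCC of $C$, and show that the novelty-bound PBR oracle would nonetheless return a new strategy, contradicting termination.

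The core step is to exhibit a \metasscc of the terminal population lying inside $C$. Consider the subgraph of the response graph induced on the population strategies that belong to $C$; this vertex set is nonempty (it contains $s^\star$), so it has at least one sink strongly connected component $\tilde C$. I claim $\tilde C$ is a \metasscc of the whole terminal population: since $C$ is an SSCC of the underlying game no response-graph edge leaves $C$, hence none leaves $\tilde C \subseteq C$; and since $\tilde C$ is a sink component within the $C$-restricted subgraph, no edge leaves $\tilde C$ for the rest of that subgraph either. Thus $\tilde C$ has no outgoing edge in the full population response graph, i.e.\ it is a \metasscc. Consequently \alpharank places strictly positive mass on every member of $\tilde C$ (\alpharank supports every \metasscc, and its restriction to a \metasscc is the stationary distribution of an irreducible chain, hence everywhere positive).

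Now I would case on $|\tilde C|$. If $|\tilde C| \ge 2$, then $\tilde C$ is strongly connected with at least two vertices and so contains a cycle; this cycle lies in $C$ and is contained in the population, i.e.\ the population contains a sub-SSCC of $C$ --- contradiction. Hence $\tilde C = \{c\}$ is a singleton. If in addition $C = \{c\}$, then the population contains all of $C$ and \alphapsro has converged $\alpha$-fully, which subsumes the claimed $\alpha$-partial convergence; otherwise $c$ has an improving single-player deviation to some $c' \in C$ with $c' \ne c$, i.e.\ a response-graph edge $c \to c'$. This $c'$ cannot be in the population: otherwise it would be an out-neighbor of $c$ inside the $C$-restricted subgraph, contradicting that $\{c\}$ is a sink component there. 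But $c'$ beats $c$, and $c$ carries positive \alpharank mass (in the single-population case under the full meta-solver distribution, in the multi-population case under the distribution $\vpi^{(\tilde C)}$ restricted to $\tilde C$), so the PBR objective of $c'$ is strictly positive; being absent from the population, $c'$ is an admissible output of the novelty-bound PBR oracle, so the oracle does return a new strategy --- contradiction. For the multi-population case the translation is routine: use the multi-population response graph, apply $\text{PBR}^k$ restricted to the \metasscc $\tilde C$ as in \cref{eq:abr}, and observe that an improving deviation from $c$ inside $C$ alters a single player $k$'s component and is therefore exactly what $\text{PBR}^k$ on that \metasscc rewards.

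The step I expect to be the main obstacle is the middle paragraph: showing that a sink SCC of the $C$-restricted population subgraph is a genuine \metasscc of the full population graph, since this is precisely what lets \alpharank assign it positive mass and thus lets the PBR objective ``see'' the escaping deviation $c \to c'$. Two smaller points need care in the write-up --- lining up the direction conventions of the response graph and of the PBR indicator so that the edge $c \to c'$ really does contribute a positive term to $c'$'s PBR score, and treating the degenerate case of a singleton SSCC $C$ (a strict pure equilibrium), which has no cycle and for which the statement must be read as $\alpha$-full convergence, already delivered by the argument above.
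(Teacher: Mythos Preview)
Your argument follows the same outline as the paper's: locate a \metasscc lying inside the underlying SSCC $C$, then split on whether that \metasscc is a singleton. The paper's proof is terser but structurally identical, and your explicit construction of the \metasscc as a sink component of the $C$-restricted population subgraph is a nice way to justify the step the paper simply asserts.

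There is, however, one real issue: you assume throughout that the PBR oracle is novelty-bound, but \cref{prop:partial_scc_conv} carries no such hypothesis (that assumption enters only in \cref{prop:full_scc_conv}), and the paper's proof does not use it. As written you therefore prove a strictly weaker statement. The repair is small. Termination does not need novelty-boundedness: each non-terminal iteration strictly enlarges some finite $S^k$, so the process halts regardless. In the singleton case $\tilde C = \{c\}$ with $C \neq \{c\}$, rather than saying ``$c'$ has positive score and is new, so the novelty-bound oracle returns something new,'' argue that every maximiser of the \emph{unrestricted} PBR objective is already new. With $\vpi^{(\tilde C)}$ a Dirac mass at $c$, the PBR score of any $\sigma$ for the deviating player $k$ is $\mathbbm{1}\!\left[\mM^k(\sigma, c^{-k}) > \mM^k(c)\right] \in \{0,1\}$, and the deviation to $c'$ shows the maximum is $1$. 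But any $\sigma \in S^k$ attaining score $1$ would yield a population profile $(\sigma, c^{-k})$ with an edge $c \to (\sigma, c^{-k})$ in the population response graph, contradicting that $\{c\}$ is a \metasscc. Hence the argmax contains only strategies outside $S^k$, so the plain PBR oracle must return a new strategy, contradicting convergence. The single-population case is the same with profiles replaced by strategies.
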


\begin{restatable}[]{proposition}{PropFullSSCCConv}\label{prop:full_scc_conv}
    If we constrain the PBR oracle used in \alphapsro to be novelty-bound, then \alphapsro will $\alpha$-fully converge to at least one SSCC of the game.
\end{restatable}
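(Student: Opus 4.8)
The plan is to prove the statement directly: show that novelty-bound \alphapsro\ stabilizes after finitely many iterations, and that at stabilization its population $S=\prod_k S^k$ already contains a complete SSCC of the underlying game. First I would dispatch termination. By definition of the novelty-bound oracle, whenever PBR fires it returns a strategy \emph{not} already in $S^k$ for the relevant player $k$; hence every iteration failing the convergence criterion strictly enlarges some $S^k$, and since the underlying game has finitely many pure strategies per player this can happen only finitely often. Fix the stabilized population $S$. By definition of convergence, for every player $k$ and every \metasscc\ $\ell$ there is no strategy $\tau\notin S^k$ with $\PBRScore^k(\tau;\vpi^{(\ell)})>0$.

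The next observation I would record is that the stabilized meta-game's response graph is \emph{exactly} the subgraph of the underlying game's response graph induced on the vertex set $S$: edges are determined by the identical payoff comparisons $\mM^k(\cdot)>\mM^k(\cdot)$, merely restricted to profiles in $S$. Consequently every directed path inside a subset $T\subseteq S$ is simultaneously a path in the meta-game graph and in the underlying graph. Since the condensation of any finite digraph has a sink, at least one \metasscc\ $T\subseteq S$ exists; invoking the standard property of \alpharank\ that its Markov chain restricted to a sink strongly-connected component is irreducible, the restricted distribution $\vpi^{(\ell)}$ is supported on all of $T$, i.e.\ $\vpi^{(\ell)}_s>0$ for every $s\in T$.

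The core step is to upgrade such a \metasscc\ $T$ to a genuine SSCC of the underlying game. Strong connectivity of $T$ in the underlying response graph is free from the induced-subgraph remark. For the sink property, suppose $T$ had an outgoing edge in the underlying graph, say from $s\in T$ via a deviation of some player $k$ to a strategy $\tau\neq s^k$ with $\mM^k(\tau,s^{-k})>\mM^k(s^k,s^{-k})$ and $(\tau,s^{-k})\notin T$. Then exactly one of two things holds. If $\tau\notin S^k$, then $\PBRScore^k(\tau;\vpi^{(\ell)})\ge\vpi^{(\ell)}_s>0$, so on \metasscc\ $\ell$ the novelty-bound PBR oracle for player $k$ would return a new strategy, contradicting stabilization. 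If instead $\tau\in S^k$, then $(\tau,s^{-k})\in S$, so the edge $s\to(\tau,s^{-k})$ lies inside the meta-game graph, contradicting that $T$ is a sink component there. Hence $T$ has no outgoing edge in the underlying graph; a strongly connected vertex set with no outgoing edges equals the unique SCC containing it, which is therefore a sink SCC, so $T$ is an SSCC of the underlying game. Since $T\subseteq S$, the stabilized population contains this full SSCC, which is exactly $\alpha$-full convergence to an SSCC of the game. The single-population case is identical, reading the payoff comparison of \cref{eq:abr_singlepop} (i.e.\ $\mM^1(\sigma,s_i)>\mM^2(\sigma,s_i)$) in place of the player-$k$ comparison.

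I expect the only delicate point to be this dichotomy in the sink-property step: one must treat separately the escaping strategy $\tau$ being already present (ruled out because $T$ is a \metasscc) versus absent (ruled out because the novelty-bound oracle would not yet have stabilized), and conflating the two is the natural pitfall. A minor supporting fact that should be stated explicitly is the \alpharank\ full-support property, since it is what licenses $\vpi^{(\ell)}_s>0$ for every profile of the \metasscc; without it one would have to work with the support of $\vpi^{(\ell)}$ and re-derive that it is closed and strongly connected in the meta-game. One could alternatively route through \cref{prop:partial_scc_conv} — once $S$ contains a single SSCC member, partial convergence holds, and novelty-boundedness then forces each still-missing member of that SSCC to be added, as each has positive $\PBRScore$ and is absent — but the direct argument above subsumes this.
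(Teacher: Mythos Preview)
Your proof is correct and follows essentially the same approach as the paper: at convergence, fix a \metasscc\ and use the novelty-bound PBR termination condition to argue it coincides with an SSCC of the underlying game. Your version is in fact more carefully fleshed out than the paper's---you explicitly handle termination, the induced-subgraph observation, the full-support property of $\vpi^{(\ell)}$, and the two-case dichotomy ($\tau\in S^k$ versus $\tau\notin S^k$) that the paper's terse proof leaves implicit.
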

Stronger guarantees exist for two-players symmetric (i.e., single-population) games, though the multi-population case encounters more issues, as follows.
\begin{restatable}[]{proposition}{PropSinglePopConv}\label{prop:single_pop_conv}
    (Single-population) \alphapsro converges $\alpha$-partially to the unique SSCC.
\end{restatable}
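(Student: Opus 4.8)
The plan is to reduce the statement to \cref{prop:partial_scc_conv}: once we know that the population of \alphapsro eventually contains a member of the (unique) SSCC $\mathcal C$ of the underlying game, \cref{prop:partial_scc_conv} immediately yields $\alpha$-partial convergence to $\mathcal C$. Since the underlying game has finitely many pure strategies, \alphapsro necessarily terminates, so it suffices to show that a converged population $S$ cannot be disjoint from $\mathcal C$; then the terminal population meets $\mathcal C$, and \cref{prop:partial_scc_conv} guarantees it actually contains a sub-SSCC of $\mathcal C$, i.e.\ $\alpha$-partial convergence.

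So I would argue by contradiction: suppose \alphapsro has converged with population $S=\{s_1,\dots,s_N\}$ and \alpharank distribution $\vpi$, and that $S\cap\mathcal C=\emptyset$. Here $\vpi$ is supported on the \metassccs of the response graph of the sub-game on $S$, and for a pure strategy $\sigma$ of the underlying game the PBR objective of \cref{eq:abr_singlepop} is $f(\sigma)=\sum_i \vpi_i\,\mathbbm{1}[\mM^1(\sigma,s_i)>\mM^1(s_i,\sigma)]$, i.e.\ the $\vpi$-mass of the strategies in $S$ that $\sigma$ strictly beats; convergence means every element of $\argmax_\sigma f(\sigma)$ already lies in $S$. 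The contradiction comes from two facts. First, $f(s)<1$ for every $s\in S$: fixing a \metasscc $T$, a strategy $s\notin T$ cannot beat all of $T$ (each such win is an edge leaving the closed set $T$ in the sub-game), and a strategy $s\in T$ cannot beat all of $T\setminus\{s\}$ (that would break strong connectivity of $T$), while $s$ never beats itself; summing the missed positive mass over all \metassccs gives $f(s)<\vpi(S)=1$. Second, $f(c)=1$ for every $c\in\mathcal C$: since $\mathcal C$ is a sink strongly-connected component, no strategy outside $\mathcal C$ beats any strategy inside it, and because $\mathcal C$ is the \emph{unique} SSCC, the condensation of the response graph is a linear order with $\mathcal C$ as its sink, so in fact every $c\in\mathcal C$ beats every strategy outside $\mathcal C$; as $S\cap\mathcal C=\emptyset$, $c$ beats all of $S$, hence $f(c)=\vpi(S)=1$. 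Combining the two, any $c\in\mathcal C$ lies in $\argmax_\sigma f(\sigma)$ but not in $S$, so the PBR oracle returns a strategy outside $S$ — contradicting convergence. Hence the converged population meets $\mathcal C$, and \cref{prop:partial_scc_conv} completes the proof.

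The step I expect to be the main obstacle is the second fact — establishing rigorously that every member of $\mathcal C$ beats every non-member. In the generic case the response graph (restricted to the strict-preference relation) is a tournament, its condensation is a transitive tournament, i.e.\ a linear order, and the claim is the classical characterisation of the top cycle; with payoff ties this is more delicate (ties can also jeopardise the hypothesised uniqueness of the SSCC). To handle that I would either invoke a genericity assumption on $\mM$, or weaken the exact identity $f(c)=1$ to the bound $f(c)>\max_{s\in S}f(s)$ and derive it directly from the sink-SCC property together with the first fact. A secondary point to get right is the bookkeeping in the first fact when the sub-game on $S$ has several \metassccs, together with a careful statement of what ``\alphapsro has converged'' means (that \emph{every} maximiser in \cref{eq:abr_singlepop} — not merely some tie-broken choice — lies in $S$), since the whole argument hinges on producing one maximiser outside $S$.
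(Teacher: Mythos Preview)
Your proposal is correct and follows essentially the same route as the paper: argue by contradiction that a converged population disjoint from the SSCC is impossible because any SSCC member attains PBR score $1$ while every population member scores strictly less (the paper uses the simpler bound $f(s_i)\le 1-\vpi_i$, together with the observation that $\vpi_i=0$ forces $f(s_i)=0$ since $s_i$ then lies outside the unique \metasscc and cannot beat anything with positive mass), then invoke \cref{prop:partial_scc_conv}. Your concern about ties in the ``second fact'' is well-placed but not a gap relative to the paper: the paper simply asserts that an SSCC member beats every non-member ``by virtue of being in the SSCC'', which implicitly relies on the response graph being a tournament (no payoff ties)---the same assumption it uses to justify uniqueness of the SSCC via full connectivity of the response graph.
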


\begin{restatable}[]{proposition}{PropMultiPopNoConv}\label{prop:multipop_noconv}
    (Multi-population) Without a novelty-bound oracle, there exist games for which \alphapsro does not converge $\alpha$-partially to any SSCC.
\end{restatable}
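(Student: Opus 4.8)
The plan is to produce an explicit general-sum two-player game, together with an initialization, for which \emph{some} valid run of \alphapsro (using the non-novelty-bound PBR oracle of \cref{eq:abr}) terminates at a population that is disjoint from every SSCC of the game, hence contains no sub-SSCC. This suffices: in a finite game the policy populations grow monotonically and therefore stabilize, and by \cref{prop:partial_scc_conv} the moment the population meets any SSCC the algorithm $\alpha$-partially converges; so it is enough to reach a population $T$ that avoids all SSCCs and is ``PBR-closed'', meaning that for every player $k$ and every \metasscc of the game restricted to $T$, the arg max in \cref{eq:abr} is attained inside $T^k$.

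For the construction I would reuse the $4\times4$ block on $\{A,B,C,D\}$ of \cref{table:game-ex2} unchanged and attach a fifth strategy $X$ that is a ``poison'' strategy: set the $X$-row of $\mM^1$ and the $X$-column of $\mM^2$ to be strongly negative (magnitude $\gg\phi^2$) \emph{except} at the single entries $\mM^1(X,C)$ and $\mM^2(C,X)$, which receive an intermediate value of order $-\phi^2/2$; keep $\mM^1(\cdot,X)=\mM^2(X,\cdot)=-\varepsilon$ and $\mM^1(X,X)=\mM^2(X,X)=0$. Then $(X,X)$ is a sink of the response graph, and because the intermediate entries create edges $(B,C)\to(X,C)$ and $(C,B)\to(C,X)$ leaving the $\{A,B,C,D\}^2$ block, one checks that $\{(X,X)\}$ is the \emph{unique} SSCC of the game and it is disjoint from $\{A,B,C,D\}^2$. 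I would then trace \alphapsro from the initial population $\{C\}$ for each player exactly as in \cref{ex:psro-mpalpha+br}: the \alpharank meta-solver successively puts unit mass on $(C,C)$, then $(D,D)$, then $(A,A)$, and since $X$ loses in each of the relevant columns (e.g.\ $\mM^1(X,C)<0=\mM^1(C,C)$, and $\mM^1(X,\cdot)=-L$ elsewhere), $X$ never lies in the PBR arg max at these steps, so PBR returns $D$, then $A$, then $B$ for both players and the population grows to $S^1=S^2=\{A,B,C,D\}$.

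The crucial step is the iteration at $S^1=S^2=\{A,B,C,D\}$, where the restricted meta-game is the $4\times4$ subgame of \cref{table:game-ex2}. Here both players' best responses cycle $A\to B\to C\to D\to A$, there is no pure Nash (hence no sink), and one checks the $16$ joint profiles are mutually reachable, so there is a single \metasscc over all $16$ profiles with \alpharank distribution $\vpi$, which is symmetric under the response-graph automorphism $(i,j)\mapsto(j,i)$ and, being the stationary distribution of an irreducible chain, assigns positive mass to every profile. For player $1$, $\PBRScore^1(X)=\vpi(\{(i,j):\mM^1(X,j)>\mM^1(i,j)\})$, which by the choice of the $X$-row equals $\vpi_{(B,C)}$; whereas $\PBRScore^1(A)=\vpi(\{(C,A),(D,A),(B,C),(C,C),(B,D),(C,D),(D,D)\})$, which strictly exceeds $\vpi_{(B,C)}$ because the six extra profiles all carry positive mass. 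Hence $X\notin\argmax_\sigma\PBRScore^1(\sigma)$, i.e.\ the arg max lies in $\{A,B,C,D\}$; the symmetry gives the same for player $2$. So PBR returns no new strategy, \alphapsro terminates at $S^1=S^2=\{A,B,C,D\}$, which contains no member of the unique SSCC $\{(X,X)\}$ and hence no sub-SSCC, establishing the proposition. (General-sum is genuinely needed here: if the game were zero-sum, the sink condition at $(X,X)$ would force $X$ to beat every strategy as a row player, making $\PBRScore^1(X)$ too large to be dominated.)

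The main obstacle is that $\vpi$ on the $16$-profile product response graph has no convenient closed form — it solves a moderately sized linear system — so the comparison $\PBRScore^1(X)<\max_{\sigma\in\{A,B,C,D\}}\PBRScore^1(\sigma)$ must be argued structurally, from the facts that $X$'s PBR ``win set'' is the single profile $(B,C)$, that $(B,C)$ also lies in $A$'s win set, and that the \alpharank chain on the \metasscc is irreducible, rather than by computing $\vpi$ explicitly; relatedly, the perturbed payoffs around $X$ and the parameter ranges ($\phi$ large, the poison entries $\gg\phi^2$, $0<\varepsilon\ll1$, the intermediate entries near $-\phi^2/2$) must be tuned so that simultaneously $(X,X)$ stays the unique SSCC, the first three PBR steps behave exactly as traced, and $X$'s PBR win set collapses to the single low-mass profile $(B,C)$.
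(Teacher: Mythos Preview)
Your argument is correct and constitutes a genuinely different construction from the paper's.  The paper exhibits a three-player, three-strategy game whose unique SSCC is the singleton $\{(3,2,3)\}$, and traces \alphapsro from the initialization $[[2],[1],[1]]$ through five explicit steps to show it terminates before ever discovering strategy~$3$ for player~$1$; the obstruction there is coordination --- reaching the SSCC requires two players to change strategies in tandem, which the single-deviation structure of PBR cannot achieve (this is the ``action/equilibrium shadowing'' phenomenon mentioned after the proof).  Your route instead works in two players: you modify the $X$-row and $X$-column of \cref{table:game-ex2} to break zero-sum, engineer $\{(X,X)\}$ as the unique SSCC, and then argue that on the $4\times 4$ block $\{A,B,C,D\}^2$ the multi-population response graph is a single strongly-connected component, so that $X$'s PBR win-set collapses to $\{(B,C)\}$ while $A$'s win-set strictly contains it --- hence $X$ is never in the arg max and the algorithm terminates inside $\{A,B,C,D\}$.

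What each approach buys: the paper's three-player example has the virtue that every PBR call has a \emph{unique} maximizer, so the trace is deterministic and the conclusion holds for \emph{every} run of \alphapsro, not just some valid run; your two-player example shows the phenomenon already arises with $K=2$, which is arguably a sharper statement, and it recycles the machinery of \cref{table:game-ex2} rather than building a fresh game.  Your ``main obstacle'' paragraph is overly cautious: you do not need to compute $\vpi$ on the $16$-profile chain, because the inclusion of win-sets (your $(B,C) \in$ both, six extra positive-mass profiles only in $A$'s) together with irreducibility of the meta-SSCC chain already gives the strict inequality $\PBRScore^1(X) < \PBRScore^1(A)$ directly.  The one point that does require care, and which you gloss as ``one checks'', is that the $16$ joint profiles of the $4\times 4$ block really do form a single strongly-connected component of the multi-population response graph; this is true (one can verify it by following best-response edges and a handful of non-best-response improving deviations), but it is the place where a reader might ask for detail.
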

Intuitively, the lack of convergence without a novelty-bound oracle can occur due to intransitivities in the game (i.e., cycles in the game can otherwise trap the oracle). 
An example demonstrating this issue is shown in \cref{fig:snowflake}, with an accompanying step-by-step walkthrough in \cref{proof:prop:multipop_noconv}.
Specifically, SSCCs may be hidden by “intermediate” strategies that, while not receiving as high a payoff as current population-pool members, can actually lead to well-performing strategies outside the population. As these “intermediate” strategies are avoided, SSCCs are consequently not found. Note also that this is related to the common problem of action/equilibrium shadowing, as detailed in \citet{matignon_2012}.

In \cref{sec:experiments}, we further investigate convergence behavior beyond the conditions studied above.
In practice, we demonstrate that despite the negative result of \cref{prop:multipop_noconv}, \alphapsro does significantly increase the probability of converging to an SSCC, in contrast to \psro{Nash, BR}.
Overall, we have shown that for general-sum multi-player games, it is possible to give theoretical guarantees for a version of PSRO driven by \alpharank in several circumstances.
By contrast, using exact NE in PSRO is intractable in general.
In prior work, this motivated the use of approximate Nash solvers generally based on the simulation of dynamical systems or regret minimization algorithms, both of which generally require specification of several hyperparameters (e.g., simulation iterations, window sizes for computing time-average policies, and entropy-injection rates), and a greater computational burden than \alpharank to carry out the simulation in the first place. 

\tightparagraph{Implementing the PBR Oracle}\label{sec:implementingpbr}
Recall from \cref{sec:algorithm_psro} that the BR oracle inherently solves a single-player optimization problem, permitting use of a single-agent RL algorithm as a BR approximator, which is useful in practice. 
As noted in \cref{sec:convergence_to_alpharank}, however, there exist games where the BR and PBR objectives are seemingly incompatible, preventing the use of standard RL agents for PBR approximation.
While exact PBR is computable in small-scale (e.g., normal-form) games, we next consider more general games classes where PBR can also be approximated using standard RL agents. 
\begin{definition}
    Objective $\mathcal{A}$ is `compatible' with objective $\mathcal{B}$ if any solution to $\mathcal{A}$ is a solution to $\mathcal{B}$. 
\end{definition}

\begin{restatable}[]{proposition}{ValuePBRWinLoss}\label{theorem:value_pbr_winloss}
    A constant-sum game is denoted as \textbf{win-loss} if $\mM^k(s) \in \{0,1\}$ for all $k \in [K]$ and $s \in S$.
    BR is compatible with PBR in win-loss games in the two-player single-population case.
\end{restatable}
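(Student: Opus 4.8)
The plan is to show that, under the win-loss and constant-sum hypotheses, the single-population PBR objective in \cref{eq:abr_singlepop} collapses to \emph{exactly} the single-population BR objective; the two $\argmax$ sets are then literally equal, so in particular every BR solution is a PBR solution, which is the asserted compatibility.

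First I would write out both objectives over a current population $\{s_1,\ldots,s_N\}$ with meta-solver weights $(\vpi_i)_{i=1}^N$. The single-population BR oracle maximizes $\sigma \mapsto \mM^1\big(\sigma, \sum_i \vpi_i s_i\big) = \sum_i \vpi_i \mM^1(\sigma, s_i)$ over pure strategies $\sigma$ of the underlying game, while PBR maximizes $\sigma \mapsto \sum_i \vpi_i \mathbbm{1}\left\lbrack \mM^1(\sigma, s_i) > \mM^2(\sigma, s_i)\right\rbrack$. Next I would invoke the constant-sum property: letting $c$ denote the (constant) value $\sum_k \mM^k(\cdot)$, we have $\mM^2(\sigma, s_i) = c - \mM^1(\sigma, s_i)$, so the indicator rewrites as $\mathbbm{1}\left\lbrack 2\mM^1(\sigma, s_i) > c \right\rbrack = \mathbbm{1}\left\lbrack \mM^1(\sigma, s_i) > c/2 \right\rbrack$. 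The win-loss assumption forces every $\mM^k(\cdot) \in \{0,1\}$ and hence $c \in \{0,1,2\}$; the cases $c \in \{0,2\}$ are degenerate (all payoffs equal, the PBR objective is identically zero, and every strategy — in particular every BR — trivially maximizes it), so the substantive case is $c=1$. There, since $\mM^1(\sigma, s_i) \in \{0,1\}$, we get $\mathbbm{1}\left\lbrack \mM^1(\sigma, s_i) > \tfrac12 \right\rbrack = \mM^1(\sigma, s_i)$, and therefore $\sum_i \vpi_i \mathbbm{1}\left\lbrack \mM^1(\sigma, s_i) > \mM^2(\sigma, s_i) \right\rbrack = \sum_i \vpi_i \mM^1(\sigma, s_i)$ for every $\sigma$: the PBR and BR objectives are the same function of $\sigma$.

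Finally, since the two objectives coincide pointwise, their maximizing sets coincide, so $\text{BR}\big(\sum_i \vpi_i s_i\big) = \text{PBR}\big(\sum_i \vpi_i s_i\big)$; in particular any BR solution solves PBR, establishing compatibility (here the two oracles are in fact identical). I do not anticipate a genuine obstacle: the only step needing care is the reduction of the step function $\mathbbm{1}\left\lbrack \,\cdot\, > c/2 \right\rbrack$ to the identity, which uses precisely the two ingredients of the hypothesis — constant-sum pins the comparison threshold at $c/2$, and win-loss makes the payoff $\{0,1\}$-valued so that the threshold $1/2$ is never straddled and no ties occur — together with dispatching the degenerate values $c \in \{0,2\}$ separately.
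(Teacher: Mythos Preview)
Your proof is correct and follows essentially the same route as the paper: both arguments show that the BR and PBR objectives coincide pointwise by using constant-sum plus win-loss to collapse $\mathbbm{1}[\mM^1(\sigma,s_i) > \mM^2(\sigma,s_i)]$ to $\mM^1(\sigma,s_i)$. The paper's version is terser (it simply asserts the key equality $\mM^1(\nu,s) = \mathbbm{1}[\mM^1(\nu,s) > \mM^2(\nu,s)]$ without separately dispatching the degenerate constants $c\in\{0,2\}$), but the substance is identical.
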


\begin{restatable}[]{proposition}{ValuePBRMonotonic}\label{prop:value_pbr_monotonic}
    A symmetric two-player game is denoted \textbf{monotonic} if there exists a function $f : S \rightarrow \mathbbm{R}$ and a non-decreasing function $\sigma : \mathbbm{R} \rightarrow \mathbbm{R}$ such that $\mM^1(s, \nu) = \sigma(f(s) - f(\nu))$.
    BR is compatible with PBR in monotonic games in the single-population case.
\end{restatable}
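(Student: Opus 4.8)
The plan is to show that in a monotonic symmetric two-player game the BR objective and the PBR objective are maximized by the very same set of strategies --- those attaining the largest value of the potential $f$ --- so that every best response is automatically a PBR and the two are compatible in the sense of the definition above. Unfolding the PBR objective: in the single-population case, against a population mixture $\sum_i \vpi_i s_i$, a PBR maximizes $\sum_i \vpi_i \mathbbm{1}\lbrack \mM^1(\sigma, s_i) > \mM^2(\sigma, s_i) \rbrack$ over pure strategies $\sigma$; symmetry gives $\mM^2(\sigma, s_i) = \mM^1(s_i, \sigma)$, and monotonicity (writing $\psi$ for the non-decreasing transform denoted $\sigma$ in the statement, to avoid a clash with the candidate strategy) gives $\mM^1(\sigma, s_i) = \psi(f(\sigma) - f(s_i))$ and $\mM^1(s_i, \sigma) = \psi(f(s_i) - f(\sigma))$. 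Hence the PBR objective equals $\sum_i \vpi_i \mathbbm{1}\lbrack \psi(f(\sigma) - f(s_i)) > \psi(f(s_i) - f(\sigma)) \rbrack$, while the BR objective is $\mM^1\big(\sigma, \sum_i \vpi_i s_i\big) = \sum_i \vpi_i \psi(f(\sigma) - f(s_i))$.

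Next I would observe that both objectives depend on $\sigma$ only through $f(\sigma)$ and are non-decreasing in it. The BR objective is a convex combination of the non-decreasing maps $t \mapsto \psi(t - f(s_i))$, hence non-decreasing in $f(\sigma)$. For the PBR objective, non-decreasingness of $\psi$ forces $\psi(a) > \psi(-a)$ to imply $a > 0$, so each indicator is dominated by $\mathbbm{1}\lbrack f(\sigma) > f(s_i) \rbrack$, again a non-decreasing step function of $f(\sigma)$; when $\psi$ is strictly increasing this domination is an equality, and a strategy with $f(\sigma) = \max_{s} f(s)$ simultaneously realizes the maximum of each objective. One then concludes that the $\argmax$ of the BR objective and the $\argmax$ of the PBR objective both equal $\{\sigma : f(\sigma) = \max_s f(s)\}$, which gives the claim. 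This mirrors \cref{theorem:value_pbr_winloss}, where the same phenomenon appears in the sharper form that the two objectives coincide pointwise.

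The step I expect to be the crux is the equality --- not just overlap --- of those two $\argmax$ sets. With a merely non-decreasing $\psi$, a best response can sit at a sub-maximal level of $f$ on which $\psi$ is locally flat, whereas PBR still strictly prefers the top of the $f$-ladder; so the strict inequalities in the PBR indicator cannot be traded away for free, and one genuinely wants $\psi$ strictly increasing on the range of the differences $f(\cdot) - f(\cdot)$ (or a generic-position argument) to pin both $\argmax$ sets to the maximal-$f$ set. I would also check the boundary case $f(\sigma) = f(s_i)$, where the PBR indicator is $0$ and causes no difficulty. Modulo that subtlety, the rest is routine manipulation.
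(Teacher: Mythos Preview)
Your reduction is the right one and is cleaner than the paper's argument. The paper fixes an arbitrary BR maximizer $s_v$, argues that $s_v$ must maximize each term $\sigma(f(\cdot)-f(s_k))$ individually, and then runs a three-way case split on whether the top population member $s_K$ can be strictly beaten by some strategy, arguing in each case that $s_v$ attains maximal PBR score. You instead observe directly that both objectives are non-decreasing functions of $f(\sigma)$ and identify both $\argmax$ sets with the maximal-$f$ level set. Both routes reach the same conclusion and, as you suspected, both require the same extra hypothesis to go through.

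The gap you flag is real, and it is not resolved by the paper's case analysis either. With $\psi$ merely non-decreasing, compatibility can fail outright: take strategies $s_1,s_2$ with $f(s_1)=0$, $f(s_2)=1$, transform $\psi(x)=\mathbbm{1}[x\ge 0]$, and population $\{s_1\}$. Then $\mM^1(s_1,s_1)=\mM^1(s_2,s_1)=1$, so both $s_1$ and $s_2$ solve the BR objective; but the PBR indicator is $\mathbbm{1}[\psi(0)>\psi(0)]=0$ for $s_1$ versus $\mathbbm{1}[\psi(1)>\psi(-1)]=1$ for $s_2$, so $s_1$ is a BR solution that is \emph{not} a PBR solution. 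In the paper's proof this is precisely where the Case-1 step ``$\sigma(f(s_v)-f(s_K))>\sigma(f(s_K)-f(s_v))$'' breaks: it tacitly needs $f(s_v)\ge f(s)$ for the witnessing $s$, which does not follow from $s_v$ maximizing $\sigma(f(\cdot)-f(s_K))$ when $\sigma$ has flat stretches. Your proposed fix---assume $\psi$ strictly increasing on the range of the differences $f(\cdot)-f(\cdot)$---is exactly what is needed (and indeed the paper's own illustrative example in the appendix assumes a strictly increasing $\sigma$); under that hypothesis your argument is complete, since then BR $\argmax$ $=$ PBR $\argmax$ $=$ $\{\sigma:f(\sigma)=\max_s f(s)\}$.
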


Finally, we next demonstrate that under certain conditions, there are strong connections between the PBR objective defined above  and the broader field of preference-based RL \citep{wirth2017survey}.
\begin{restatable}[]{proposition}{PBRAndPrefBasedRL}\label{prop:pbr_prefbased_rl}
    Consider symmetric win-loss games where outcomes between deterministic strategies are deterministic.
    A preference-based RL agent (i.e., an agent aiming to maximize its probability of winning against a distribution $\vpi$ of strategies $\{s_1,\ldots,s_N\}$) optimizes exactly the PBR objective \cref{eq:abr_singlepop}.
\end{restatable}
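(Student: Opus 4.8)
The plan is to show that, under the three hypotheses (two-player symmetric, constant-sum win-loss, deterministic outcomes between deterministic strategies), the objective a preference-based RL agent maximizes is \emph{identical, as a function of the candidate pure strategy} $\sigma$, to the objective inside the $\argmax$ of \cref{eq:abr_singlepop}; the equality of the $\argmax$ sets then follows, together with the trivial remark that a linear objective over a simplex is maximized at a vertex. First I would unpack the game structure: for any pure profile $(\sigma, s_i)$ we have $\mM^1(\sigma,s_i),\mM^2(\sigma,s_i)\in\{0,1\}$ and, by constant-sum (with constant $1$, the only non-degenerate case and the one meant by ``win-loss''), $\mM^1(\sigma,s_i)+\mM^2(\sigma,s_i)=1$. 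Hence $\mathbbm{1}\!\left[\mM^1(\sigma,s_i)>\mM^2(\sigma,s_i)\right]=\mathbbm{1}\!\left[\mM^1(\sigma,s_i)=1\right]=\mM^1(\sigma,s_i)$, so the right-hand side of \cref{eq:abr_singlepop} collapses to $\argmax_{\sigma}\sum_i \vpi_i\,\mM^1(\sigma,s_i)$.

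Next I would interpret the preference-based RL agent's objective. It plays a deterministic policy $\sigma$ against an opponent sampled as $s_i$ with probability $\vpi_i$; since outcomes between deterministic strategies are deterministic, each match has a deterministic winner, and by the symmetric win-loss structure the agent wins against $s_i$ exactly when its payoff is $1$, i.e.\ when $\mM^1(\sigma,s_i)=1$. Its win probability against the distribution is therefore $\sum_i \vpi_i\,\mathbbm{1}\!\left[\mM^1(\sigma,s_i)=1\right]=\sum_i \vpi_i\,\mM^1(\sigma,s_i)$, which is exactly the reduced PBR objective obtained above. Finally, this objective is affine in $\sigma$ viewed as a mixed strategy, so its maximum over all policies is attained at a pure strategy and the set of pure maximizers equals $\text{PBR}\!\big(\sum_i \vpi_i s_i\big)$; hence an optimal preference-based RL agent returning a deterministic policy returns an element of the PBR set, i.e.\ it optimizes exactly \cref{eq:abr_singlepop}.

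The step that needs the most care is the reduction of ``win probability'' to $\mM^1(\sigma,s_i)$: it uses all three hypotheses simultaneously --- determinism of outcomes makes $\mM^1(\sigma,s_i)$ a $\{0,1\}$-valued indicator of ``$\sigma$ beats $s_i$'' rather than a fractional expected payoff, the win-loss/constant-sum property makes that indicator equal to the strict inequality $\mM^1>\mM^2$ that appears in the PBR definition, and symmetry makes the single-population framing of \cref{eq:abr_singlepop} and the notion of ``winning'' seat-independent. In particular, I would flag explicitly that the population members $s_1,\dots,s_N$ are taken to be deterministic strategies (the regime the hypothesis ``outcomes between deterministic strategies are deterministic'' is pointing at): if some $s_i$ were genuinely stochastic, $\mM^1(\sigma,s_i)$ could be fractional and $\mathbbm{1}[\mM^1(\sigma,s_i)>\mM^2(\sigma,s_i)]$ would no longer coincide with $\mM^1(\sigma,s_i)$, breaking the equivalence. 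Everything past this reduction is bookkeeping.
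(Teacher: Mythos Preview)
Your proposal is correct and follows essentially the same approach as the paper's proof: both unfold the preference-based RL objective $\mathbb{P}(\sigma \text{ beats } \sum_i \vpi_i s_i)$ via the mixture and use the deterministic win-loss structure to identify ``$\sigma$ beats $s_i$'' with the indicator $\mathbbm{1}[\mM^1(\sigma,s_i)>\mM^2(\sigma,s_i)]$ appearing in \cref{eq:abr_singlepop}. You are slightly more explicit than the paper in routing both sides through the common form $\sum_i \vpi_i \mM^1(\sigma,s_i)$ and in flagging that the population members $s_i$ must be deterministic for the reduction to go through, which the paper leaves implicit.
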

Given this insight, we believe an important subject of future work will involve the use of preference-based RL algorithms in implementing the PBR oracle for more general classes of games.
We conclude this section with some indicative results of the  relationship between \alpharank and NE.

\begin{restatable}[]{proposition}{AlphaRankNashOffDiag}\label{prop:alpharank_nash_off_diag}
    For symmetric two-player zero-sum games where off-diagonal payoffs have equal magnitude, all NE have support contained within that of the single-population \alpharank distribution.
\end{restatable}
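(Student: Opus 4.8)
The plan is to reduce the statement to a two-line computation built on the skew-symmetry of the payoff matrix. Write $A_{s\sigma}=\mM^1(s,\sigma)$; a symmetric zero-sum game has $A$ skew-symmetric with zero diagonal, and the hypothesis says every off-diagonal entry is $+c$ or $-c$ for a fixed $c$. If $c=0$ the game is degenerate (the response graph has no edges, every strategy is its own SSCC, and both the \alpharank support and the support of any mixed strategy are contained in $S$), so assume $c>0$; since rescaling payoffs by $1/c$ alters neither the set of NE nor the response graph, take $c=1$, so $A$ is a tournament matrix (no ties off the diagonal).

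Step 1: identify the two supports. I would first recall that a symmetric zero-sum game has value $0$ and that, by the minimax theorem together with the fact that symmetry makes the two players' sets of optimal strategies coincide, every NE (symmetric or not) has both marginals lying in $X^\ast=\{x\in\Delta_S : Ax\le 0 \text{ componentwise}\}$; hence it suffices to prove $\mathrm{supp}(x)\subseteq B$ for every $x\in X^\ast$, where $B$ is the \alpharank support. On the \alpharank side, because there are no ties the response graph is a tournament, so its condensation is a total order on the strongly-connected components; thus there is a unique SSCC $B$, and by construction of \alpharank (the chain restricted to an SSCC is irreducible) the single-population \alpharank distribution has support exactly $B$. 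The one combinatorial fact I need is that $B$ being the sink SCC of a tournament forces every strategy in $B$ to beat every strategy outside it — i.e. $A_{j\sigma}=1$ whenever $j\in B$, $\sigma\notin B$ — since the single edge between such $j,\sigma$ cannot point out of $B$.

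Step 2: the contradiction. Suppose $x\in X^\ast$ with $q:=\sum_{\sigma\notin B}x_\sigma>0$, and let $p=1-q=\sum_{j\in B}x_j$. For $j\in B$, splitting the sum and using Step 1 gives $(Ax)_j=\sum_{\sigma\in B}A_{j\sigma}x_\sigma+q$. Multiplying by $x_j$ and summing over $j\in B$, the term $\sum_{j,\sigma\in B}A_{j\sigma}x_jx_\sigma$ vanishes by skew-symmetry of $A$ restricted to $B$, leaving $\sum_{j\in B}x_j(Ax)_j = pq$. But $Ax\le 0$ and $x\ge 0$ make the left-hand side $\le 0$, so $pq\le 0$: if $p>0$ this is an immediate contradiction, and if $p=0$ then $q=1$ and $(Ax)_j=q=1>0$ for any $j\in B$ ($B\neq\emptyset$), again contradicting $Ax\le 0$. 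Hence $q=0$, i.e. $\mathrm{supp}(x)\subseteq B$, which is the claim.

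I expect the main obstacle to be purely the bookkeeping of Step 1 rather than any real difficulty: carefully matching the paper's (single-population) response-graph definition to the assertion "$j\in B$ beats every $\sigma\notin B$" (edge directions, and uniqueness of the SSCC under the no-ties hypothesis), and confirming that the \alpharank distribution puts mass on all of $B$ and not merely a subset. Once $B$ is correctly pinned down as the top cycle, Step 2 is essentially free, and its entire force comes from skew-symmetry plus the equal-magnitude hypothesis — the latter being exactly what collapses $\sum_{\sigma\notin B}A_{j\sigma}x_\sigma$ to the clean term $q$.
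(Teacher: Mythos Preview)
Your proof is correct and takes a genuinely different route from the paper's. The paper argues by \emph{domination}: it fixes $s$ in the SSCC $B$ and $z\notin B$, and shows by a short case analysis over the column $a$ (according to whether $a\in\{s,z\}$, $a\in B\setminus\{s,z\}$, or $a\notin B\cup\{z\}$) that $\mM^1(s,a)\ge \mM^1(z,a)$ for every $a$, with strict inequality at $a\in\{s,z\}$; the equal-magnitude hypothesis is what makes each case go through. From this weak domination it concludes that no symmetric NE can put mass on $z$.

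Your argument instead passes through the optimality set $X^\ast=\{x:Ax\le 0\}$ and exploits skew-symmetry directly: the identity $\sum_{j,\sigma\in B}A_{j\sigma}x_jx_\sigma=0$ plus the top-cycle property of $B$ collapse $\sum_{j\in B}x_j(Ax)_j$ to $pq$, forcing $q=0$. This is more algebraic and avoids the column-by-column case split; it also makes transparent that the only place the equal-magnitude hypothesis enters is in turning $\sum_{\sigma\notin B}A_{j\sigma}x_\sigma$ into $q$. The paper's route, on the other hand, yields the stronger intermediate fact that every $s\in B$ weakly dominates every $z\notin B$, which may be of independent interest. Both proofs rest on the same combinatorial input (your Step~1): uniqueness of the SSCC in a tournament and the fact that sink-SCC membership means beating every outsider.
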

\begin{restatable}[]{proposition}{AlphaRankNashTwoPlayerZs}\label{prop:alpharank_nash_2pZS}
    In a symmetric two-player zero-sum game, there exists an NE with support contained within that of the \alpharank distribution.
\end{restatable}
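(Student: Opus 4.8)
The plan is to construct a \emph{symmetric} Nash equilibrium supported entirely within the union of the SSCCs of the response graph, which is exactly the support of the single-population \alpharank distribution (recall that the \alpharank masses sit on the SSCCs, e.g.\ the unit mass on $X$ in \cref{ex:psro-alpha+br}). First I would record the standard structure of a symmetric two-player zero-sum game: writing $A_{ij} = \mM^1(s_i,s_j)$ for pure strategies $s_i,s_j$, symmetry together with the zero-sum condition forces $A_{ij} = -A_{ji}$, so $A$ is skew-symmetric with zero diagonal, the value of the game is $0$, and a mixed strategy $\vpi \in \Delta_S$ yields a symmetric NE $(\vpi,\vpi)$ precisely when $(A\vpi)_i \le 0$ for every pure strategy $s_i$ (i.e.\ $\vpi$ is a maximin strategy). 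I would also invoke the classical fact that every finite symmetric two-player zero-sum game --- and hence every subgame obtained by discarding strategies --- admits such a symmetric equilibrium.

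Let $T$ denote the union of the SSCCs of the response graph. Restricting the game to the strategies in $T$ yields a symmetric two-player zero-sum game, which therefore has a symmetric NE; let $\vpi$ be the corresponding mixed strategy, extended by zeros to all of $S$, so that $\mathrm{supp}(\vpi)\subseteq T$ and $(A\vpi)_i \le 0$ for all $s_i \in T$. It then remains to show that $(\vpi,\vpi)$ is a NE of the \emph{full} game, i.e.\ that $(A\vpi)_i = \sum_{s_j \in T} A_{ij}\vpi_j \le 0$ also for each $s_i \notin T$.

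This reduces to one observation about sink SCCs: if $s_i \notin T$ and $s_j \in T$, then $A_{ij}\le 0$. Indeed, let $C$ be the SSCC containing $s_j$; if $A_{ij} = \mM^1(s_i,s_j) > 0 = \mM^1(s_j,s_j)$, then by definition the response graph contains a directed edge from $s_j$ to $s_i$ (the single-player move from $s_j$ to $s_i$ strictly improves the payoff), and since $C$ is a \emph{sink} SCC all out-neighbours of $s_j$ lie in $C$, forcing $s_i \in C \subseteq T$ --- a contradiction. Hence every term of $(A\vpi)_i$ is nonpositive for $s_i \notin T$, so $(A\vpi)_i \le 0$; combined with the inequality already known for $s_i \in T$, we get $(A\vpi)_i \le 0$ for all $s_i \in S$, so $(\vpi,\vpi)$ is a NE with $\mathrm{supp}(\vpi)\subseteq T$, as required.

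The argument is short and essentially all of its content is the sink-SCC observation; everything else is bookkeeping about symmetric zero-sum games, plus the minimax-theorem fact that each subgame has a symmetric equilibrium. The point that needs care is the precise dictionary between response-graph edges and strict payoff improvements --- in particular that a payoff tie ($\mM^1(s_i,s_j)=0$) produces \emph{no} edge, so the absence of an edge from $s_j$ to $s_i$ corresponds to $A_{ij}\le 0$ rather than a strict inequality, which is exactly the form the equilibrium condition needs. If one prefers to make no assumption about how \alpharank distributes mass among several SSCCs, the identical computation applied to the subgame on a single SSCC $C$ produces a NE supported inside $C$.
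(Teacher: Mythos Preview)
Your argument is correct and follows essentially the same route as the paper: restrict the game to the sink strongly-connected component, take a (symmetric) NE of the restricted game, extend by zeros, and verify that no outside strategy can profitably deviate because the sink property forces $A_{ij}\le 0$ for $s_i\notin T$, $s_j\in T$. Your write-up is in fact slightly more careful than the paper's: you make the skew-symmetric structure explicit, and you correctly observe that ties ($A_{ij}=0$) yield no response-graph edge and hence only the weak inequality $A_{ij}\le 0$ --- which is all the equilibrium condition needs --- whereas the paper asserts the outside payoff is strictly negative, implicitly assuming no payoff ties.
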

For more general games, the link between \alpharank and Nash equilibria will likely require a more complex description.
We leave this for future work, providing additional discussion in \cref{appendix:alpharank_nash_counterexamples}.

\section{Evaluation}\label{sec:experiments}

We conduct evaluations on games of increasing complexity, extending beyond prior PSRO applications that have focused on two-player zero-sum games.
For experimental procedures, see \cref{appendix:experimental_procedures}.

\begin{figure}[t]
    \newcommand{\figHeight}{6.4\baselineskip}
    \newcommand{\figLegendHeight}{1.8\baselineskip}
    \newcommand{\negSpace}{-18pt}
    \centering
    \centering
    \subcaptionbox*{}[0.32\textwidth]{ \includegraphics[height=\figLegendHeight]{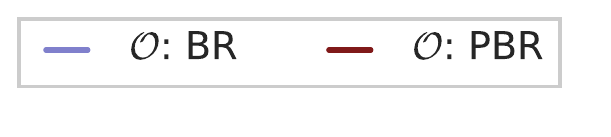}
    }%
    \subcaptionbox*{}[0.64\textwidth]{ \includegraphics[height=\figLegendHeight]{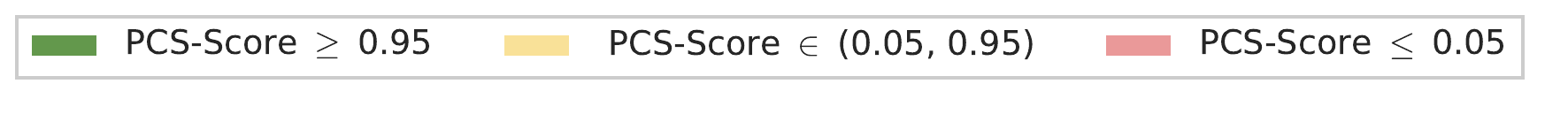}
    }%
    \\[-25pt]
    \begin{subfigure}{0.32\textwidth}
        \includegraphics[width=\textwidth]{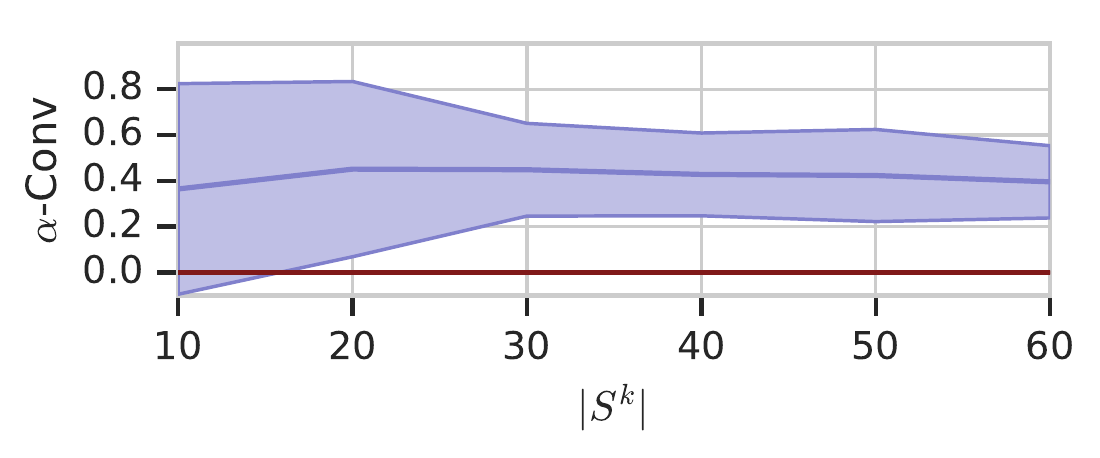}
        \includegraphics[width=\textwidth]{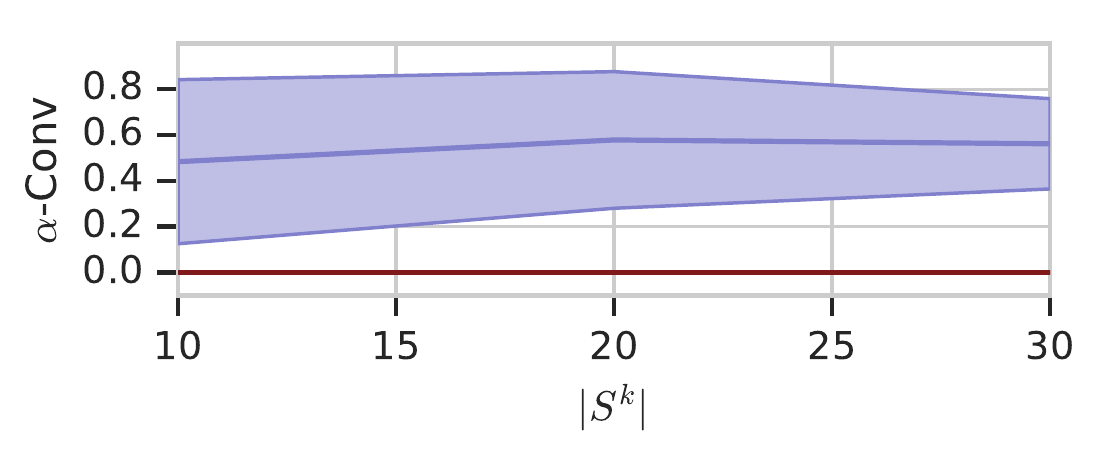}
        \vspace{\negSpace}
        \caption{$\alphaConv$}
        \label{fig:alphaconv_comparisons}
    \end{subfigure}%
    \rulesep%
    \begin{subfigure}{0.32\textwidth}
        \includegraphics[width=\textwidth]{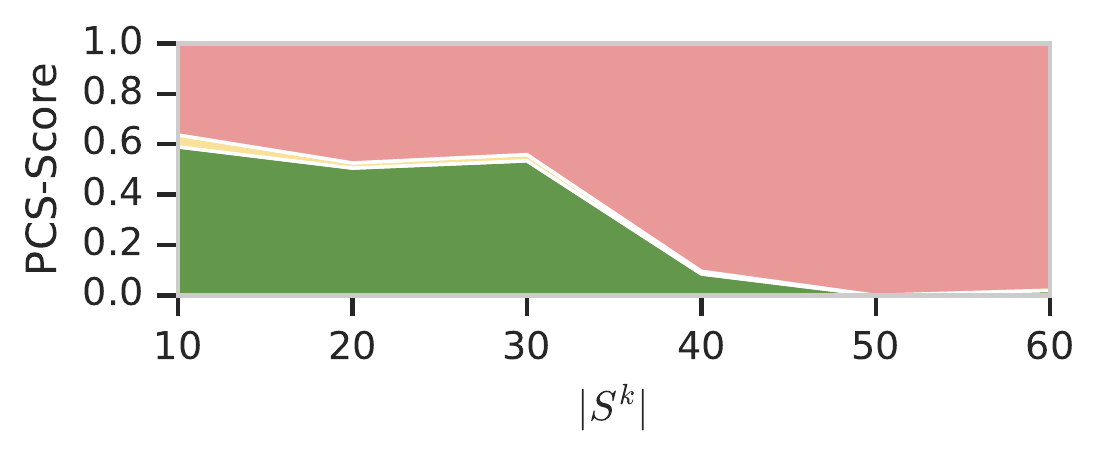}
        \includegraphics[width=\textwidth]{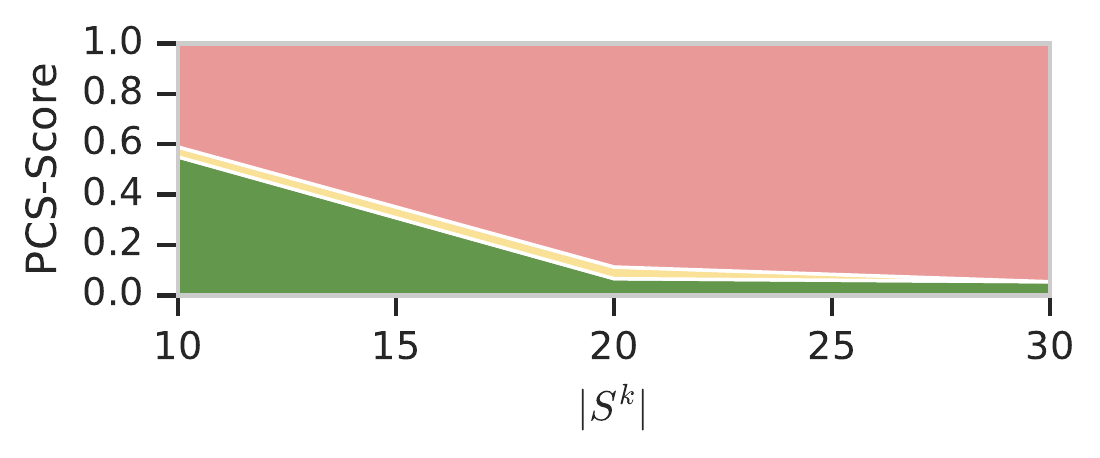}
        \vspace{\negSpace}
        \caption{\PCSScore for $\oracle=$ BR}
        \label{fig:pcs_br}
    \end{subfigure}%
    \rulesep%
    \begin{subfigure}{0.32\textwidth}
        \includegraphics[width=\textwidth]{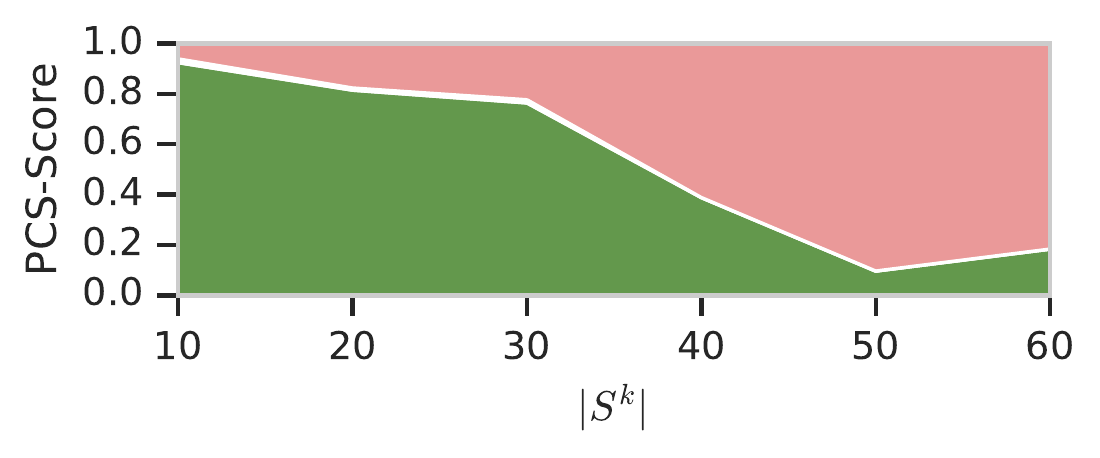}
        \includegraphics[width=\textwidth]{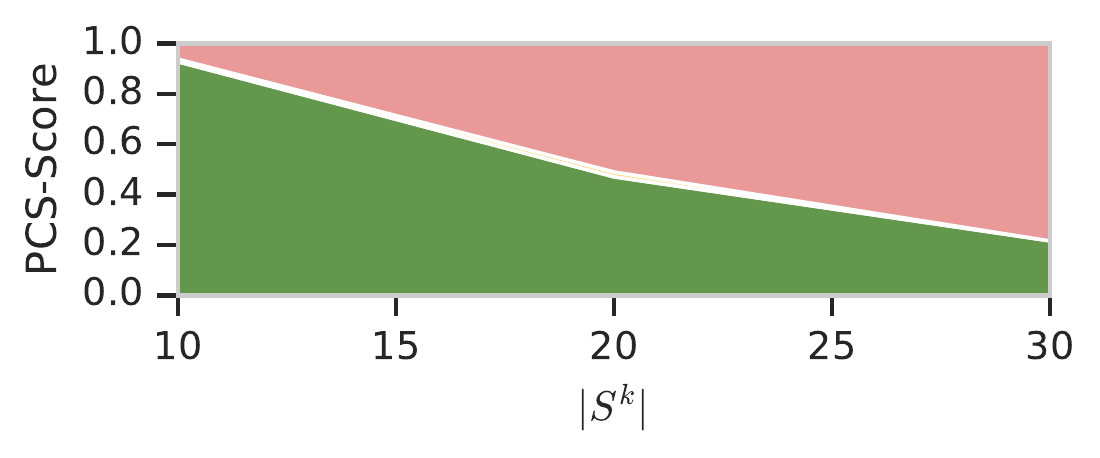}
        \vspace{\negSpace}
        \caption{\PCSScore for $\oracle=$ PBR}
        \label{fig:pcs_pbr}
    \end{subfigure}
    \vspace{-5pt}
    \caption{Oracle comparisons for randomly-generated games with varying player strategy space sizes $|S^k|$. Top and bottom rows, respectively, correspond to 4- and 5-player games.
    }
    \vspace{-5pt}
    \label{fig:oracle_comparison}
\end{figure}

\tightparagraph{Oracle comparisons}
We evaluate here the performance of the BR and PBR oracles in games where PBR can be exactly computed. 
We consider randomly generated, $K$-player, general-sum games with increasing strategy space sizes, $|S^k|$.
\Cref{fig:oracle_comparison} reports these results for the 4- and 5-player instances (see  \cref{appendix:more_oracle_comparisons} for 2-3 player results).
The asymmetric nature of these games, in combination with the number of players and strategies involved, makes them inherently, and perhaps surprisingly, large in scale. 
For example, the largest considered game in \cref{fig:oracle_comparison} involves 5 players with 30 strategies each, making for a total of more than 24 million strategy profiles in total.
For each combination of $K$ and $|S^k|$, we generate $1e6$ random games.
We conduct 10 trials per game, in each trial running the BR and PBR oracles starting from a random strategy in the corresponding response graph, then iteratively expanding the population space until convergence.
Importantly, this implies that the starting strategy may not even be in an SSCC.

As mentioned in \cref{sec:a_new_response_oracle}, $\alphaConv$ and \PCSScore jointly characterize the oracle behaviors in these multi-population settings. 
\Cref{fig:alphaconv_comparisons} plots $\alphaConv$ for both oracles, demonstrating that PBR outperforms BR in the sense that it captures more of the game SSCCs. 
\Cref{fig:pcs_pbr,fig:pcs_br}, respectively, plot the \PCSScore for BR and PBR over all game instances.
The \PCSScore here is typically either (a) greater than 95\%, or (b) less than 5\%, and otherwise rarely between 5\% to 95\%.
For all values of $|S^k|$, PBR consistently discovers a larger proportion of the \alpharank support in contrast to BR, serving as useful validation of the theoretical results of \cref{sec:alphapsro_theory}.

\begin{figure}[t]
    \newcommand{\figHeight}{6.5\baselineskip}
    \newcommand{\figLegendHeight}{2\baselineskip}
    \newcommand{\negSpace}{-18pt}
    \centering
    \subcaptionbox{Kuhn poker.\label{fig:kuhn_poker_2players_exploitabilities_allsolvers} 
        }[0.4\textwidth]{ \includegraphics[height=\figHeight]{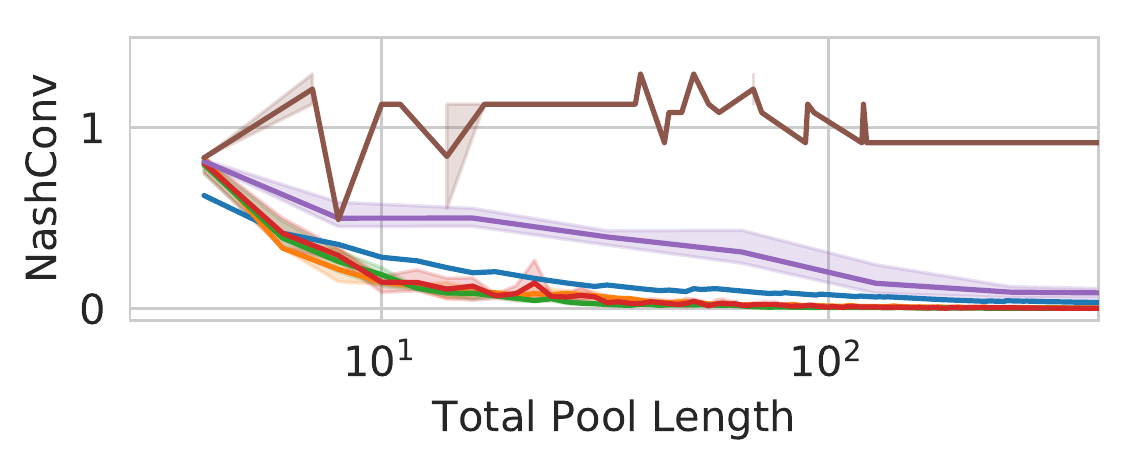}
        \vspace{\negSpace}
    }%
    \hfill
    \subcaptionbox{Leduc poker.\label{fig:leduc_poker_2players_exploitabilities_allsolvers} 
        }[0.4\textwidth]{ \includegraphics[height=\figHeight]{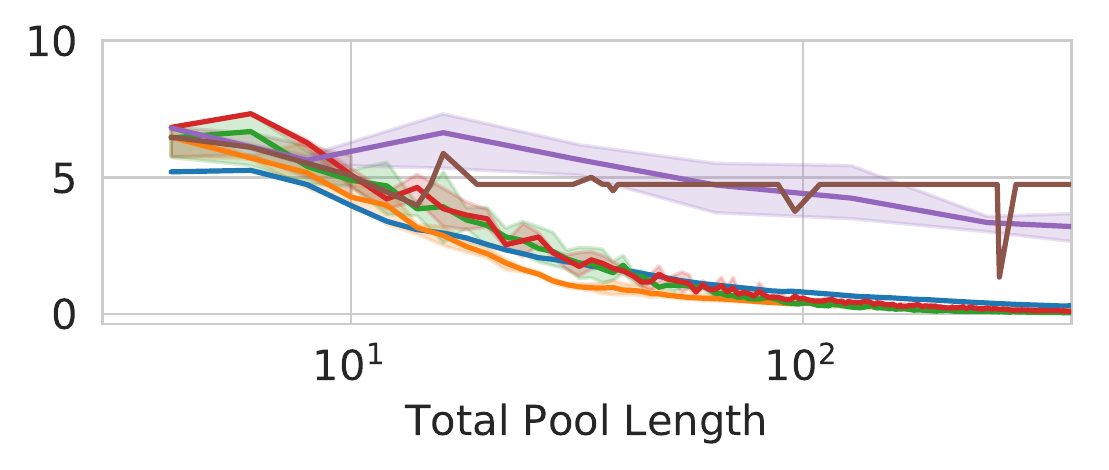}
        \vspace{\negSpace}
    }%
    \begin{subfigure}{0.15\textwidth}
        \vspace{-70pt}
        \hspace{50pt}
        \includegraphics[width=\textwidth]{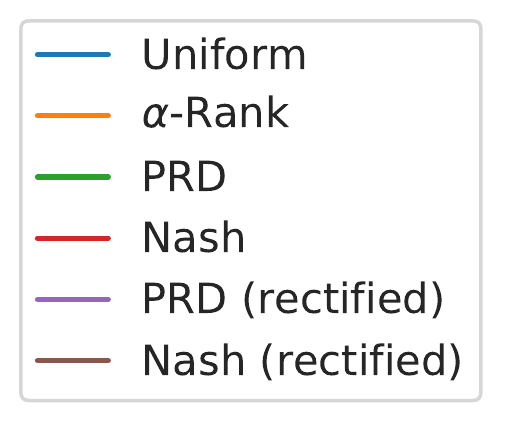}
        \vspace{\negSpace}
    \end{subfigure}
    \vspace{-8pt}
    \caption{Results for 2-player poker domains.}
    \vspace{-15pt}
    \label{fig:2player_nashconv}
\end{figure}

\tightparagraph{Meta-solver comparisons}
We consider next the standard benchmarks of Kuhn and Leduc poker \citep{kuhn1950simplified,Southey05bayes,lanctot2019openspiel}.
We detail these domains in \cref{appendix:domain_descriptions}, noting here that both are $K$-player, although Leduc is significantly more complex than Kuhn.
We first consider two-player instances of these poker domains, permitting use of an exact Nash meta-solver. 
\Cref{fig:2player_nashconv} compares the \NashConv of \psro{$\metasolver$, BR} for various meta-solver $\metasolver$ choices. Note that the x axis of \Cref{fig:2player_nashconv} and \Cref{fig:more_than_2p_nashconv} is the Total Pool Length (The sum of the length of each player's pool in PSRO) instead of the number of iterations of PSRO, since Rectified solvers can add more than one policy to the pool at each PSRO iteration (Possibly doubling pool size at every PSRO iteration). It is therefore more pertinent to compare exploitabilities at the same pool sizes rather than at the same number of PSRO iterations.

In Kuhn poker (\cref{fig:kuhn_poker_2players_exploitabilities_allsolvers}), the \alpharank, Nash, and the Projected Replicator Dynamics (PRD) meta-solvers converge essentially at the same rate towards zero \NashConv, in contrast to the slower rate of the Uniform meta-solver, the very slow rate of the Rectified PRD solver, and the seemingly constant \NashConv of the Rectified Nash solver.
We provide in \cref{appendix:rectified_notes} a walkthrough of the first steps of the Rectified Nash results to more precisely determine the cause of its plateauing \NashConv.
A high level explanation thereof is that it is caused by Rectified Nash cycling through the same policies, effectively not discovering new policies. 
We posit these characteristics, antipodal to the motivation behind Rectified Nash, come from the important fact that Rectified Nash was designed to work only in symmetric games, and is therefore not inherently well-suited for the Kuhn and Leduc poker domains investigated here, as they are both asymmetric games. We did not add the Rectified PRD results the other, greater-than-2 players experiments, as its performance remained non-competitive. 

As noted in \citet{lanctot2017unified}, \psro{Uniform, BR} corresponds to Fictitious Play \citep{brown1951iterative} and is thus guaranteed to find an NE in such instances of two-player zero-sum games.
Its slower convergence rate is explained by the assignment of uniform mass across all policies $s \in S$, implying that PSRO essentially wastes resources on training the oracle to beat even poor-performing strategies.
While \alpharank does not seek to find an approximation of Nash, it nonetheless reduces the \NashConv yielding competitive results in comparison to an exact-Nash solver in these instances.
Notably, the similar performance of \alpharank and Nash serves as empirical evidence that \alpharank can be applied competitively even in the two-player zero-sum setting, while also showing great promise to be deployed in broader settings where Nash is no longer tractable.

\begin{figure}[t]
    \newcommand{\figHeight}{6.5\baselineskip}
    \newcommand{\figLegendHeight}{2\baselineskip}
    \newcommand{\negSpace}{-18pt}
    \centering
    \subcaptionbox{3-player Kuhn.\label{fig:kuhn_poker_3players_exploitabilities_allsolvers} 
        }[0.2\textwidth]{ \includegraphics[height=\figHeight]{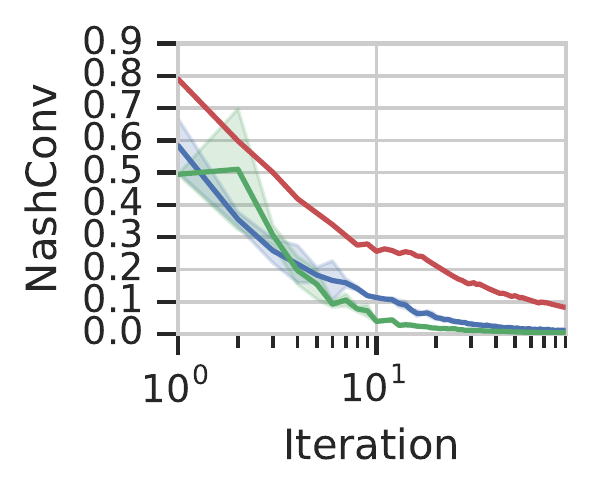}
        \vspace{\negSpace}
    }%
    \hfill%
    \subcaptionbox{4-player Kuhn.\label{fig:kuhn_poker_4players_exploitabilities_allsolvers} 
        }[0.2\textwidth]{ \includegraphics[height=\figHeight]{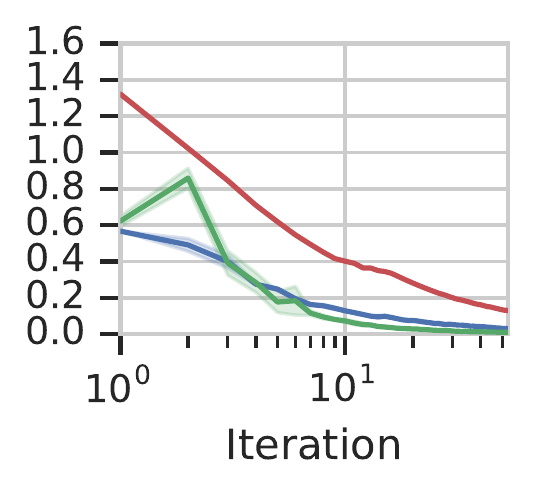}
        \vspace{\negSpace}
    }%
    \hfill%
    \subcaptionbox{5-player Kuhn.\label{fig:kuhn_poker_5players_exploitabilities_allsolvers} 
        }[0.2\textwidth]{ \includegraphics[height=\figHeight]{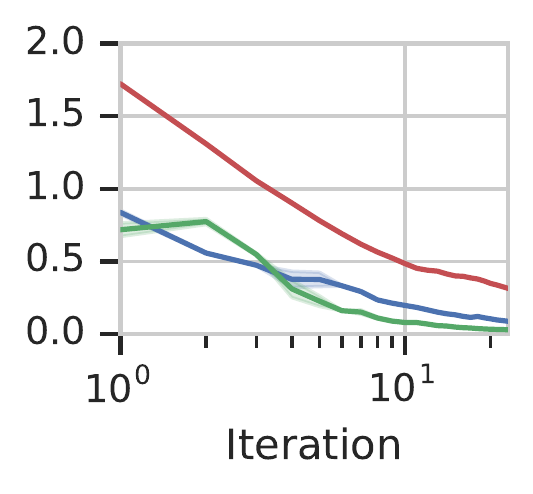}
        \vspace{\negSpace}
    }%
    \hfill%
    \rulesep%
    \hfill%
    \subcaptionbox{3-player Leduc.\label{fig:leduc_poker_3players_exploitabilities_allsolvers} 
        }[0.2\textwidth]{ \includegraphics[height=\figHeight]{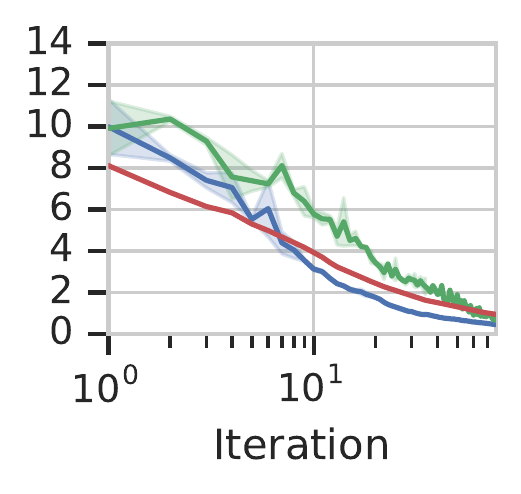}
        \vspace{\negSpace}
    }%
    \hfill%
    \begin{subfigure}{0.15\textwidth}
        \vspace{-70pt}
        \includegraphics[width=\textwidth]{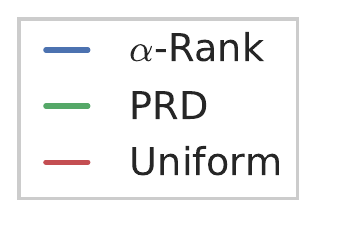}
        \vspace{\negSpace}
    \end{subfigure}
    \vspace{-5pt}
    \caption{Results for poker domains with more than 2 players.}
    \vspace{-14pt}
    \label{fig:more_than_2p_nashconv}
\end{figure}

We next consider significantly larger variants of Kuhn and Leduc Poker involving more than two players, extending beyond the reach of prior PSRO results \citep{lanctot2017unified}.
\Cref{fig:more_than_2p_nashconv} visualizes the \NashConv of PSRO using the various meta-solvers (with the exception of an exact Nash solver, due to its intractability in these instances).
In all instances of Kuhn Poker, \alpharank and PRD show competitive convergence rates. 
In 3-player Leduc poker, however, \alpharank shows fastest convergence, with Uniform following throughout most of training and PRD eventually reaching a similar \NashConv.
Several key insights can be made here.
First, computation of an approximate Nash via PRD involves simulation of the associated replicator dynamics, which can be chaotic \citep{palaiopanos2017multiplicative} even in two-player two-strategy games, making it challenging to determine when PRD has suitably converged.
Second, the addition of the projection step in PRD severs its connection with NE;
the theoretical properties of PRD were left open in \citet{lanctot2017unified}, leaving it without any guarantees.
These limitations go beyond theoretical, manifesting in practice, e.g., in \cref{fig:leduc_poker_3players_exploitabilities_allsolvers}, where PRD is outperformed by even the uniform meta-solver for many iterations.
Given these issues, we take a first (and informal) step towards analyzing PRD in \cref{sec:prd_and_alpharank}.
For \alpharank, by contrast, we both establish theoretical properties in \cref{sec:analysis}, and face no simulation-related challenges as its computation involves solving of a linear system, even in the general-sum many-player case \citep{omidshafiei2019alpha}, thus establishing it as a favorable and general PSRO meta-solver.

\tightparagraph{MuJoCo Soccer}
While the key objective of this paper is to take a first step in establishing a theoretically-grounded framework for PSRO-based training of agents in many-player settings, an exciting question regards the behaviors of the proposed \alpharank-based PSRO algorithm in complex domains where function-approximation-based policies need to be relied upon. 
In \cref{appendix:mujoco_soccer}, we take a first step towards conducting this investigation in the MuJoCo soccer domain introduced in \citet{liu2019emergent}.
We remark that these results, albeit interesting, are primarily intended to lay the foundation for use of \alpharank as a meta-solver in complex many agent domains where RL agents serve as useful oracles, warranting additional research and analysis to make conclusive insights. 

\section{Related Work}\label{sec:related_work}

We discuss the most closely related work along two axes. We start with PSRO-based research and some multiagent deep RL work that focuses on training of networks in various multiagent settings. Then we continue with related work that uses evolutionary dynamics (\alpharank and replicator dynamics) as a solution concept to examine underlying behavior of multiagent interactions using meta-games.

Policy-space response oracles \citep{lanctot2017unified} unify many existing approaches to multiagent learning. Notable examples include fictitious play \citep{brown1951iterative,robinson1951iterative}, independent reinforcement learning  \citep{matignon_2012} and the double oracle algorithm \citep{mcmahan2003planning}.
PSRO also relies, fundamentally, on principles from empirical game-theoretic analysis (EGTA) \citep{walsh2002analyzing,PhelpsPM04,Tuyls18,wellman2006methods,Vorobeychik10Probabilistic,WiedenbeckW12,WiedenbeckCW14}.
The related Parallel Nash Memory (PNM) algorithm \citep{Oliehoek06GECCO}, which can also be seen as a generalization of the double oracle algorithm, incrementally grows the space of strategies, though using a search heuristic rather than exact best responses.
PNMs have been successfully applied to games settings utilizing function approximation, notably to address exploitability issues when training Generative Adversarial Networks (GANs)~\citep{OliehoekBeyondLocalNash2019}.

PSRO allows the multiagent learning problem to be decomposed into a sequence of single-agent learning problems. A wide variety of other approaches that deal with the multiagent learning problem without this reduction are also available, such as Multiagent Deep Deterministic Policy Gradients (MADDPG) \citep{lowe17multiagent}, Counterfactual Multiagent Policy Gradients (COMA) \citep{foerster2018counterfactual}, Differentiable Inter-Agent Learning (DIAL) \citep{foerster2016learning}, Hysteretic Deep Recurrent Q-learning \citep{OmidshafieiPAHV17}, and lenient Multiagent Deep Reinforcement Learning  \citep{PalmerTBS18}.
Several notable contributions have also been made in addressing multiagent learning challenges in continuous-control settings, most recently including the approaches of \citet{iqbal2019actor,gupta2017cooperative,wei2018multiagent,peng2017multiagent,khadka2019evolutionary}.
We refer interested readers to the following survey of recent deep multiagent RL approaches \citet{hernandez2019survey}.

\alpharank was introduced by \citet{omidshafiei2019alpha} as a scalable dynamic alternative to Nash equilibria that can be applied in general-sum, many-player games and is capable of capturing the underlying multiagent evolutionary dynamics. Concepts from evolutionary dynamics have long been used in analysis of multiagent interactions from a meta-game standpoint \citep{walsh2002analyzing,TuylsP07,hennes2013evolutionary,BloembergenTHK15,Tuyls18}.

\section{Discussion}
This paper studied variants of PSRO using \alpharank as a meta-solver, which were shown to be competitive with Nash-based PSRO in zero-sum games, and scale effortlessly to general-sum many-player games, in contrast to Nash-based PSRO.
We believe there are many interesting directions for future work, including
how uncertainty in the meta-solver distribution, informed by recent developments in dealing with incomplete information in games \citep{reeves2004computing,walsh2003choosing,rowland2019multiagent}, can be used to inform the selection of new strategies to be added to populations.
In summary, we strongly believe that the theoretical and empirical results established in this paper will play a key role in scaling up multiagent training in general settings.

\section*{Acknolwedgements}
The authors gratefully thank Bart De Vylder for providing helpful feedback on the paper draft.

\bibliographystyle{plainnat} 
\bibliography{references}

\newpage
\begin{appendices}
\crefalias{section}{appsec} 
\crefalias{subsection}{appsec} 
\renewcommand\thefigure{\thesection.\arabic{figure}}    

\begin{center}
    \textsc{\Large{Appendices}}
\end{center}

\clearpage

\section{Examples}\label{appendix:examples}

\subsection{Further Exposition of \cref{ex:psro-alpha+br,ex:psro-mpalpha+br}}
\begin{figure}[h]
    \centering
    \begin{subfigure}{\textwidth}
        \renewcommand*{\arraystretch}{1.15}
        \begin{tabular}{cc|c|c|c|c|c|}
          & \multicolumn{1}{c}{} & \multicolumn{5}{c}{Player $2$} \\
          & \multicolumn{1}{c}{} & \multicolumn{1}{c}{$A$}  & \multicolumn{1}{c}{$B$}  & \multicolumn{1}{c}{$C$} & \multicolumn{1}{c}{$D$} & \multicolumn{1}{c}{$X$} \\\cline{3-7}
             \multirow{5}{*}{Player $1$}
                    & $A$ & 0 & $-\phi$ & 1 & $\phi$ & $-\varepsilon$ \\ \cline{3-7}
                    & $B$ & $\phi$ & 0 & $-\phi^2$ & 1 & $-\varepsilon$ \\\cline{3-7}
                    & $C$ & $-1$ & $\phi^2$ & 0 & $-\phi$ & $-\varepsilon$ \\\cline{3-7}
                    & $D$ & $-\phi$ & $-1$ & $\phi$ & 0 & $-\varepsilon$ \\\cline{3-7}
                    & $X$ & $\varepsilon$ & $\varepsilon$ & $\varepsilon$ & $\varepsilon$ & 0 \\\cline{3-7}
        \end{tabular}
        \hfill
        \raisebox{-.5\height}{%
        \includegraphics[width=0.22\textwidth,page=1]{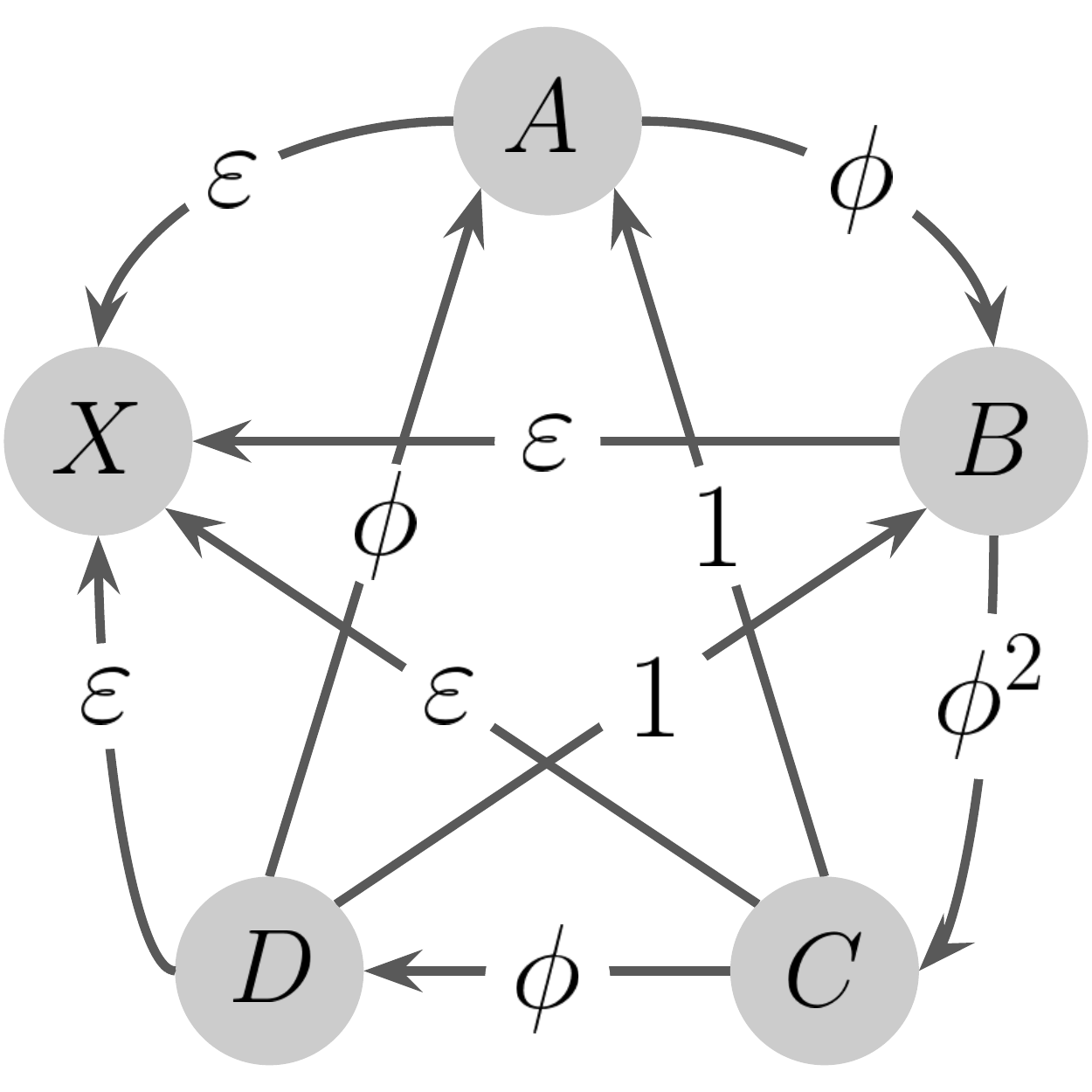}
        }
        \caption{Overview. Full payoff table on left, full response graph on right, with values over directed edges indicating the payoff gained by deviating from one strategy to another.}
        \label{fig:ex:psro-alpha+br_a}
    \end{subfigure}\\
    \begin{subfigure}{\textwidth}
        \renewcommand*{\arraystretch}{1.15}
        \begin{tabular}{cc|c|c|c|c|c|}
          & \multicolumn{1}{c}{} & \multicolumn{5}{c}{Player $2$} \\
          & \multicolumn{1}{c}{} & \multicolumn{1}{c}{$A$}  & \multicolumn{1}{c}{$B$}  & \multicolumn{1}{c}{\tikzmark{column_left}{$C$}} & \multicolumn{1}{c}{$D$} & \multicolumn{1}{c}{$X$} \\\cline{3-7}
             \multirow{5}{*}{Player $1$}
                    & $A$ & 0 & $-\phi$ & 1 & $\phi$ & $-\varepsilon$ \\ \cline{3-7}
                    & $B$ & $\phi$ & 0 & $-\phi^2$ & 1 & $-\varepsilon$ \\\cline{3-7}
                    & $C$ & $-1$ & $\phi^2$ & 0 & $-\phi$ & $-\varepsilon$ \\\cline{3-7}
                    & $D$ & $-\phi$ & $-1$ & \extraboldgreen{$\phi$} & 0 & $-\varepsilon$ \\\cline{3-7}
                    & $X$ & $\varepsilon$ & $\varepsilon$ & \tikzmark{column_right}{$\varepsilon$} & $\varepsilon$ & 0 \\\cline{3-7}
        \end{tabular}
        \Highlight{column}{green}{green}
        \hfill
        \raisebox{-.5\height}{%
        \includegraphics[width=0.22\textwidth,page=2]{figs/alphapsro_counterexample.pdf}
        }
        \caption{Consider an initial strategy space consisting only of the strategy $C$; the best response against $C$ is $D$.}
        \label{fig:ex:psro-alpha+br_b}
    \end{subfigure}\\
    \begin{subfigure}{\textwidth}
        \renewcommand*{\arraystretch}{1.15}
        \begin{tabular}{cc|c|c|c|c|c|}
          & \multicolumn{1}{c}{} & \multicolumn{5}{c}{Player $2$} \\
          & \multicolumn{1}{c}{} & \multicolumn{1}{c}{$A$}  & \multicolumn{1}{c}{$B$}  & \multicolumn{1}{c}{$C$} & \multicolumn{1}{c}{\tikzmark{column_left}{$D$}} & \multicolumn{1}{c}{$X$} \\\cline{3-7}
             \multirow{5}{*}{Player $1$}
                    & $A$ & 0 & $-\phi$ & 1 & \extraboldgreen{$\phi$} & $-\varepsilon$ \\ \cline{3-7}
                    & $B$ & $\phi$ & 0 & $-\phi^2$ & 1 & $-\varepsilon$ \\\cline{3-7}
                    & $C$ & $-1$ & $\phi^2$ & 0 & $-\phi$ & $-\varepsilon$ \\\cline{3-7}
                    & $D$ & $-\phi$ & $-1$ & $\phi$ & 0 & $-\varepsilon$ \\\cline{3-7}
                    & $X$ & $\varepsilon$ & $\varepsilon$ & $\varepsilon$ & \tikzmark{column_right}{$\varepsilon$} & 0 \\\cline{3-7}
        \end{tabular}
        \Highlight{column}{green}{green}
        \hfill
        \raisebox{-.5\height}{%
        \includegraphics[width=0.22\textwidth,page=3]{figs/alphapsro_counterexample.pdf}
        }
        \caption{The \alpharank distribution over $\{C,D\}$ puts all mass on $D$; the best response against $D$ is $A$.}
        \label{fig:ex:psro-alpha+br_c}
    \end{subfigure}\\
    \begin{subfigure}{\textwidth}
        \renewcommand*{\arraystretch}{1.15}
        \begin{tabular}{cc|c|c|c|c|c|}
          & \multicolumn{1}{c}{} & \multicolumn{5}{c}{Player $2$} \\
          & \multicolumn{1}{c}{} & \multicolumn{1}{c}{\tikzmark{column_left}{$A$}}  & \multicolumn{1}{c}{$B$}  & \multicolumn{1}{c}{$C$} & \multicolumn{1}{c}{$D$} & \multicolumn{1}{c}{$X$} \\\cline{3-7}
             \multirow{5}{*}{Player $1$}
                    & $A$ & 0 & $-\phi$ & 1 & $\phi$ & $-\varepsilon$ \\ \cline{3-7}
                    & $B$ & \extraboldgreen{$\phi$} & 0 & $-\phi^2$ & 1 & $-\varepsilon$ \\\cline{3-7}
                    & $C$ & $-1$ & $\phi^2$ & 0 & $-\phi$ & $-\varepsilon$ \\\cline{3-7}
                    & $D$ & $-\phi$ & $-1$ & $\phi$ & 0 & $-\varepsilon$ \\\cline{3-7}
                    & $X$ & \tikzmark{column_right}{$\varepsilon$} & $\varepsilon$ & $\varepsilon$ & $\varepsilon$ & 0 \\\cline{3-7}
        \end{tabular}
        \Highlight{column}{green}{green}
        \hfill
        \raisebox{-.5\height}{%
        \includegraphics[width=0.22\textwidth,page=4]{figs/alphapsro_counterexample.pdf}
        }
        \caption{The \alpharank distribution over $\{C,D,A\}$ puts all mass on $A$; the best response against $A$ is $B$.}
        \label{fig:ex:psro-alpha+br_d}
    \end{subfigure}\\
    \begin{subfigure}{\textwidth}
        \renewcommand*{\arraystretch}{1.15}
        \begin{tabular}{cc|c|c|c|c|c|}
          & \multicolumn{1}{c}{} & \multicolumn{5}{c}{Player $2$} \\
          & \multicolumn{1}{c}{} & \multicolumn{1}{c}{$A$}  & \multicolumn{1}{c}{\tikzmark{column_left}{$B$}}  & \multicolumn{1}{c}{$C$} & \multicolumn{1}{c}{$D$} & \multicolumn{1}{c}{$X$} \\\cline{3-7}
             \multirow{5}{*}{Player $1$}
                    & $A$ & 0 & $-\phi$ & 1 & $\phi$ & $-\varepsilon$ \\ \cline{3-7}
                    & $B$ & $\phi$ & 0 & $-\phi^2$ & 1 & $-\varepsilon$ \\\cline{3-7}
                    & $C$ & $-1$ & \extraboldgreen{$\phi^2$} & 0 & $-\phi$ & $-\varepsilon$ \\\cline{3-7}
                    & $D$ & $-\phi$ & $-1$ & $\phi$ & 0 & $-\varepsilon$ \\\cline{3-7}
                    & $X$ & $\varepsilon$ & \tikzmark{column_right}{$\varepsilon$}&  $\varepsilon$ & $\varepsilon$ & 0 \\\cline{3-7}
        \end{tabular}
        \Highlight{column}{green}{green}
        \hfill
        \raisebox{-.5\height}{%
        \includegraphics[width=0.22\textwidth,page=5]{figs/alphapsro_counterexample.pdf}
        }
        \caption{The \alpharank distribution over $\{C,D,A,B\}$ puts mass $(\nicefrac{1}{3}, \nicefrac{1}{3}, \nicefrac{1}{6}, \nicefrac{1}{6})$ on $(A, B, C, D)$ respectively. For $\phi$ sufficiently large, the payoff that $C$ receives against $B$ dominates all others, and since $B$ has higher mass than $C$ in the \alpharank distribution, the best response is $C$.
        }
        \label{fig:ex:psro-alpha+br_e}
    \end{subfigure}\\
    \caption{\cref{ex:psro-alpha+br} with oracle $\oracle=\text{BR}$. In each step above, the \alpharank support is highlighted by the light green box of the payoff table, and the BR strategy against it in bold, dark green.}
    \label{fig:ex:psro-alpha+br}
\end{figure}

\begin{figure}[t]
    \renewcommand{\thesubfigure}{e} 
    \begin{subfigure}{\textwidth}
        \renewcommand*{\arraystretch}{1.1}
        \begin{tabular}{cc|c|c|c|c|c|}
          & \multicolumn{1}{c}{} & \multicolumn{5}{c}{Player $2$} \\
          & \multicolumn{1}{c}{} & \multicolumn{1}{c}{\tikzmark{column_left}{$A$}}  & \multicolumn{1}{c}{$B$}  & \multicolumn{1}{c}{$C$} & \multicolumn{1}{c}{$D$} & \multicolumn{1}{c}{$X$} \\\cline{3-7}
             \multirow{5}{*}{Player $1$}
                    & $A$ & 0 & $-\phi$ & 1 & $\phi$ & $-\varepsilon$ \\ \cline{3-7}
                    & $B$ & $\phi$ & 0 & $-\phi^2$ & 1 & $-\varepsilon$ \\\cline{3-7}
                    & $C$ & $-1$ & $\phi^2$ & 0 & $-\phi$ & $-\varepsilon$ \\\cline{3-7}
                    & $D$ & $-\phi$ & $-1$ & $\phi$ & 0 & $-\varepsilon$ \\\cline{3-7}
                    & $X$ & \extraboldgreen{$\varepsilon$} & \extraboldgreen{$\varepsilon$} &  \extraboldgreen{$\varepsilon$} & \tikzmark{column_right}{\extraboldgreen{$\varepsilon$}} & 0 \\\cline{3-7}
        \end{tabular}
        \Highlight{column}{green}{green}
        \hfill
        \raisebox{-.5\height}{%
        \includegraphics[width=0.22\textwidth,page=6]{figs/alphapsro_counterexample.pdf}
        }
        \caption{The \alpharank distribution over $\{C,D,A,B\}$ puts mass $(\nicefrac{1}{3}, \nicefrac{1}{3}, \nicefrac{1}{6}, \nicefrac{1}{6})$ on $(A, B, C, D)$ respectively. $A$ beats $C$ and $D$, and therefore its PBR score is $\nicefrac{1}{3}$. $B$ beats $A$ and $D$, therefore its PBR score is $\nicefrac{1}{2}$. $C$ beats $B$, its PBR score is therefore $\nicefrac{1}{3}$. $D$ beats $C$, its PBR score is therefore $\nicefrac{1}{6}$. Finally, $X$ beats every other strategy, and its PBR score is thus $1$. There is only one strategy maximizing PBR, $X$, which is then chosen, and the SSCC of the game, recovered.}
    \end{subfigure}\\
    \caption{\cref{ex:psro-alpha+br} with oracle $\oracle=\text{PBR}$. Steps \subref{fig:ex:psro-alpha+br_a} to \subref{fig:ex:psro-alpha+br_d} are not shown as they are identical to their analogs in \cref{fig:ex:psro-alpha+br}.}
    \label{fig:ex:psro-alpha+pbr}
\end{figure}

\subsection{Example behavior of \psro{Nash, BR}}\label{appendix:psro_nash_br_examples}

A first attempt to establish convergence to \alpharank might involve running PSRO to convergence (until the oracle returns a strategy already in the convex hull of the known strategies), and then running \alpharank on the resulting meta-game.
However, the following provides a counterexample to this approach when using either \psro{Nash,BR} or \psro{Uniform, BR}.

\begin{table}[h]
    \centering
    \captionsetup[subtable]{oneside,margin={2.2cm,0cm}}
    \begin{subtable}{0.45\textwidth}
        \centering
        \renewcommand*{\arraystretch}{1.1}
        \begin{tabular}{cc|c|c|c|}
          & \multicolumn{1}{c}{} & \multicolumn{3}{c}{Player $2$} \\
          & \multicolumn{1}{c}{} & \multicolumn{1}{c}{$A$}  & \multicolumn{1}{c}{$B$}  & \multicolumn{1}{c}{$X$} \\\cline{3-5}
             \multirow{3}{*}{Player $1$}
                    & $A$ & 0 & 1 & $\varepsilon$ \\ \cline{3-5}
                    & $B$ & 1 & 0 & $-\varepsilon$ \\\cline{3-5}
                    & $X$ & $-\varepsilon$ & $\varepsilon$ & 0 \\\cline{3-5}
        \end{tabular}
    \caption{\cref{ex:psro}}
    \label{table:nash-br-sp-counterexample}
    \end{subtable}
    \captionsetup[subtable]{oneside,margin={0.5cm,0cm}}
    \begin{subtable}{0.45\textwidth}
        \begin{tabular}{cc|c|c|}
          & \multicolumn{1}{c}{} & \multicolumn{2}{c}{Player $2$} \\
          & \multicolumn{1}{c}{} & \multicolumn{1}{c}{$A$}  & \multicolumn{1}{c}{$B$} \\\cline{3-4}
             \multirow{3}{*}{Player $1$}
                    & $A$ & -1 & 1  \\ \cline{3-4}
                    & $B$ & 1 & -1  \\\cline{3-4}
                    & $X$ & $-\varepsilon$ & $-\varepsilon/2$  \\\cline{3-4}
        \end{tabular}
        \caption{\cref{ex:psro-mp}}
        \label{table:nash-br-mp-counterexample}
    \end{subtable}
    \caption{Illustrative games used to analyze the behavior of PSRO in \cref{ex:psro}. Here, $0 < \varepsilon \ll 1$. The first game is symmetric, whilst the second is zero-sum. Both tables specify the payoff to Player 1 under each strategy profile.
    }
    \label{table:nash-br-counterexample}
\end{table}

\begin{example}\label{ex:psro}
    Consider the two-player symmetric game specified in \cref{table:nash-br-sp-counterexample}. The sink strongly-connected component of the single-population response graph (and hence the \alpharank distribution) contains all three strategies, but all NE are supported on $\{A,B\}$ only, and the best response to a strategy supported on $\{A, B\}$ is another strategy supported on $\{A,B\}$. Thus, the single-population variant of PSRO, using $\metasolver \in \{\text{Nash}, \text{Uniform}\}$ with initial strategies contained in $\{A,B\}$ will terminate before discovering strategy $X$; the full \alpharank distribution will thus \emph{not} be recovered.
\end{example}

\begin{example}\label{ex:psro-mp}
    Consider the two-player zero-sum game specified in \cref{table:nash-br-mp-counterexample}. All strategy profiles recieve non-zero probability in the multi-population \alpharank distribution. However, the Nash equilibrium over the game restricted to actions $A,B$ for each player has a unique Nash equilibrium of $(1/2, 1/2)$. Player 1's best response to this Nash is to play some mixture of $A$ and $B$, and therefore strategy $X$ is not recovered by \psro{Nash, BR} in this case, and so the full \alpharank distribution will thus \emph{not} be recovered.
\end{example}

\subsection{Counterexamples: \alpharank vs. Nash Support}\label{appendix:alpharank_nash_counterexamples}

\paragraph{The Game of Chicken} The Game of Chicken provides an example where the support of \alpharank - in the multipopulation case - does not include the full support of Nash Equilibria.  

\begin{table}[h]
\centering
\begin{tabular}{cc|c|c|}
\multicolumn{2}{c}{}& \multicolumn{2}{c}{Player $2$} \\
\multicolumn{2}{c}{} & \multicolumn{1}{c}{D} & \multicolumn{1}{c}{C}  \\
\cline{3-4}
\multirow{2}{*}{Player $1$} &D  &  $(0,0)$ & $(7,2)$ \\
\cline{3-4}
&C  & $(2,7)$ & $(6,6)$ \\
\cline{3-4}
\end{tabular}
\vspace{0.3cm}
\caption{Game of Chicken payoff table}
\end{table}

This game has three Nash equilibria: Two pure, (D,C) and (C,D), and one mixed, where the population plays Dare with probability $\frac{1}{3}$. Nevertheless, $\alpha$-rank only puts weight on (C,D) and (D,C), effectively not putting weight on the full mixed-nash support.

\paragraph{Prisoner's Dilemma} The Prisoner's Dilemma provides a counterexample that the support of \alpharank - in the multi-population case - does not include the full support of correlated equilibria.  

\begin{table}[h]
\centering
\begin{tabular}{cc|c|c|}
\multicolumn{2}{c}{}& \multicolumn{2}{c}{Player $2$} \\
\multicolumn{2}{c}{} & \multicolumn{1}{c}{D} & \multicolumn{1}{c}{C}  \\
\cline{3-4}
\multirow{2}{*}{Player $1$} &D  & $(0,0)$ & $(3,-1)$ \\
\cline{3-4}
&C  & $(-1,3)$ & $(2,2)$ \\
\cline{3-4}
\end{tabular}
\vspace{0.3cm}
\caption{Prisoner's Dilemma payoff table}
\end{table}

This game has correlated equilibria that include (C,D), (D,C) and (C,C) in their support; nevertheless, \alpharank only puts weight on (D,D), effectively being fully disjoint from the support of the correlated equilibria.

\subsection{Single-population $\alphaConv$}\label{sec:sp-alphaconv}

In analogy with the multi-population definition in \cref{sec:a_new_response_oracle}, we define a single-population version of $\alphaConv$. We start by defining the single-population version of PBR-Score, given by $\PBRScore(\sigma; \vpi, S) = \sum_{i} \vpi_i \mathbbm{1}\left\lbrack  \mM^1(\sigma,s_i)  > \mM^2(\sigma_{i}, s_i) \right\rbrack$.
The single-population $\alphaConv$ is then defined as
\begin{align}
  \alphaConv = \max_{\sigma} \PBRScore(\sigma) - \max_{s \in S} \PBRScore(s) \, ,  
\end{align}
where $\max_{\sigma}$ is taken over the pure strategies of the underlying game.

\clearpage

\section{Proofs}\label{sec:proofs}

\subsection{Proof of \cref{prop:partial_scc_conv}}\label{proof:prop:partial_scc_conv}

\PropPartialSSCCConv*

\begin{proof}
    Suppose that a member of one of the underlying game's SSCCs appears in the \alphapsro population. This member will induce its own \metasscc in the meta-game's response graph.
    At least one of the members of the underlying game's corresponding SSCC will thus always have positive probability under the \alpharank distribution for the meta-game, and the PBR oracle for this \metasscc will always return a member of the underlying game's SSCC. If the PBR oracle returns a member of the underlying SSCC already in the PSRO population, we claim that the corresponding \metasscc already contains a cycle of the underlying SSCC. To see this, note that if the \metasscc does not contain a cycle, it must be a singleton. Either this singleton is equal to the full SSCC of the underlying game (in which we have $\alpha$-fully converged), or it is not, in which case the PBR oracle must return a new strategy from the underlying SSCC, contradicting our assumption that it has terminated.
\end{proof}

\subsection{Proof of \cref{prop:full_scc_conv}}\label{proof:prop:full_scc_conv}

\PropFullSSCCConv*

\begin{proof}
    Suppose that \alphapsro has converged, and consider a \metasscc. Since \alphapsro has converged, it follows that each strategy profile of the \metasscc is an element of an SSCC of the underlying game. Any strategy profile in this SSCC which is not in the \metasscc will obtain a positive value for the PBR objective, and since \alphapsro has converged, there can be no such strategy profile. Thus, the \metasscc contains every strategy profile contained within the corresponding SSCC of the underlying game, and therefore conclude that \alphapsro $\alpha$-fully converges to an SSCC of the underlying game.
\end{proof}

\subsection{Proof of \cref{prop:single_pop_conv}}\label{proof:prop:single_pop_conv}

\PropSinglePopConv*

\newcommand{\St}{the \alphapsro population\xspace}

\begin{proof}
    The uniqueness of the SSCC follows from the fact that in the single-population case, the response graph is fully-connected. Suppose at termination of \alphapsro, \St contains no strategy within the SSCC, and let $s$ be a strategy in the SSCC. We claim that $s$ attains a higher value for the objective defining the PBR oracle than any strategy in \St, which contradicts the fact that \alphapsro has terminated.
    To complete this argument, we note that by virtue of $s$ being in the SSCC, we have $\mM^1(s,s^\prime)>\mM^1(s^\prime,s)$ for all $s^\prime$ outside the SSCC, and in particular for all $s^\prime \in S$, thus the PBR objective for $s$ is $1$. In contrast, for any $s_i \in S$, the PBR objective for $s_i$ is upper-bounded by $1 - \vpi_i$. If $\vpi_i > 0$, then this shows $s_i$ is not selected by the oracle, since the objective value is lower than that of $s$. If $\vpi_i = 0$, then the objective value for $s_i$ is $0$, and so an SSCC member will always have a maximal PBR score of $1$ against a population not composed of any SSCC member, and all members of that population have $ < 1$ PBR scores. Consequently, single-population \alphapsro cannot terminate before it has encountered an SSCC member. By \cref{prop:partial_scc_conv}, the proposition is therefore proven.
\end{proof}

\subsection{Proof of \cref{prop:multipop_noconv}}\label{proof:prop:multipop_noconv}

\PropMultiPopNoConv*

\begin{proof}
    We exhibit a specific counterexample to the claim. Consider the three-player, three-strategy game with response graph illustrated in \cref{fig:snowflake_full}; note that we do not enumerate all strategy profiles not appearing in the SSCC for space and clarity reasons. The sequence of updates undertaken by \alphapsro in this game is illustrated in \cref{fig:snowflake_a,fig:snowflake_b,fig:snowflake_c,fig:snowflake_d,fig:snowflake_e}; whilst the singleton strategy profile $(3,2,3)$ forms the unique SSCC for this game, \alphapsro terminates before reaching it, which concludes the proof. The steps taken by the algorithm are described below; again, we do not enumerate all strategy profiles not appearing in the SSCC for space and clarity reasons.
    \begin{enumerate}[leftmargin=0.5cm,topsep=0pt,itemsep=-1ex,partopsep=1ex,parsep=1ex]
        \item Begin with strategies $[[2], [1], [1]]$ in the \alphapsro population (Player 1 only has access to strategy 2, Players 2 and 3 only have access to strategy 1) 
        \item The PBR to (2,1,1) for player 2 is 2, and no other player has a PBR on this round. We add 2 to the strategy space of player 2, which changes the space of available joint strategies to $[(2,1,1), (2,2,1)]$.
        \item \alpharank puts all its mass on (2,2,1). The PBR to (2,2,1) for player 3 is 2, and no other player has a PBR on this round. We add strategy 2 to player 3's strategy space, which changes the space of available joint strategies to $[(2,1,1), (2,2,1), (2,2,2)]$.
        \item \alpharank puts all its mass on (2,2,2).  The PBR to (2,2,2) for player 1 is 1, and no other player has a PBR on this round. We add strategy 1 to player 1's strategy space, which changes the space of available joint strategies to $[(1,1,1), (1,1,2), (1,2,1), (1,2,2), (2,1,1), (2,2,1), (2,2,2)]$.
        \item Define $\sigma$ as the \alpharank probabilities of the meta-game. Player 1 playing strategy 2 has a PBR score of $\sigma((1,1,1)) + \sigma((1,2,1))$, and the same player playing strategy 3 has a PBR score of $\sigma((1,2,1))$, which is lower than the PBR Score of playing strategy 2. No other player has a valid PBR for this round, and therefore, \alphapsro terminates. 
    \end{enumerate}
\end{proof}
    
In the above example, pictured in \cref{fig:snowflake}, a relatively weak joint strategy (Strategy (3,2,1)) bars agents from finding the optimal joint strategy of the game (Strategy (3,2,3)) : getting to this joint strategy requires coordinated changes between agents, and is therefore closely related to the common problem of \textit{Action/Equilibrium Shadowing} mentioned in \citep{matignon_2012}.
    
\begin{figure}[hbtp]
    \centering
    \null
    \hfill
    \begin{subfigure}{.48\textwidth}
        \centering
        \includegraphics[width=\textwidth,page=1]{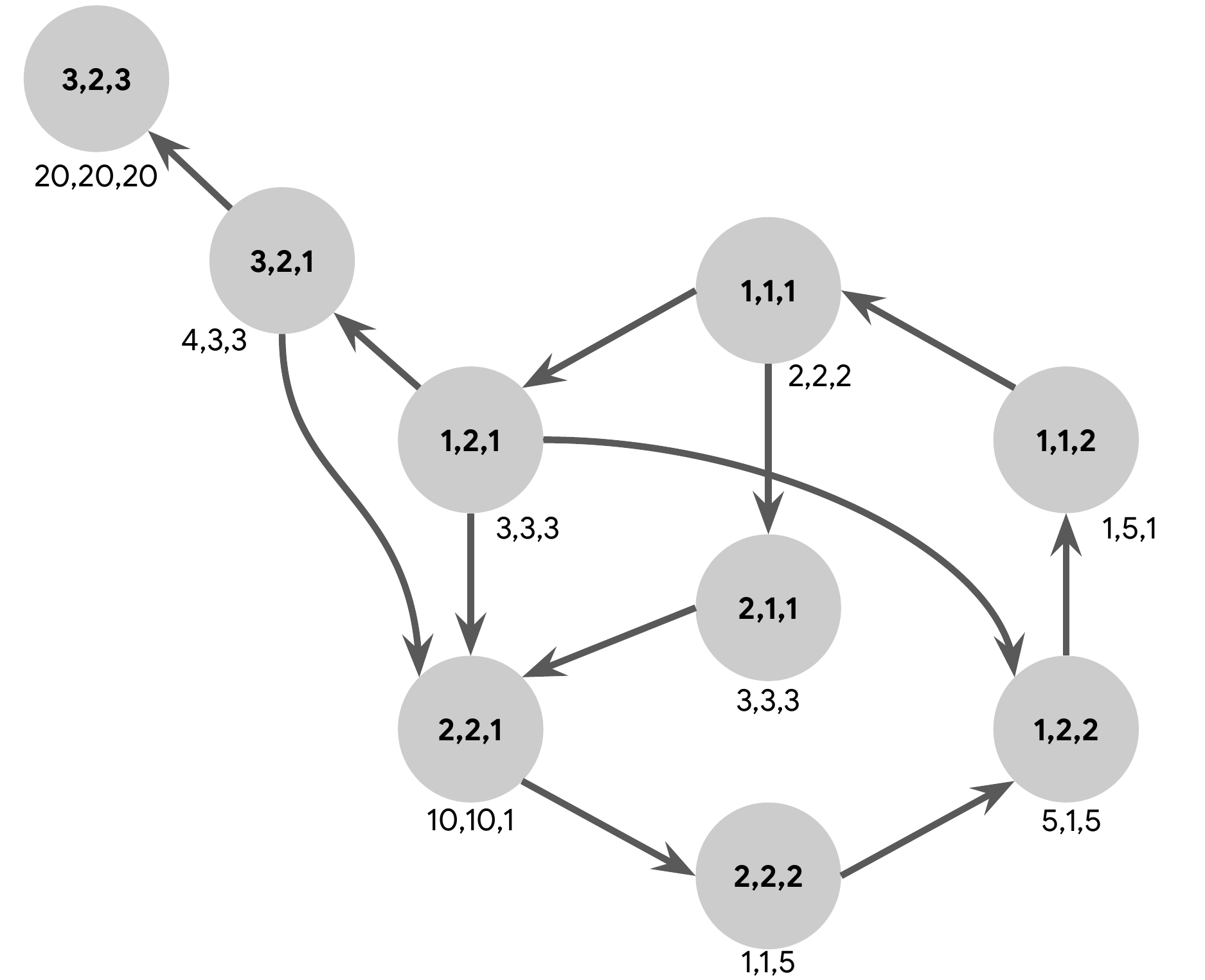}
        \caption{Full response graph.}
        \label{fig:snowflake_full}
    \end{subfigure}\hfill
    \begin{subfigure}{.48\textwidth}
        \centering
        \includegraphics[width=\textwidth,page=2]{figs/snowflake_counterexample_v2.pdf}
        \caption{\centering \alphapsro Step 1.}
        \label{fig:snowflake_a}
    \end{subfigure}
    \hfill
    \null
    
    \null
    \hfill
    \begin{subfigure}{.48\textwidth}
        \includegraphics[width=\textwidth,page=3]{figs/snowflake_counterexample_v2.pdf}
        \caption{\centering \alphapsro Step 2.}
        \label{fig:snowflake_b}
    \end{subfigure}\hfill
    \begin{subfigure}{.48\textwidth}
        \includegraphics[width=\textwidth,page=4]{figs/snowflake_counterexample_v2.pdf}
        \caption{\centering \alphapsro Step 3.}
        \label{fig:snowflake_c}
    \end{subfigure}
    \hfill
    \null
    
    \null
    \hfill
    \begin{subfigure}{.48\textwidth}
        \includegraphics[width=\textwidth,page=5]{figs/snowflake_counterexample_v2.pdf}
        \caption{\centering \alphapsro Step 4.}
        \label{fig:snowflake_d}
    \end{subfigure}\hfill
    \begin{subfigure}{.48\textwidth}
        \includegraphics[width=\textwidth,page=6]{figs/snowflake_counterexample_v2.pdf}
        \caption{\centering \alphapsro Step 5.}
        \label{fig:snowflake_e}
    \end{subfigure}
    \hfill
    \null
    \caption{The three-player, three-strategy game serving as a counterexample in the proof of \cref{prop:multipop_noconv}. Strategy profiles are illustrated by gray circles, with payoffs listed beneath. All strategy profiles not pictured are assumed to be dominated, and are therefore irrelevant in determining whether \alphapsro reaches an SSCC for this game.}
\label{fig:snowflake}
\end{figure}

\subsection{Proof of \cref{theorem:value_pbr_winloss}}\label{proof:theorem:value_pbr_winloss}

\ValuePBRWinLoss*

\begin{proof}
    We manipulate the best-response objective as follows:
    \begin{align}
        \mM^1(\nu, \vpi) & = \sum_{s \in S} \vpi(s) \mM^1(\nu, s) \\
        & = \sum_{s \in S} \vpi(s) \mathbbm{1}[\mM^1(\nu, s) > \mM^2(\nu, s)] \, .
    \end{align}
    Noting that the final line is the single-population PBR objective, we are done.
\end{proof}

\subsection{Proof of \cref{prop:value_pbr_monotonic}}\label{proof:prop:value_pbr_monotonic}

\ValuePBRMonotonic*

\begin{proof}
    Rewriting the objectives given that the game is monotonic, we have that the value-based objective becomes 
    \begin{align}
        \sum_{k=1}^K \vpi_k \mM^1(s, s_k) = \sum_{k=1}^K \vpi_k \sigma(f(s) - f(s_k)) \, .
    \end{align}
    
    Given the fact that the only condition we have on $\sigma$ is its non-decreasing character, this objective does not reduce to maximizing $f(s)$ in the general case.
    
    The objective for PBR is 
    \begin{align}
        \sum_{k=1}^K \vpi_k \mathbbm{1}[\mM^1(s, s_k) > \mM^2(s, s_k)] = \sum_{k=1}^K \vpi_k \mathbbm{1}[\sigma(f(s) - f(s_k)) > \sigma(f(s_k) - f(s))] \,
    \end{align}

    Since $\sigma$ is non-decreasing, $$\sigma(f(s) - f(s_k)) > \sigma(f(s_k) - f(s)) \; \Rightarrow \; f(s) > f(s_k)$$ and conversely, $$f(s) > f(s_k) \; \Rightarrow \; \sigma(f(s) - f(s_k)) \geq \sigma(f(s_k) - f(s))$$
    
    Without loss of generality, we reorder the strategies such that if $i < k$, $f(s_i) \leq f(s_k)$.
    
    Let $s_v$ maximize the value objective. Therefore, by monotonicity, $s_v$ maximizes $\sigma(f(s) - f(s_K))$. Three possibilities then ensue.
    
    \textbf{If there exists} $s$ such that $$\sigma(f(s) - f(s_K)) > \sigma(f(s_K) - f(s))$$ then $$ \sigma(f(s_v) - f(s_K)) > \sigma(f(s_K) - f(s_v))$$ since $s_v$ maximizes $\sigma(f(s) - f(s_K))$ and $\sigma$ is non-decreasing. Consequently $s_v$ maximizes the PBR objective. Indeed, let us remark that for all $k \leq K$, we have that $$\sigma(f(s_v) - f(s_k)) > \sigma(f(s_k) - f(s_v))$$ since $$\sigma(f(s_v) - f(s_k)) \geq \sigma(f(s_v) - f(s_K)) > \sigma(f(s_K) - f(s_v)) \geq \sigma(f(s_k) - f(s_v))$$
    
    Else, if there does not exist any policy $s$ such that $ \sigma(f(s) - f(s_K)) > \sigma(f(s_K) - f(s))$, that is, for all $s$, $$\sigma(f(s) - f(s_K)) \leq \sigma(f(s_K) - f(s))$$ Since $s_K$ is a possible solution to the value objective, $$\sigma(f(s_v) - f(s_K)) = \sigma(f(s_K) - f(s_v))$$ Let $n$ be the integer such that $$s_n = \argmax \{ f(s_k), s_k \in \text{ Population }\ |\ \exists s \text{ s.t. } \sigma(f(s) - f(s_k)) > \sigma(f(s_k) - f(s)) \}$$

    \textbf{If $s_n$ exists}, then we have that for all $s_i$ such that $f(s_i) > f(s_n)$, $$\sigma(f(s_v) - f(s_i)) = \sigma(f(s_i) - f(s_v))$$ The PBR objective is $$\sum_{k=1}^K \vpi_k \mathbbm{1}[\sigma(f(s) - f(s_k)) > \sigma(f(s_k) - f(s))]$$ which, according to our assumptions, is equivalent to $$\sum_{k=1}^n \vpi_k \mathbbm{1}[\sigma(f(s) - f(s_k)) > \sigma(f(s_k) - f(s))]$$ We know that for all $i \leq n$, $\sigma(f(s_v) - f(s_i)) > \sigma(f(s_i) - f(s_v))$, and therefore, $s_v$ maximizes the PBR objective.
    
    \textbf{Finally, if} $s_n$ doesn't exist,  then any policy is solution to the PBR objective, and therefore $s_v$ is.
\end{proof}

A toy example showing the compatibility between Best Response and Preference-based Best Response is shown in \cref{fig:sausage}. The setting is that of a monotonic game where every strategy is assigned a number. Strategies are then dominated by all strategies with higher number than theirs. We compute BR and PBR on an initial population composed of one strategy that we choose to be dominated by every other strategy. Any strategy dominating the current population is a valid solution for PBR, as represented in \cref{fig:sausage_b}; whereas, if we consider that the game is monotonic with $\sigma$ a strictly increasing function, only one strategy maximizes Best Response, strategy N -- and it is thus the only solution of BR, as shown in \cref{fig:sausage_c}. 

As we can see, the solution of BR is part of the possible solutions of PBR, demonstrating the result of \cref{prop:value_pbr_monotonic}: BR is compatible with PBR in monotonic games.

\begin{figure}[hbtp]
    \centering
    \null
    \hfill
    \begin{subfigure}{.48\textwidth}
        \centering
        \includegraphics[width=\textwidth,page=1]{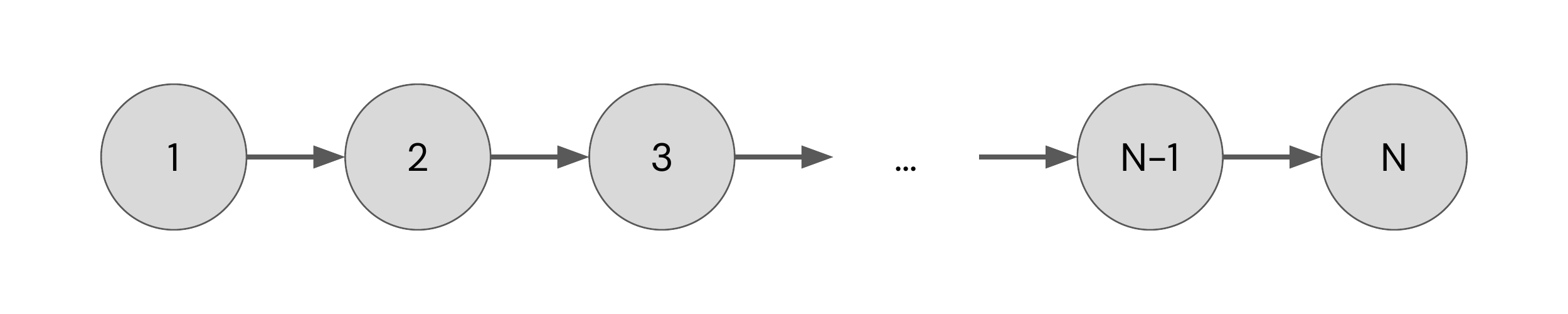}
        \caption{\centering Full response graph of a monotonic game.}
        \label{fig:sausage_full}
    \end{subfigure}\hfill
    \begin{subfigure}{.48\textwidth}
        \centering
        \includegraphics[width=\textwidth,page=2]{figs/alpha_psro_pbr_vs_br_v2.pdf}
        \caption{\centering Starting population (green).}
        \label{fig:sausage_a}
    \end{subfigure}
    \hfill
    \null
    
    \null
    \hfill
    \begin{subfigure}{.48\textwidth}
        \includegraphics[width=\textwidth,page=3]{figs/alpha_psro_pbr_vs_br_v2.pdf}
        \caption{\centering Current population (green) and possible solutions of PBR (yellow)}
        \label{fig:sausage_b}
    \end{subfigure}\hfill
    \begin{subfigure}{.48\textwidth}
        \includegraphics[width=\textwidth,page=4]{figs/alpha_psro_pbr_vs_br_v2.pdf}
        \caption{\centering Current population (green) and possible solution of BR (yellow).}
        \label{fig:sausage_c}
    \end{subfigure}
    \hfill
    \null
    \caption{Toy example of compatibility between PBR and BR: The solution returned by BR is one of the possible solutions of PBR.}
\label{fig:sausage}
\end{figure}

\subsection{Proof of \cref{prop:pbr_prefbased_rl}}\label{proof:prop:pbr_prefbased_rl}

\PBRAndPrefBasedRL*
\begin{proof}
Commencing with the above preference-based RL objective, we calculate as follows,
    \begin{align}
        \argmax_{\sigma} \mathbb{P}\left(\sigma \text{ beats } \sum_{i=1}^N \vpi_i s_i\right)
        = & \argmax_{\sigma} \mathbb{E}_i\left[\mathbb{P}( \sigma \text{ beats } s_i | \text{index } i \text{ selected})\right] \\
        = & \argmax_{\sigma} \sum_{i=1}^N \vpi_i \mathbb{P}(\sigma \text{ beats } s_i) \\
        = & \argmax_{\sigma} \sum_{i=1}^N \vpi_i \mathbbm{1}[ \sigma \text{ receives a positive expected payoff against } s_i ]
    \end{align}
    with the final equality whenever game outcomes between two deterministic strategies are deterministic.
    Note that this is precisely the PBR objective \cref{eq:abr_singlepop}.
\end{proof}

\subsection{Proof of \cref{prop:alpharank_nash_off_diag}}\label{proof:prop:alpharank_nash_off_diag}

\AlphaRankNashOffDiag*

\begin{proof}
    In the single-population case, the support of the \alpharank distribution is simply the (unique) sink strongly-connected component of the response graph (uniqueness follows from the fact that the response graph, viewed as an undirected graph, is fully-connected). We will now argue that for a strategy $s$ in the sink strongly-connected component and a strategy $z$ outside the sink strongly-connected component, we have
    \begin{align}\label{eq:dominate}
        \sum_{a  \in S} \vpi(a) \mM^1(s, a) > \sum_{a \in S} \vpi(a) \mM^1(z, a) \, ,
    \end{align}
    This inequality states that when an opponent plays according to $\vpi$, the expected payoff to the row player is greater if they defect to $s$ whenever they would have played $z$. This implies that if a supposed symmetric Nash equilibrium contains a strategy $z$ outside the sink strongly-connected component in its support, then it could receive higher reward by playing $s$ instead, which contradicts the fact that it is an NE. We show \cref{eq:dominate} by proving a stronger result --- namely, that $s$ dominates $z$ as strategies. Firstly, since $s$ is the sink strongly-connected component and $z$ is not, $s$ beats $z$, and so $\mM^1(s,z) > \mM^1(s, s) = \mM^1(z, z) > \mM^1(z, s)$. 
    Next, if $a\not\in\{s,z\}$ is in the sink strongly-connected component, 
    then $a$ beats $z$, and so $\mM^1(s, a) > \mM^1(z, a)$ if $s$ beats $a$, and $\mM^1(s, a) = \mM^1(z, a)$ otherwise. Finally, if $a\not=s,z$ is not in the sink strongly-connected component, then $\mM^1(s, a) = \mM^1(z, a)$ is $z$ beats $a$, and $\mM^1(s, a) > \mM^1(z, a)$ otherwise. Thus, \cref{eq:dominate} is proven, and the result follows.
\end{proof}

\subsection{Proof of \cref{prop:alpharank_nash_2pZS}}\label{proof:prop:alpharank_nash_2pZS}

\AlphaRankNashTwoPlayerZs*

\begin{proof}
    Consider the restriction of the game to the strategies contained in the sink strongly-connected component of the original game. Let $\vpi$ be an NE for this restricted game, and consider this as a distribution over all strategies in the original game (putting $0$ mass on strategies outside the sink component). We argue that this is an NE for 
    the full game, and the statement follows. To see this, note that since any strategy outside the sink strongly-connected component receives a non-positive payoff when playing against a strategy in the sink strongly-connected component, and that for at least one strategy in the sink strongly-connected component, this payoff is negative. Considering the payoffs available to the row player when the column player plays according to $\pi$, we observe that the expected payoff for any strategy outside the sink strongly-connected component is negative, since every strategy in the sink strongly-connected component beats the strategy outside the component. The payoff when defecting to a strategy in the sink strongly-connected component must be non-positive, since $\pi$ is an NE for the restricted game.
\end{proof}

\clearpage
\section{Additional Details on Experiments}\label{appendix:experimental_procedures}

\subsection{Experimental Procedures}

The code backend for the Poker experiments used OpenSpiel \citep{lanctot2019openspiel}.
Specifically, we used OpenSpiel's Kuhn and Leduc poker implementations, and exact best responses were computed by traversing the game tree (see implementation details in \url{https://github.com/deepmind/open_spiel/blob/master/open_spiel/python/algorithms/best_response.py}). 100 game simulations were used to estimate the payoff matrix for each possible strategy pair.

Although the underlying Kuhn and Leduc poker games are stochastic (due to random initial card deals), the associated meta-games are essentially deterministic (as, given enough game simulations, the mean payoffs are fixed). 
The subsequent PSRO updates are, thus, also deterministic. 

Despite this, we report averages over 2 runs per PSRO $\metasolver$, primarily to capture stochasticity due to differences in machine-specific rounding errors that occur due to the distributed computational platforms we run these experiments on.

For experiments involving \alpharank, we conduct a full sweep over the ranking-intensity parameter, $\alpha$, following each iteration of \alphapsro.
We implemented a version of \alpharank (building on the OpenSpiel implementation \url{https://github.com/deepmind/open_spiel/blob/master/open_spiel/python/egt/alpharank.py}) that used a sparse representation for the underlying transition matrix, enabling scaling-up to the large-scale NFG results presented in the experiments. 

For experiments involving the projected replicator dynamics (PRD), we used uniformly-initialized meta-distributions, running PRD for $5e4$ iterations, using a step-size of $\texttt{dt}=1e-3$, and exploration parameter $\gamma=1e-10$. Time-averaged distributions were computed over the entire trajectory.

\subsection{Domain Description and Generation}\label{appendix:domain_descriptions}

\subsubsection{Normal Form Games Generation}\label{appendix:nfg_generation}
\Cref{alg:gnfgg1,alg:gnfgg2,alg:gnfgg3} provide an overview of the procedure we use to randomly-generate normal-form games for the oracle comparisons visualized in \cref{fig:oracle_comparison}.

\begin{algorithm}[h!]
    \caption{GenerateTransitive(Actions, Players, $\text{mean}_{\text{value}}=[0.0, 1.0]$, $\text{mean}_{\text{probability}}=[0.5, 0.5]$, $\text{var}=0.1$)}
    \label{alg:gnfgg1}
    \begin{algorithmic}[1]
    \State{$\mathcal{T} = []$}
    \For{Player $k$}
        \State{Initialize $f_k = [0] * \text{Actions}$}
        \For{Action $a \leq \text{Actions}$}
            \State{Randomly sample mean $\mu$ from $\text{mean}_{\text{value}}$ according to $\text{mean}_{\text{probability}}$}
            \State{$f_k[a] \sim \mathcal{N}(\mu, \text{var})$}
        \EndFor
    \EndFor
    \For{Player $k$}
        \State{$\mathcal{T}[k] = f_k - \frac{1}{|\text{Players}|-1}\sum_{i \neq k} f_i$}
    \EndFor
    \State{Return $\mathcal{T}$}
    \end{algorithmic}
\end{algorithm}

\begin{algorithm}[h!]
    \caption{GenerateCyclic(Actions, Players, $\text{var}=0.4$)}
    \label{alg:gnfgg2}
    \begin{algorithmic}[1]
    \State{$\mathcal{C} = []$}
    \For{Player $k$}
        \State{Initialize $C[k] \sim \mathcal{N}(0, \text{var})$, $\text{Shape}(C[k]) = (\text{Actions}_{\text{First Player}}, \ldots, \text{Actions}_{\text{Last Player}})$}
    \EndFor
    \For{Player $k$}
        \State{$\text{Sum} = \sum_{\text{Actions } a_i \text{ of all player } i \neq k} C[k][a_1, \ldots, a_{k-1}, \text{ : }, a_{k+1}, ...]$}
        \State{$\text{Shape}(\text{Sum}) = (1, \ldots, 1, \text{Actions}_{\text{Player } k}, 1, \ldots, 1)$}  
        \State{$C[k] = C[k] - \text{Sum}$}
    \EndFor
    \State{Return $\mathcal{C}$}
    \end{algorithmic}
\end{algorithm}

\begin{algorithm}[h!]
    \caption{General Normal Form Games Generation(Actions, Players)}
    \label{alg:gnfgg3}
    \begin{algorithmic}[1]
    \State{Generate matrix lists $\mathcal{T} = $ GenerateTransitive(Actions, Players), $\mathcal{C} = $ GenerateCyclic(Actions, Players)}
    \State{Return [$\mathcal{T}[k] + \mathcal{C}[k]$ for Player k]}
    \end{algorithmic}
\end{algorithm}

\subsubsection{Kuhn and Leduc Poker}

$K$-player Kuhn poker is played with a deck of $K+1$ cards.
Each player starts with 2 chips and 1 face-down card, and antes 1 chip to play.
Players either bet (raise/call) or fold iteratively, until each player is either in (has contributed equally to the pot) or has folded.
Amongst the remaining players, the one with the highest-ranked card wins the pot.

Leduc Poker, in comparison, has a significantly larger state-space.
Players in Leduc have unlimited chips, receive 1 face-down card, ante 1 chip to play, with subsequent bets limited to 2 and 4 chips in rounds 1 and 2.
A maximum of two raises are allowed in each round, and a public card is revealed before the second round.

\subsection{PBR computation in Normal Form Games}\label{appendix:pbr_in_nfg}

The algorithms used to compute PBR and \PBRScore in the games generated by the algorithm described in \Cref{appendix:nfg_generation} is shown in \Cref{alg:pbr,alg:pbr_score}. Note that they compute the multipopulation version of PBR. \PCSScore is computed by pre-computing the full game's SSCC, and computing the proportion of currently selected strategies in the empirical game that also belongs to the full game's SSCC. 

Note that the \PBRScore and \PCSScore are useful measures for assessing the quality of convergence in our examples, in a manner analogous to \NashConv.
The computation of these scores is, however, not tractable in general games. 
Notably, this is also the case for \NashConv (as it requires computation of player-wise best responses, which can be problematic even in moderately-sized games). 
Despite this, these scores remain a useful way to empirically verify the convergence characteristics in small games where they can be tractably computed.

\begin{algorithm}[h!]
    \caption{PBR Score(Strategy S, Payoff Tensor, Current Player Id, Joint Strategies, Joint Strategy Probability)}
    \label{alg:pbr_score}
    \begin{algorithmic}[1]
        \State{New strategy score = 0}
        \For{Joint strategy J, Joint probability P in Joint Strategies, Joint Strategy Probability}
            \State{New strategy = J}
            \State{New strategy[Current Player Id] = S}
            \State{New strategy payoff = Payoff Tensor[New Strategy]}
            \State{Old strategy payoff = Payoff Tensor[J]}
            \State{New strategy score += P *  (New Strategy Payoff $>$ Old Strategy Payoff)}
        \EndFor
        \State{Return New strategy score}
    \end{algorithmic}
\end{algorithm}

\begin{algorithm}[h!]
    \caption{PBR(Payoff Tensor list LM, Joint Strategies per player PJ, Alpharank Probability per Joint Strategy PA, Current Player)}
    \label{alg:pbr}
    \begin{algorithmic}[1]
        \State{$\max_{PBR} = 0$}
        \State{$\max_{strat} = None$}
        \For{Strategy S available to Current Player among all possible strategies}
            \State{score = PBR Score(S, LM[Current Player Id], Current Player Id, PJ, PA)}
            \If{New Strategy Score $> \max_{PBR}$}
                \State{$\max_{PBR} =$ New Strategy Score}
                \State{$\max_{strat} =$ S}
            \EndIf
        \EndFor
        \State{Return $\max_{PBR}, \max_{strat}$}
    \end{algorithmic}
\end{algorithm}

\clearpage
\subsection{Additional Oracle Comparison Results}\label{appendix:more_oracle_comparisons}

We present additional oracle comparisons in \cref{fig:appendix_oracle_comparison}, all of these in the multi-population case.

\begin{figure}[h]
    \newcommand{\figHeight}{6.4\baselineskip}
    \newcommand{\figLegendHeight}{2\baselineskip}
    \newcommand{\negSpace}{-18pt}
    \centering
    \centering
    \subcaptionbox*{}[0.32\textwidth]{ \includegraphics[height=\figLegendHeight]{figs/legend_pbr_exploit_2players_PBR.pdf}
    }%
    \subcaptionbox*{}[0.64\textwidth]{ \includegraphics[height=\figLegendHeight]{figs/legend_nfg_pcs_2players_pbr_alpharank.pdf}
    }%
    \\[-25pt]
    \begin{subfigure}{0.32\textwidth}
        \includegraphics[width=\textwidth]{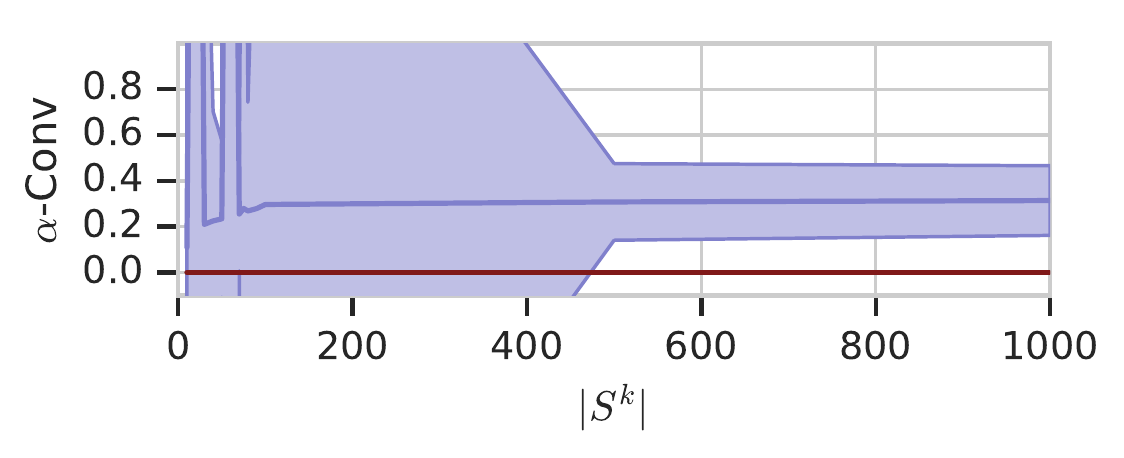}
        \includegraphics[width=\textwidth]{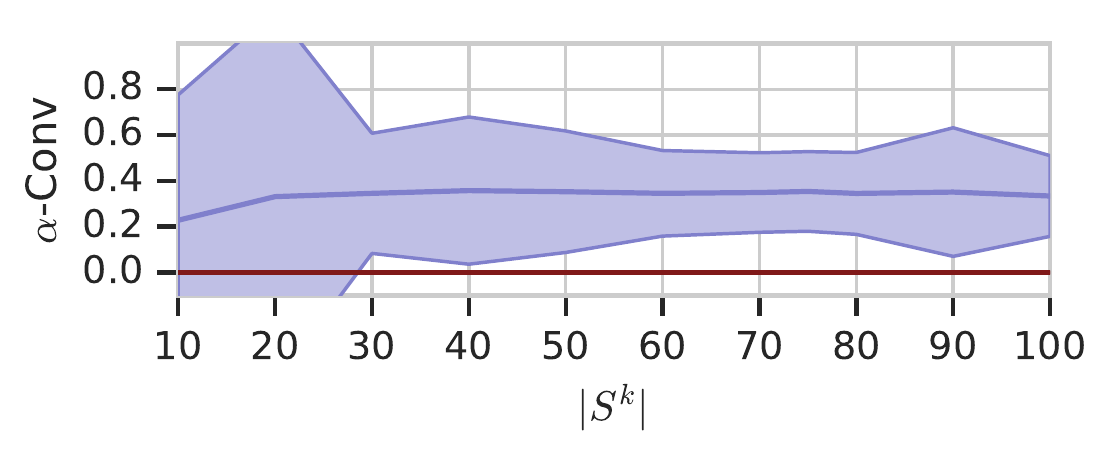}
        \includegraphics[width=\textwidth]{figs/pbr_exploit_4players_PBR.pdf}
        \includegraphics[width=\textwidth]{figs/pbr_exploit_5players_PBR.pdf}
        \vspace{\negSpace}
        \caption{$\alphaConv$}
    \end{subfigure}%
    \rulesep%
    \begin{subfigure}{0.32\textwidth}
        \includegraphics[width=\textwidth]{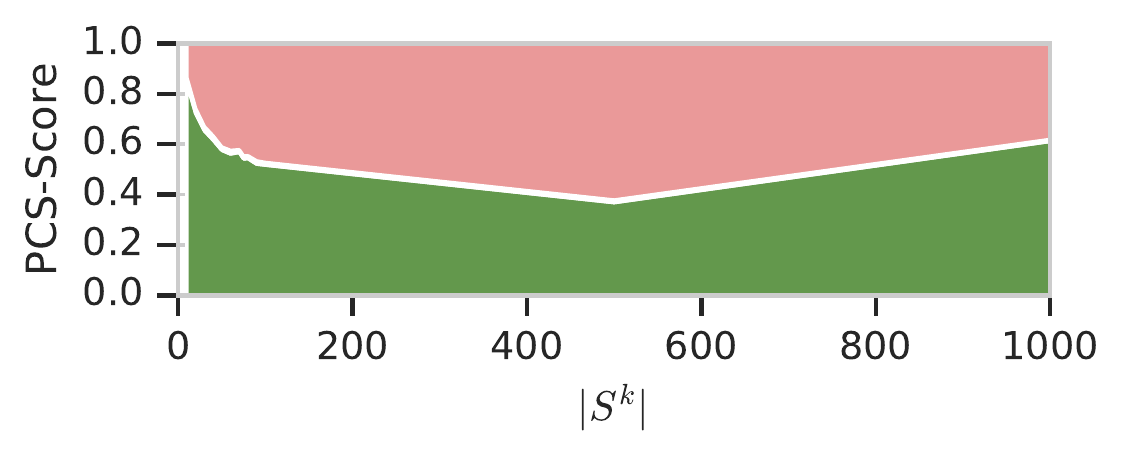}
        \includegraphics[width=\textwidth]{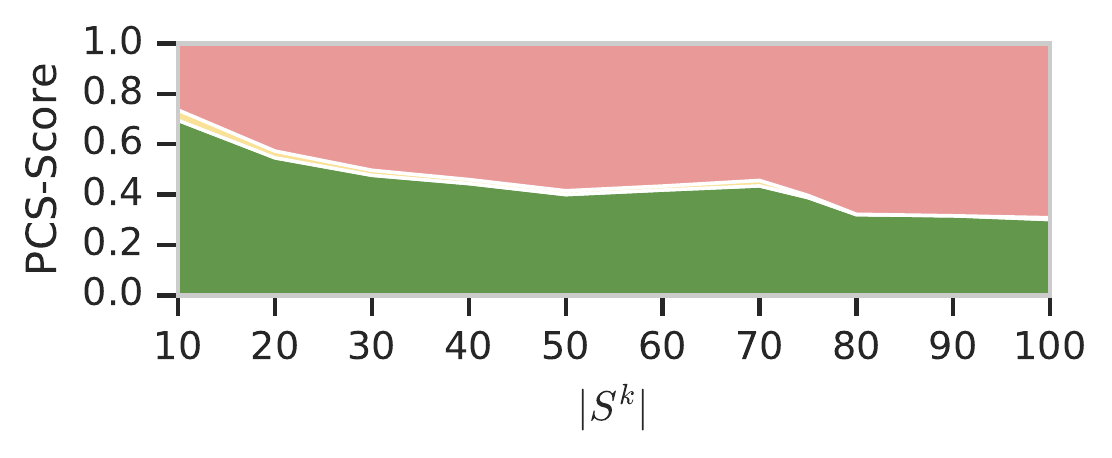}
        \includegraphics[width=\textwidth]{figs/nfg_pcs_4players_br_alpharank.pdf}
        \includegraphics[width=\textwidth]{figs/nfg_pcs_5players_br_alpharank.pdf}
        \vspace{\negSpace}
        \caption{\PCSScore for BR}
    \end{subfigure}%
    \rulesep%
    \begin{subfigure}{0.32\textwidth}
        \includegraphics[width=\textwidth]{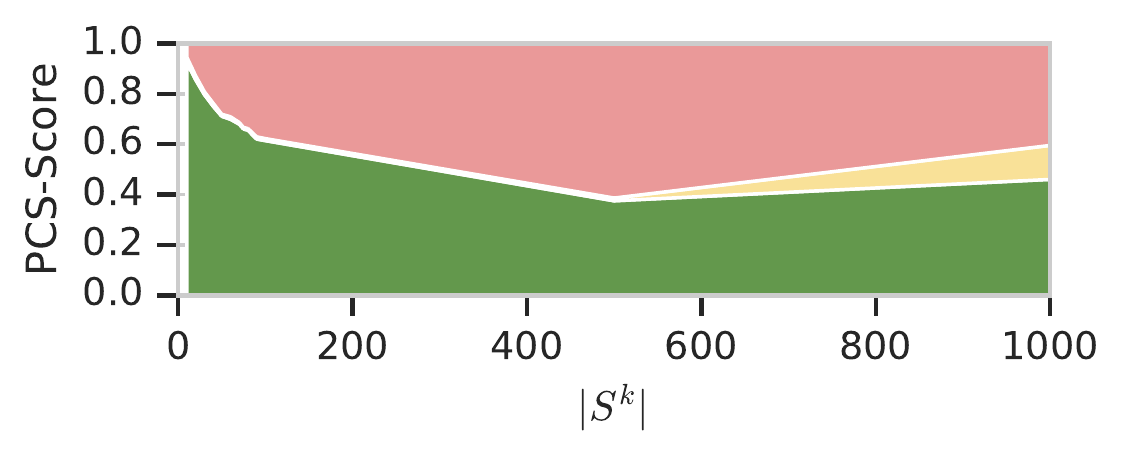}
        \includegraphics[width=\textwidth]{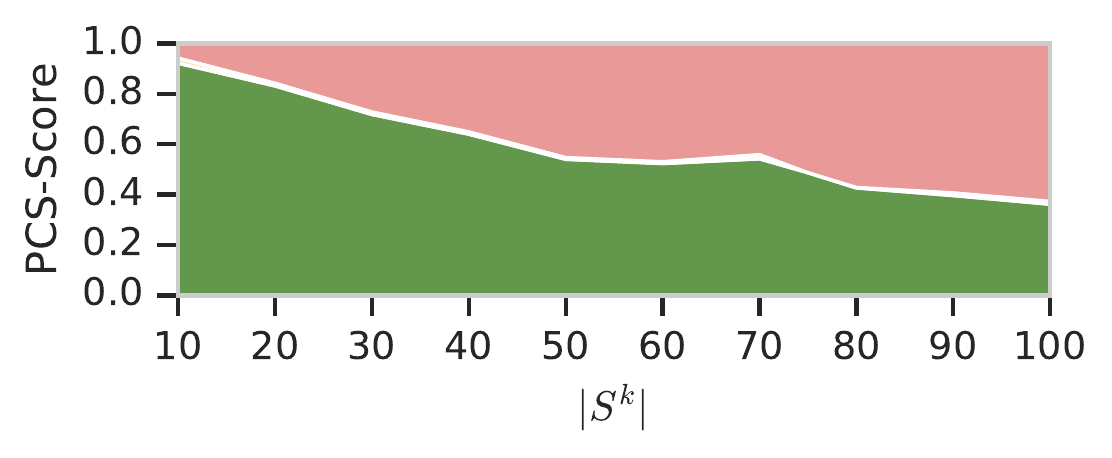}
        \includegraphics[width=\textwidth]{figs/nfg_pcs_4players_pbr_alpharank.pdf}
        \includegraphics[width=\textwidth]{figs/nfg_pcs_5players_pbr_alpharank.pdf}
        \vspace{\negSpace}
        \caption{\PCSScore for PBR}
    \end{subfigure}
    \vspace{-5pt}
    \caption{Oracle comparisons for randomly-generated normal-form games with varying player strategy space sizes $|S^k|$. The rows, in order, correspond to 2- to 5-player games. 
    }
    \label{fig:appendix_oracle_comparison}
\end{figure}

\subsection{Notes on Rectified Nash performance}\label{appendix:rectified_notes}

This section provides additional insights into the Rectified Nash results detailed in \cref{sec:experiments}.
We begin with an important disclaimer that Rectified Nash was developed solely with symmetric games in mind.
As Kuhn Poker and Leduc Poker are not symmetric games, they lie beyond the theoretical scope of Rectified Nash. 
Nevertheless, comparing the performance of rectified and non-rectified approaches from an empirical perspective yields insights, which may be useful for future investigations that seek to potentially extend and apply rectified training approaches to more general games.

As noted in the main paper, the poor performance of PSRO using Rectified Nash (in \cref{fig:2player_nashconv}) is initially surprising as it indicates premature convergence to a high-\NashConv distribution over the players' policy pools. 
Investigating this further led to a counterintuitive result for the domains evaluated: Rectified Nash was, roughly speaking, not increasing the overall diversity of behavioral policies added to each player's population pool.
In certain regards, it even prevented diversity from emerging.

To more concretely pinpoint the issues, we detail below the first 3 iterations of \psro{Rectified Nash, BR} in Kuhn Poker.
Payoff matrices at each PSRO iteration are included in \cref{table:1_psro_iteration_info,table:2_psro_iteration_info,table:3_psro_iteration_info}. 
For clarity, we also include the 5 best responses trained by Rectified Nash and the policies they were trained against, in their order of discovery: 2 policies for Player 1 (in \cref{fig:rect_nash_player_1_pol}) and 3 policies for Player 2 (in \cref{fig:rect_nash_player_2_pol}). 

\begin{enumerate}
    \item Iteration 0: both players start with uniform random policies.
    \item Iteration 1:
    \begin{itemize}
        \item Player 1 trains a best response against Player 2's uniform random policy; its policy set is now the original uniform policy, and the newly-computed best response.
        \item Player 2 trains a best response against Player 1's uniform random policy; its policy set is now the original uniform policy, and the newly-computed best response.
        \item Player 2's best response beats both of Player 1's policies.
        \item Payoff values are represented in \cref{table:1_psro_iteration_info}.
    \end{itemize}
    \item Iteration 2:
    \begin{itemize}
        \item By Rectified Nash rules, Player 1 only trains policies against policies it beats; i.e., only against Player 2's random policy, and thus it adds the same policy as in iteration 1 to its pool.
        \item Player 2 trains a best response against the Nash mixture of Player 1's first best response and random policy. This policy also beats all policies of player 1.
        \item Payoff values are represented in \cref{table:2_psro_iteration_info}.
    \end{itemize}
    \item Iteration 3:
    \begin{itemize}
        \item Player 1 only trains best responses against Player 2's random policy.
        \item Player 2 only trains best responses against the Nash of Player 1's two unique policies. This yields the same policies for player 2 as those previously added to its pool (i.e., a loop occurs).
        \item Payoff values are represented in \cref{table:3_psro_iteration_info}
    \end{itemize}
    \item Rectified Nash has looped.
\end{enumerate}
As noted above, Rectified Nash loops at iteration 3, producing already-existing best responses against Player 1's policies.
Player 1 is, therefore, constrained to never being able to train best responses against any other policy than Player 2's random policy.
In turn, this prevents Player 2 from training additional novel policies, and puts the game in a deadlocked state.

Noise in the payoff matrices may lead to different best responses against the Nash Mixture of policies, effectively increasing diversity. However, this effect did not seem to manifest in our experiments. 
To more clearly illustrate this, we introduce a means of evaluating the policy pool diversity, counting the number of unique policies in the pool.
Specifically, given that Kuhn poker is a finite state game, comparing policies is straightforward, and only amounts to comparing each policy's output on all states of the games. If two policies have exactly the same output on all the game's states, they are equal; otherwise, they are distinct. 
We plot in \cref{fig:2player_diversity} the policy diversity of each meta-solver, where we observe that both Rectified Nash and Rectified PRD discover a total of 5 different policies. We have nevertheless noticed that in a few rare seeds, when using low number of simulations per payoff entry (Around 10), Rectified Nash was able to converge to low exploitability scores, suggesting a relationship between payoff noise, uncertainty and convergence of Rectified Nash whose investigation we leave for future work. We also leave the investigation of the relationship between Policy Diversity and Exploitability for future work, though note that there appears to be a clear correlation between both. 
Overall, these results demonstrate that the Rectified Nash solver fails to discover as many unique policies as the other solvers, thereby plateauing at a low \NashConv.  

\begin{figure}[t]
    \newcommand{\figHeight}{6.5\baselineskip}
    \newcommand{\figLegendHeight}{2\baselineskip}
    \newcommand{\negSpace}{-18pt}
    \centering
    \subcaptionbox{Kuhn poker Exploitability.\label{fig:kuhn_poker_2players_exploitabilities_rect} 
        }[0.4\textwidth]{ \includegraphics[height=\figHeight]{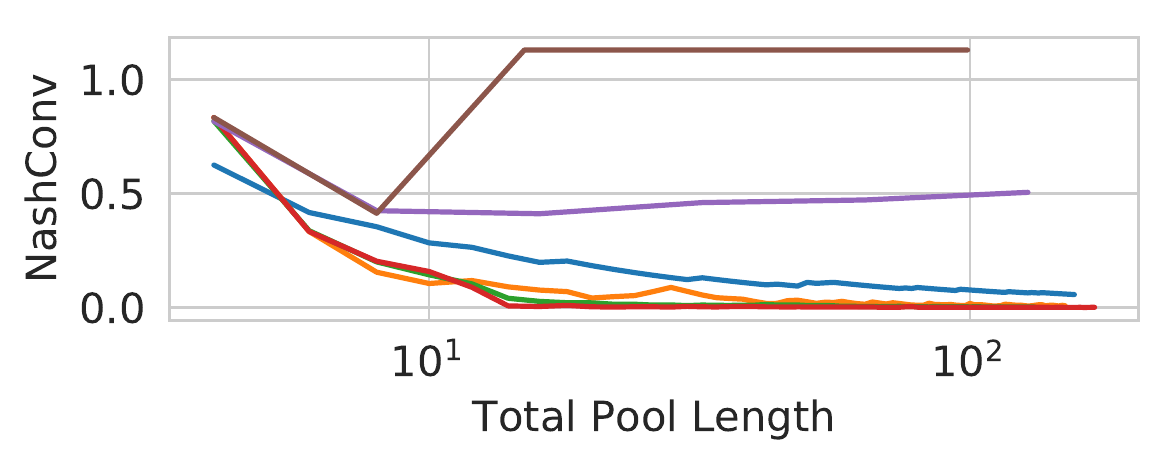}
        \vspace{\negSpace}
    }%
    \hfill
    \subcaptionbox{Kuhn poker Diversity.\label{fig:kuhn_poker_2players_diversity_rect} 
        }[0.4\textwidth]{ \includegraphics[height=\figHeight]{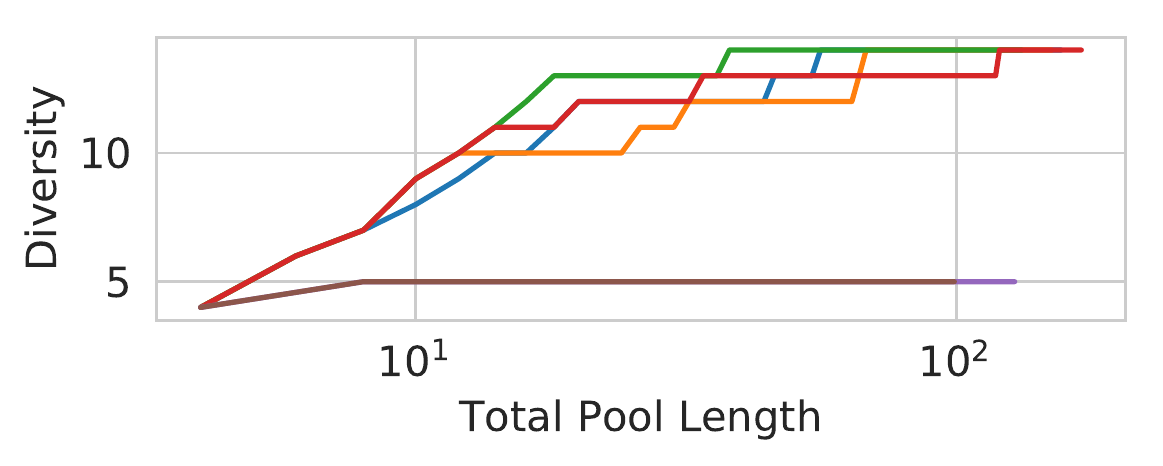}
        \vspace{\negSpace}
    }%
    \begin{subfigure}{0.15\textwidth}
        \vspace{-70pt}
        \hspace{50pt}
        \includegraphics[width=\textwidth]{figs/kuhn_leduc_2p_rect_leg.pdf}
        \vspace{\negSpace}
    \end{subfigure}
    \caption{Policy Exploitability and Diversity in 2-player Kuhn for a given seed and 100 simulations per payoff entry.}
    \label{fig:2player_diversity}
\end{figure}

Finally, regarding Rectified PRD, which performs better in terms of \NashConv when compared to Rectified Nash, we suspect that payoff noise \emph{in combination} with the intrinsic noise of PRD, plays a key factor - but those two are not enough to deterministically make Rectified PRD converge to 0 exploitability, since in the seed that generated \cref{fig:2player_diversity}, it actually doesn't (Though it indeed converges in \cref{fig:2player_nashconv}).
We conjecture this noisier behavior may enable Rectified PRD to free itself from deadlocks more easily, and thus discover more policies on average. A more detailed analysis of Rectified PRD is left as future work.   

\newpage
\begin{figure}[t]
    \newcommand{\figHeight}{17\baselineskip}
    \newcommand{\figLegendHeight}{2\baselineskip}
    \newcommand{\negSpace}{-18pt}
    \centering
    \subcaptionbox{
    Initial (uniform) policies. 
    \label{fig:Player_1_Random}}[\textwidth]{ \includegraphics[height=\figHeight]{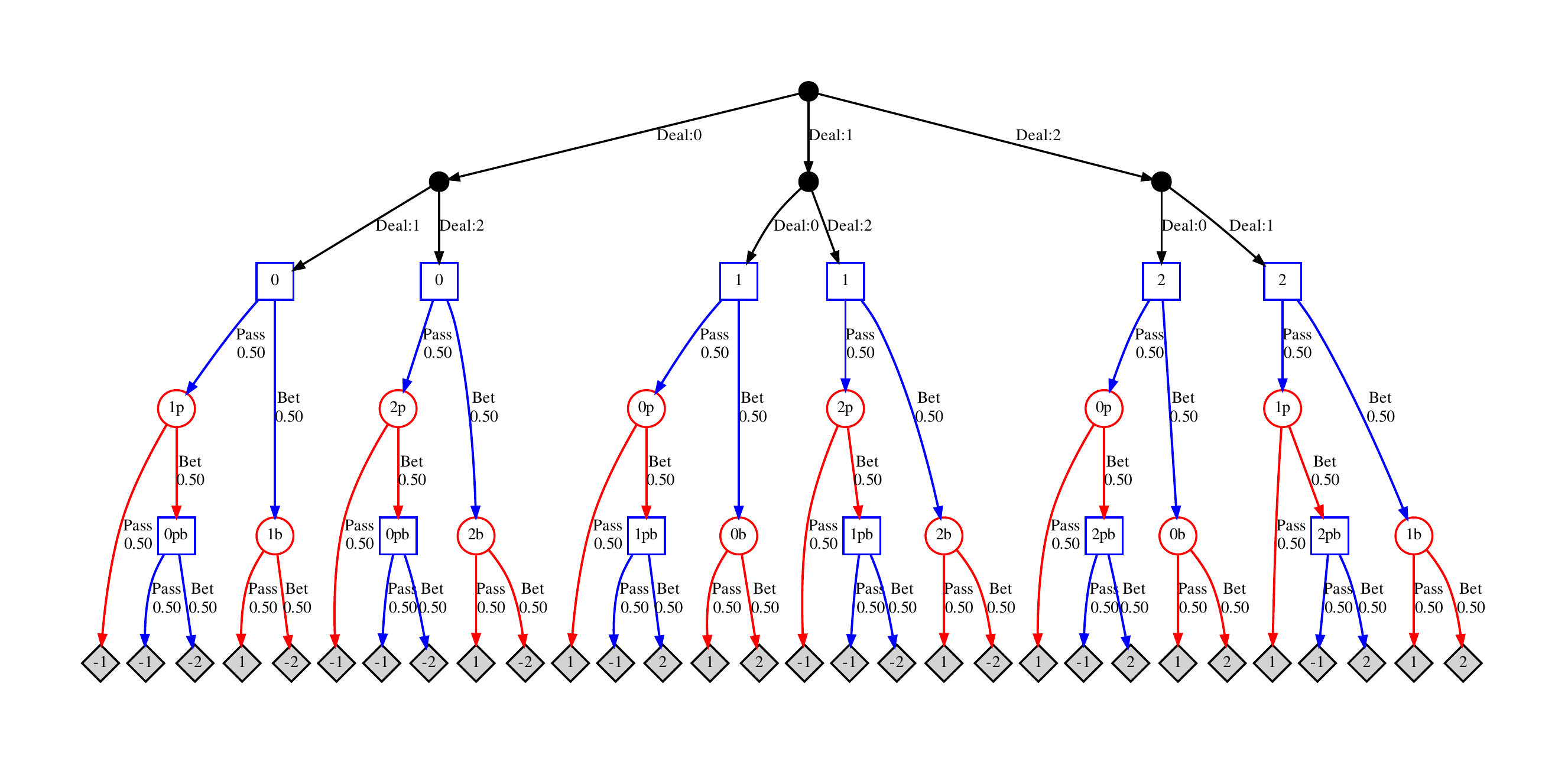}
    }
    \hfill
    \subcaptionbox{Player 1's first best response indicated in blue, and the policy it best-responded against in red.\label{fig:Player_1_BR_1}}[\textwidth]{\includegraphics[height=\figHeight]{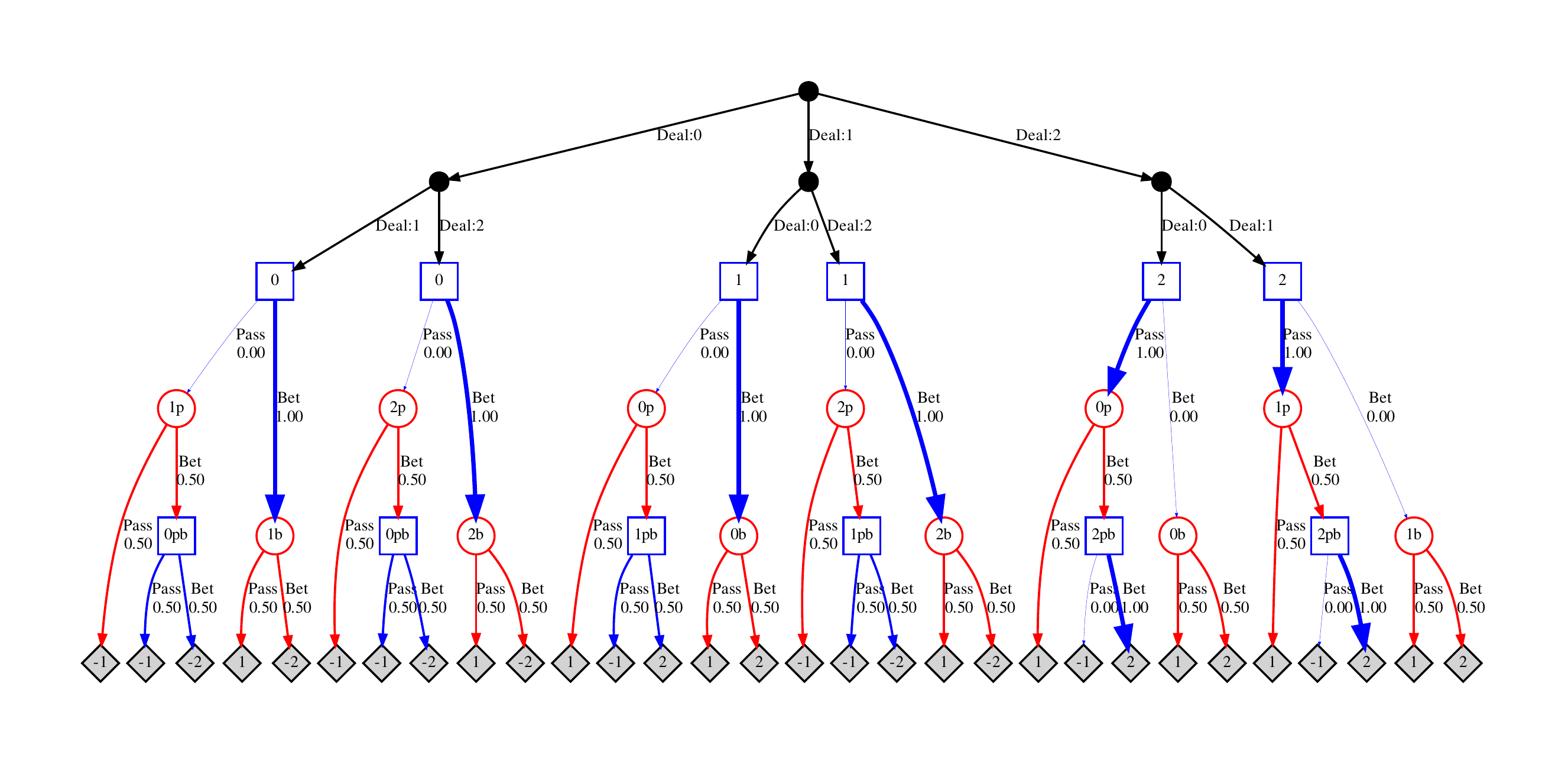}
    }%
    \caption{
    Game tree with both players' policies visualized. 
    Player 1 decision nodes and action probabilities indicated, respectively, by the blue square nodes and blue arrows. Player 2's are likewise shown via the red counterparts.  
    }
    \label{fig:rect_nash_player_1_pol}
\end{figure}

\newpage
\begin{figure}[t]
    \newcommand{\figHeight}{17\baselineskip}
    \newcommand{\figLegendHeight}{2\baselineskip}
    \newcommand{\negSpace}{-18pt}
    \centering
    \subcaptionbox{
    Initial (uniform) policies. 
    \label{fig:Player_2_Random} 
        }[\textwidth]{ \includegraphics[height=\figHeight]{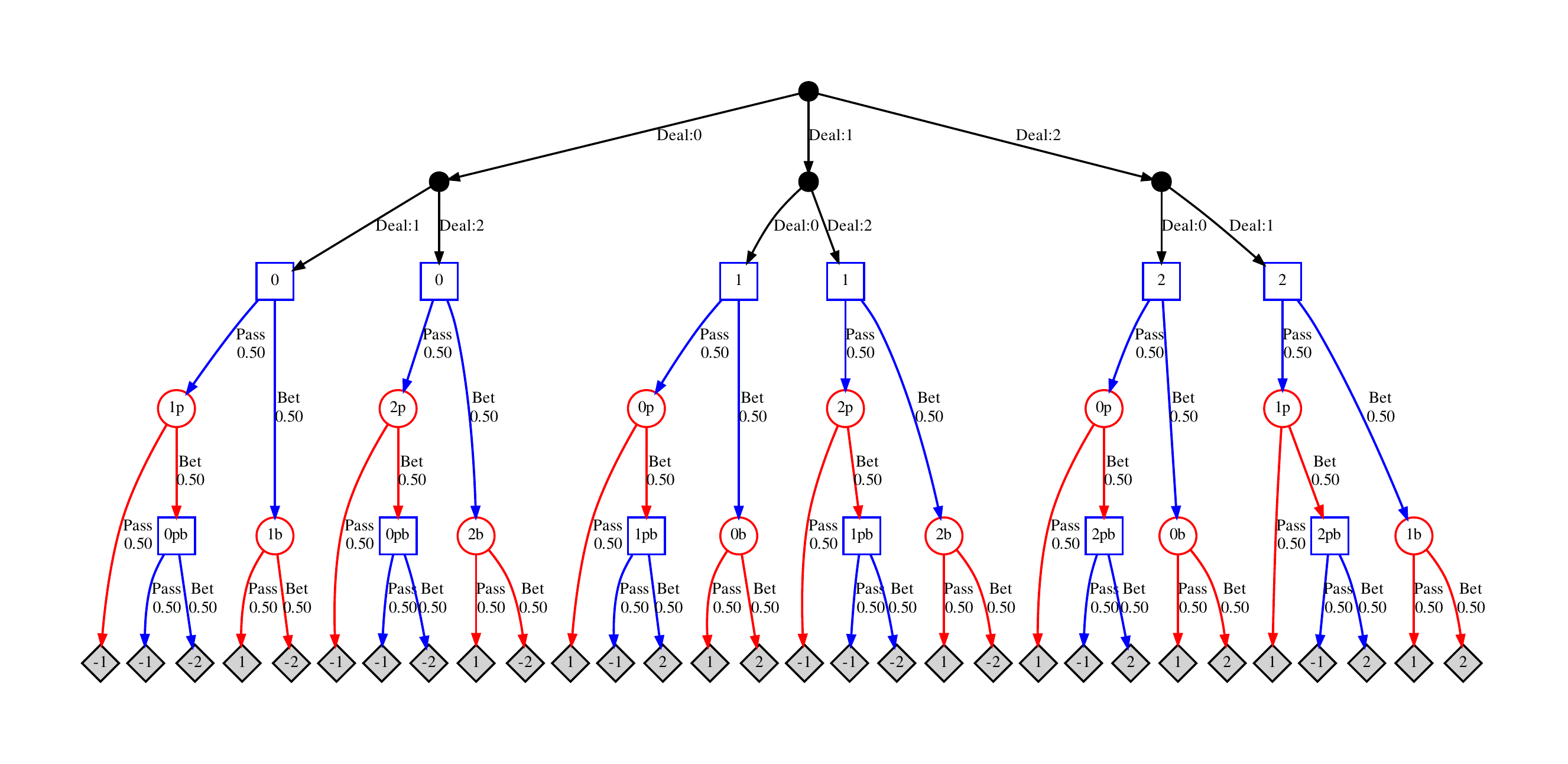}
    }
    \hfill
    \subcaptionbox{
    Player 2's first best response indicated in red, and the policy it best-responded against in blue.
    \label{fig:Player_2_BR_1} 
        }[\textwidth]{\includegraphics[height=\figHeight]{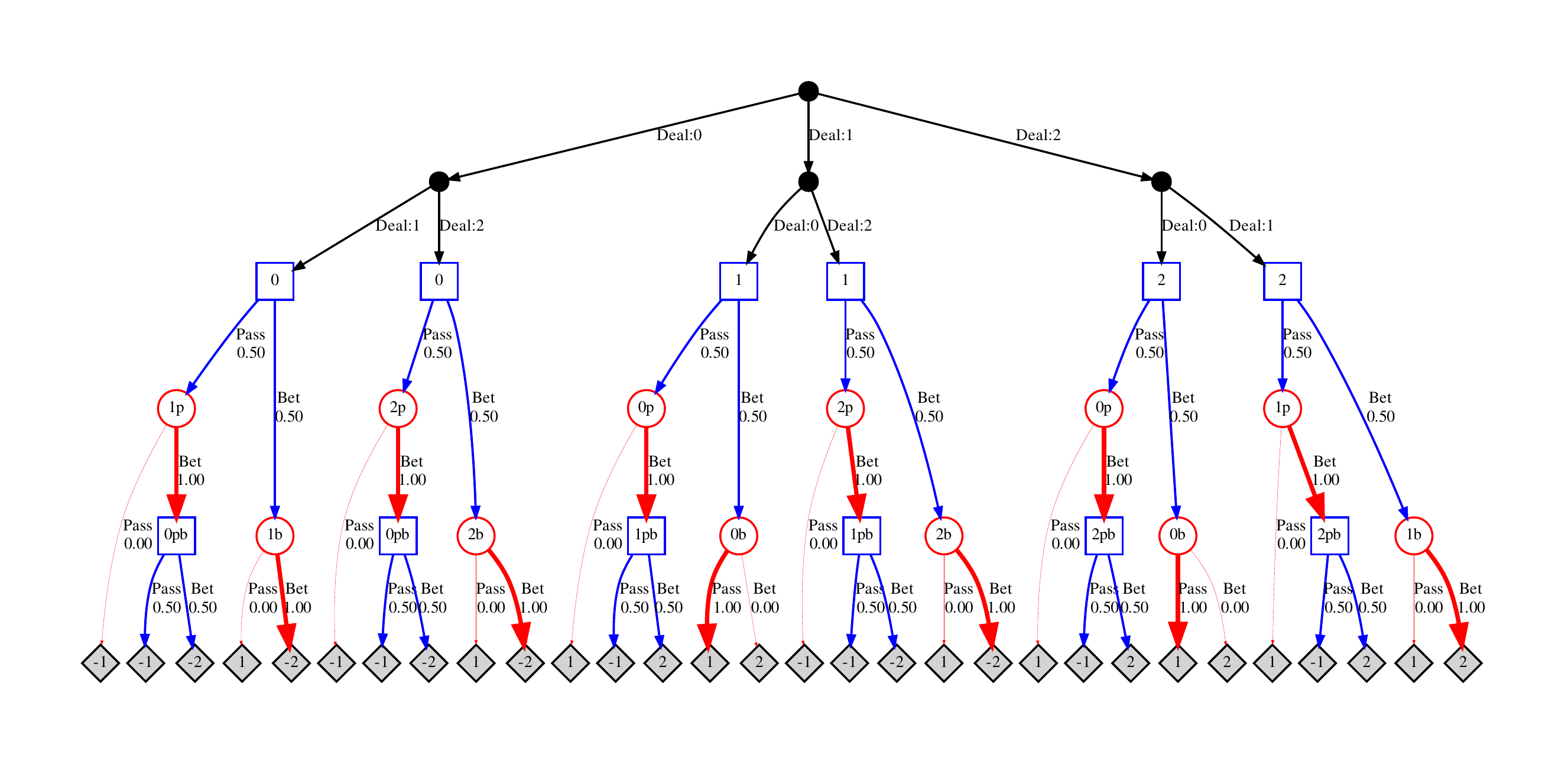}
    }%
    \hfill
    \subcaptionbox{Player 2's second best response indicated in red, and the policy it best-responded against in blue.\label{fig:Player_2_BR_1} 
        }[\textwidth]{\includegraphics[height=\figHeight]{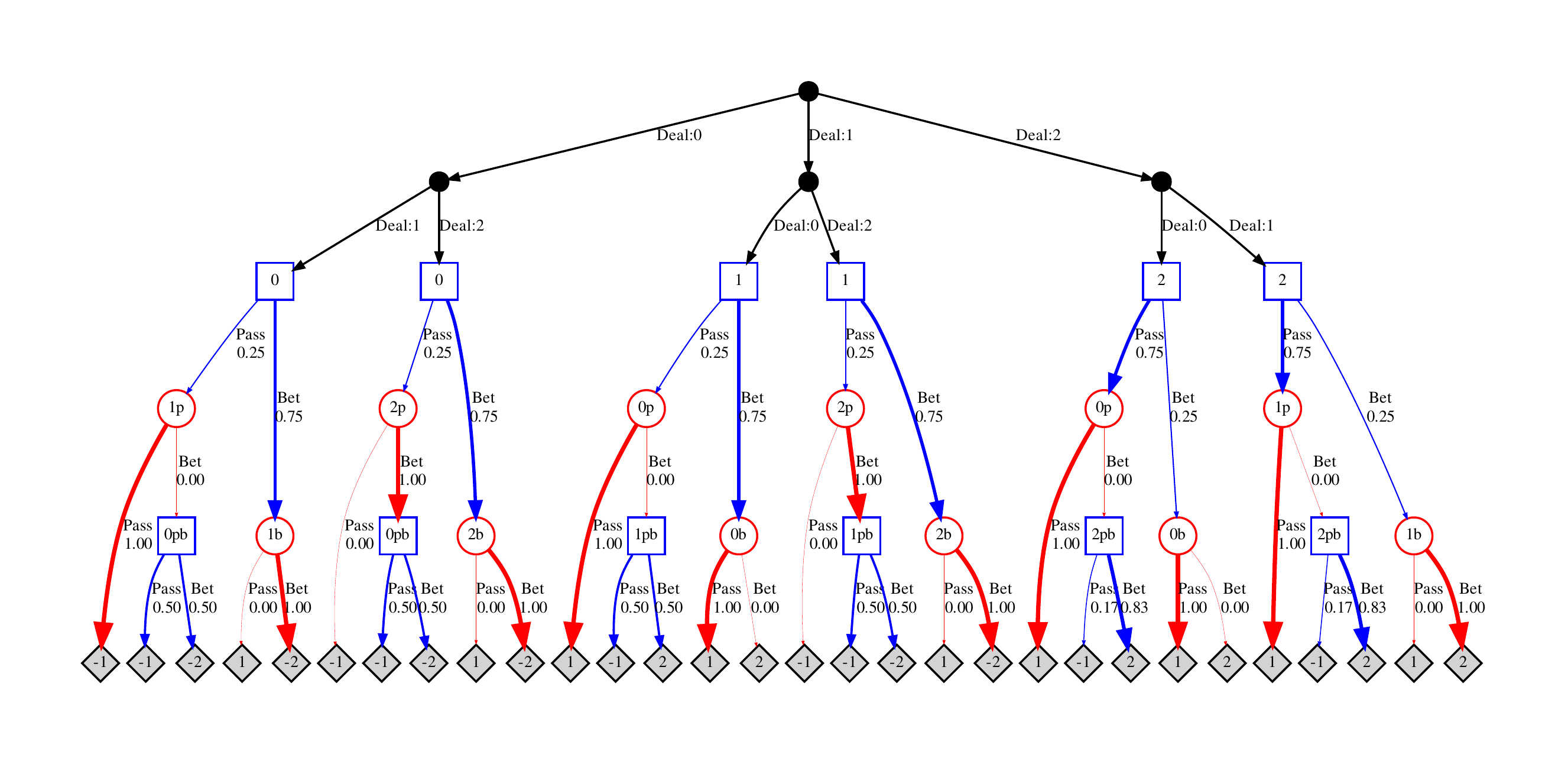}
    }%

    \caption{
    Game tree with both players' policies visualized. 
    Player 1 decision nodes and action probabilities indicated, respectively, by the blue square nodes and blue arrows. Player 2's are likewise shown via the red counterparts.  
    } \label{fig:rect_nash_player_2_pol}
\end{figure}

\newpage

\begin{table}[h]
\centering
    \begin{subfigure}{\textwidth}
    \centering
         $$\begin{bmatrix}
        0.1014  &  -0.4287 \\
        0.4903  &  -0.1794
        \end{bmatrix}$$
        \caption{Iteration 1.}
        \label{table:1_psro_iteration_info}
    \end{subfigure}
    
    \vspace{30pt}
    \begin{subfigure}{\textwidth}
    \centering
        $$\begin{bmatrix}
        0.1014 & -0.4287 & -0.2461 & -0.2284 \\
        0.4903 & -0.1794 & -0.4988 & -0.5228 \\
        0.5169 & -0.1726 & -0.4946 & -0.5 \\
        0.5024 & -0.1832 & -0.4901 & -0.5066 
        \end{bmatrix}$$
        \caption{Iteration 2.}
        \label{table:2_psro_iteration_info}
    \end{subfigure}
    
    \vspace{30pt}
    \begin{subfigure}{\textwidth}
    \centering
    $$\begin{bmatrix}
    0.1014 & -0.4287 & -0.2461 & -0.2284 & -0.264  & -0.2602 & -0.2505 \\
    0.4903 & -0.1794 & -0.4988 & -0.5228 & -0.5015 & -0.5501 & -0.5159 \\
    0.5169 & -0.1726 & -0.4946 & -0.5    & -0.5261 & -0.5279 & -0.4979 \\
    0.5024 & -0.1832 & -0.4901 & -0.5066 & -0.5069 & -0.4901 & -0.5033 \\
    0.4893 & -0.1968 & -0.5084 & -0.4901 & -0.5015 & -0.4883 & -0.4796 \\
    0.4841 & -0.1496 & -0.4892 & -0.491  & -0.4724 & -0.4781 & -0.5087 \\
    0.5179 & -0.1769 & -0.503  & -0.521  & -0.4991 & -0.4739 & -0.4649 \\
    0.4959 & -0.1613 & -0.5123 & -0.518  & -0.5126 & -0.5039 & -0.4853
    \end{bmatrix}$$
    \caption{Iteration 3.}
    \label{table:3_psro_iteration_info}
    \end{subfigure}
    \caption{\psro{Rectified Nash, BR} evaluated on 2-player Kuhn Poker. Player 1's payoff matrix shown for each respective training iteration.}
\end{table}

\clearpage
\section{\alpharank in Detail}\label{appendix:alpharank_overview}

In this section we give further details of \alpharank; for a full description, see \citet{omidshafiei2019alpha}. Essentially \alpharank defines a directed \emph{response graph} over the pure strategy profiles of the game under study, by indicating when a player has an incentive to make a unilateral deviation from their current strategy. An irreducible (noisy) random walk over this graph is then defined, and 
the strategy profile rankings are obtained by ordering the masses of this Markov chain's unique invariant distribution $\boldsymbol{\pi}$.

The Markov transition matrix $\mathbf{C}$ that specifies this random walk is defined as follows for the multi-population case; see \citet{omidshafiei2019alpha} for the single-population case. Consider a pure strategy profile $s \in S$, and let $\sigma=(\sigma^k, s^{-k})$ be the pure strategy profile which is equal to $s$, except for player $k$, which uses strategy $\sigma^k \in S^k$ instead of $s^k$. Let $\mathbf{C}_{s, \sigma}$ denote the transition probability from $s$ to $\sigma$, and $\mathbf{C}_{s, s}$ the self-transition probability of $s$, with each defined as:
\begin{align}\label{eq:alpharanktransition1}
    \mkern-28mu \mathbf{C}_{s, \sigma} & = 
    \begin{cases}
        \eta \frac{1- \exp\left( -\alpha  \left( \mathbf{M}^k(\sigma) - \mathbf{M}^k(s) \right) \right) }{1- \exp\left( -\alpha m \left( \mathbf{M}^k(\sigma) - \mathbf{M}^k(s) \right) \right)}  & \text{if } \mathbf{M}^k(\sigma) \not= \mathbf{M}^k(s)\\
        \frac{\eta}{m} & \text{otherwise}\, ,
    \end{cases}
    \\
    \mathbf{C}_{s, s} &= 1 - \mkern-32mu \sum_{\substack{k \in [K] \\ \sigma | \sigma^k \in S^k \setminus \{s^k\} } } \mkern-30mu \mathbf{C}_{s , \sigma} \, ,
    \end{align}
where $\eta = (\sum_{l} (|S^l| - 1) )^{-1}$. If two strategy profiles $s$ and $s^\prime$ differ in more than one player's strategy, then $\mathbf{C}_{s,s^\prime} = 0$. 
Here $\alpha \geq 0$ and $m \in \mathbb{N}$ are parameters to be specified; the form of this transition probability is described by  evolutionary dynamics models from evolutionary game theory and is explained in more detail in \citet{omidshafiei2019alpha}. Large values of $\alpha$ correspond to higher \emph{selection pressure} in the evolutionary model under consideration; the version of \alpharank used throughout this paper corresponds to the limiting invariant distribution as $\alpha \rightarrow \infty$, under which only strategy profiles appearing in the sink strongly-connected components of the response graph can have positive mass.

\clearpage

\section{Towards Theoretical Guarantees for the Projected Replicator Dynamics}\label{sec:prd_and_alpharank}

Computing Nash equilibria is intractable for general games and can suffer from a selection problem \citep{daskalakis2009complexity};
therefore, it quickly becomes computationally intractable to employ an exact Nash meta-solver in the inner loop of a PSRO algorithm.
To get around this, \citet{lanctot2017unified} use regret minimization algorithms to attain an approximate correlated equilibrium (which is guaranteed to be an approximate Nash equilibrium under certain conditions on the underlying game, such as two-player zero-sum).
A dynamical system from evolutionary game theory that also converges to equilibria under certain conditions is the \emph{replicator dynamics} \citep{taylor1978evolutionary,schuster1983replicator,Cressman14,BloembergenTHK15}, which defines a dynamical system over distributions of strategies $(\vpi^k_s(t)\ |\ k \in [K], s \in S^k)$, given by
\begin{align}\label{eq:rd}
    \dot{\vpi}^k_{s}(t) = \vpi^k_s(t) \left\lbrack \mM^k(s,\vpi^{-k}(t)) - \mM^k(\vpi^k(t)) \right\rbrack \, , \quad \text{ for all } k \in [K], \, s \in S^k \, ,
\end{align}
with an arbitrary initial condition. \citet{lanctot2017unified} introduced a variant of replicator dynamics, termed \emph{projected replicator dynamics} (PRD), which projects the flow of the system so that each distribution $\vpi^k(t)$ lies in the set $\Delta^\gamma_{S^k} = \{ \vpi \in \Delta_{S^k}\ |\ \vpi_s \geq \gamma/(|S^k|+1), \, \forall s \in S^k \}$; see, e.g., \citet{nagurney2012projected} for properties of such projected dynamical systems. This heuristically enforces additional ``exploration'' relative to standard replicator dynamics, and was observed to provide strong empirical results when used as a meta-solver within PSRO.
However, the introduction of projection potentially severs the connection between replicator dynamics and Nash equilibria, and the theoretical game-theoretic properties of PRD were left open in \citet{lanctot2017unified}.

Here, we take a first step towards investigating theoretical guarantees for PRD.
Specifically, we highlight a possible connection between \alpharank, the calculation of which requires no simulation, and a constrained variant of PRD, which we denote the `single-mutation PRD' (or s-PRD), leaving formal investigation of this connection for future work.

Specifically, s-PRD is a dynamical system over distributions $(\vpi^k_s(t) | k \in [K], s \in S^k)$ that follows the replicator dynamics (\eqref{eq:rd}), with initial condition restricted so that each $\vpi^k_0$ lies on the $1$-skeleton $\Delta^{(1)}_{S^k} = \{ \vpi \in \Delta_{S^k}\ |\ \sum_{s \in S^k} \mathbbm{1}_{\vpi_s \not= 0} \leq 2 \}$. Further, whenever a strategy distribution $\vpi^k_t$ enters a $\delta$-corner of the simplex, defined by $\Delta^{[\delta]}_{S^k} = \{ \vpi \in \Delta^{(1)}_{S^k}\ |\ \exists s \in S^k \text{ s.t. } \vpi_s \geq 1 - \delta \}$, the non-zero element of $\vpi^k(t)$ with mass at most $\delta$ is replaced with a uniformly randomly chosen strategy after a random time distributed according to $\text{Exp}(\mu)$, for some small $\mu > 0$. This concludes the description of s-PRD.
We note at this stage that s-PRD defines, essentially, a dynamical system on the 1-skeleton (or edges) of the simplex, with random mutations towards a uniformly-sampled randomly strategy profile $s$ at the simplex vertices.
At a high-level, this bears a close resemblance to the finite-population \alpharank dynamics defined in \citet{omidshafiei2019alpha};
moreover, we note that the connection between s-PRD and true \alpharank dynamics becomes even more evident when taking into account the correspondence between the standard replicator dynamics and \alpharank that is noted in \citet[Theorem 2.1.4]{omidshafiei2019alpha}.

We conclude by noting a major limitation of both s-PRD and PRD, which can limit their practical applicability even assuming a game-theoretic grounding can be proven for either.
Specifically, with all such solvers, simulation of a dynamical system is required to obtain an approximate equilibrium, which may be costly in itself.
Moreover, their dynamics can be chaotic even for simple instances of two-player two-strategy games \citep{palaiopanos2017multiplicative}.
In practice, the combination of these two limitations may completely shatter the convergence properties of these algorithms in practice, in the sense that the question of \emph{how long to wait until convergence} becomes increasingly difficult (and computationally expensive) to answer.
By contrast, \alpharank does not rely on such simulations, thereby avoiding these empirical issues.

We conclude by remarking again that, albeit informal, these results indicate a much stronger theoretical connection between \alpharank and standard PRD that may warrant future investigation.

\clearpage
\section{MuJoCo Soccer Experiment}\label{appendix:mujoco_soccer}

While the key objective of this paper is to take a first step in establishing a theoretically-grounded framework for PSRO-based training of agents in many-player settings, an exciting question concerns the behaviors of the proposed \alpharank-based PSRO algorithm in complex domains where function-approximation-based policies need to be relied upon for generalizable task execution. 
In this section, we take a preliminary step towards conducting this investigation, focusing in particular on the MuJoCo soccer domain introduced in \citet{liu2019emergent} (refer to \url{https://github.com/deepmind/dm_control/tree/master/dm_control/locomotion/soccer} for the corresponding domain code).

In particular, we conduct two sets of initial experiments. The first set of experiments compares the performance of \psro{\alpharank, RL} and \psro{Uniform, RL} in games of 3~vs.~3 MuJoCo soccer, and the second set compares \psro{\alpharank, RL} against a population-based training pipeline in 2~vs.~2 games.

\subsection{Training Procedure}
For each of the PSRO variants considered, we compose a hierarchical training procedure composed of two levels.
At the low-level, which focuses on simulations of the underlying MuJoCo soccer game itself, we consider a collection of 32 reinforcement learners (which we call agents) that are all trained at the same time, as in \citet{liu2019emergent}.
We compose teams corresponding to multiple clones of agent per team (yielding homogeneous teams, in contrast to \citep{liu2019emergent}, which evaluates teams of heterogeneous agents) and evaluate all pairwise team match-ups.
Note that this yields a 2-``player" meta-game (where each ``player" is actually a team, i.e., a team-vs.-team setting), with payoffs corresponding to the average win-rates of each team when pitted against each other. 

The payoff matrix is estimated by simulating matches between different teams. The number of simulations per entry is adaptive based on the empirical uncertainty observed on the pair-wise match outcomes. In practice, we observed an average of 10 to 100 simulations per entry, with fewer simulations used for meta-payoffs with higher certainty. For the final evaluation matrix reported in Appendix F (Fig. F.10), which was computed after the conclusion of PSRO-based training, 100 simulations were used per entry.
Additionally, instead of adding one policy per PSRO iteration per player we add three (which corresponds to the 10\% best RL agents). 

Several additional modifications were made to standard PSRO to help with the inherently more difficult nature of Deep Reinforcement Learning training:
\begin{itemize}
    \item Agent performance, used to choose which agents out of the 32 to add to the pool, is measured by the \alpharank-average for \psro{\alpharank, RL} and Nash-average for \psro{Uniform, RL} of agents in the (Agents, Pool) vs (Agents, Pool) game.
    \item Each oracle step in PSRO is composed of 1 billion learning steps of the agents. After each step, the top 10\% of agents (the 3 best agents) are added to the pool, and training of the 32 agents continues; 
    \item We use a 50\% probability of training using self-play (the other 50\% training against the distribution of the pool of agents).
\end{itemize}

\subsection{Results}

In the first set of experiments, we train the \psro{\alpharank, RL} and \psro{Uniform, RL} agents independently (i.e., the two populations never interact with one another).
Following training, we compare the effective performance of these two PSRO variants by pitting their 8 best trained agents against one another, and recording the average win rates.
These results are reported in \cref{fig:appendix_soccer_comparison} for games involving teams of 3~vs.~3. It is evident from these results that \psro{\alpharank, RL} significantly outperforms \psro{Uniform, RL}. This is clear from the colorbar on the far right of \cref{fig:appendix_soccer_comparison}, which visualizes the post-training alpharank distribution over the payoff matrix  of the metagame composed of both training pipelines.

In the second set of experiments, we compare \alphapsro-based training to self-play-based training (i.e., sampling opponents uniformly at random from the training agent population). This provides a means of gauging the performance improvement solely due to PSRO; these results are reported in \cref{fig:mujoco_alpha_psro_vs_no_psro} for games involving teams of 2~vs.~2.

We conclude by remarking that these results, although interesting, primarily are intended to lay the foundation for use of \alpharank as a meta-solver in complex many agent domains where RL agents serve as useful oracles; 
additionally, more extensive research and analysis is necessary to make these results conclusive in domains such as MuJoCo soccer. We plan to carry out several experiments along these lines, including extensive ablation studies, in future work.

\begin{figure}[t]
    \includegraphics[width=\textwidth]{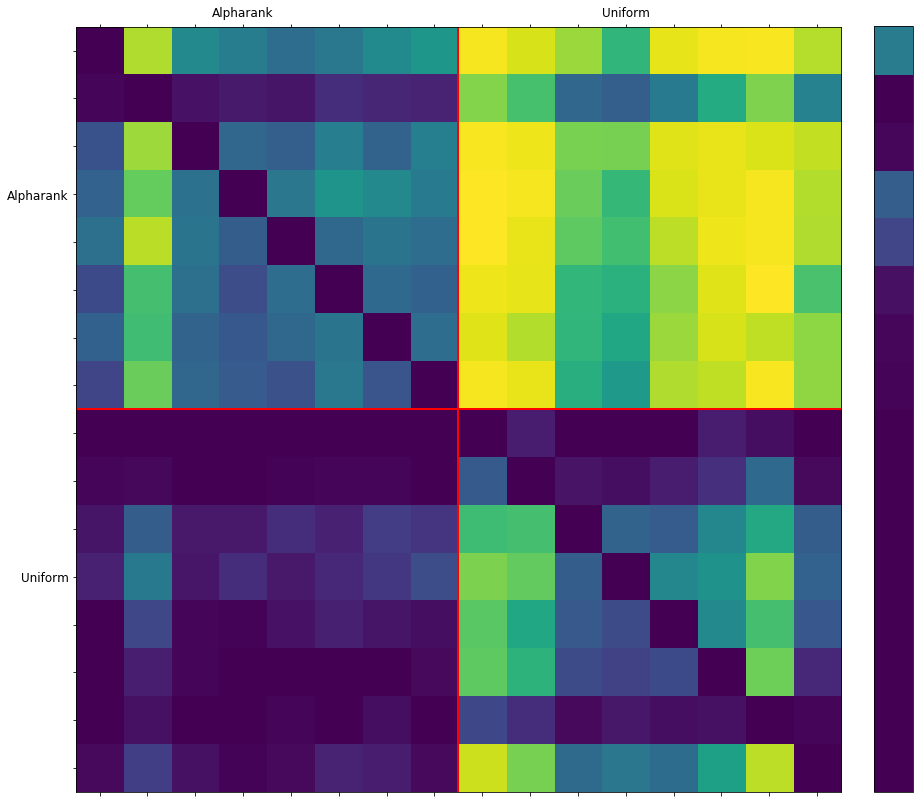}
    \caption{\alphapsro versus PSRO(Uniform, BR) in the MuJoCo Soccer domain. Left is the matrix representing the probability of winning for \alphapsro and PSRO(Uniform, BR)'s best 8 agents. Right is the \alpharank distribution over the meta-game induced by these agents. Yellow are high probabilities, dark-blue are low probabilities.The diagonal is taken to be 0.}
    \label{fig:appendix_soccer_comparison}
\end{figure}

\begin{figure}[t]
    \includegraphics[width=\textwidth]{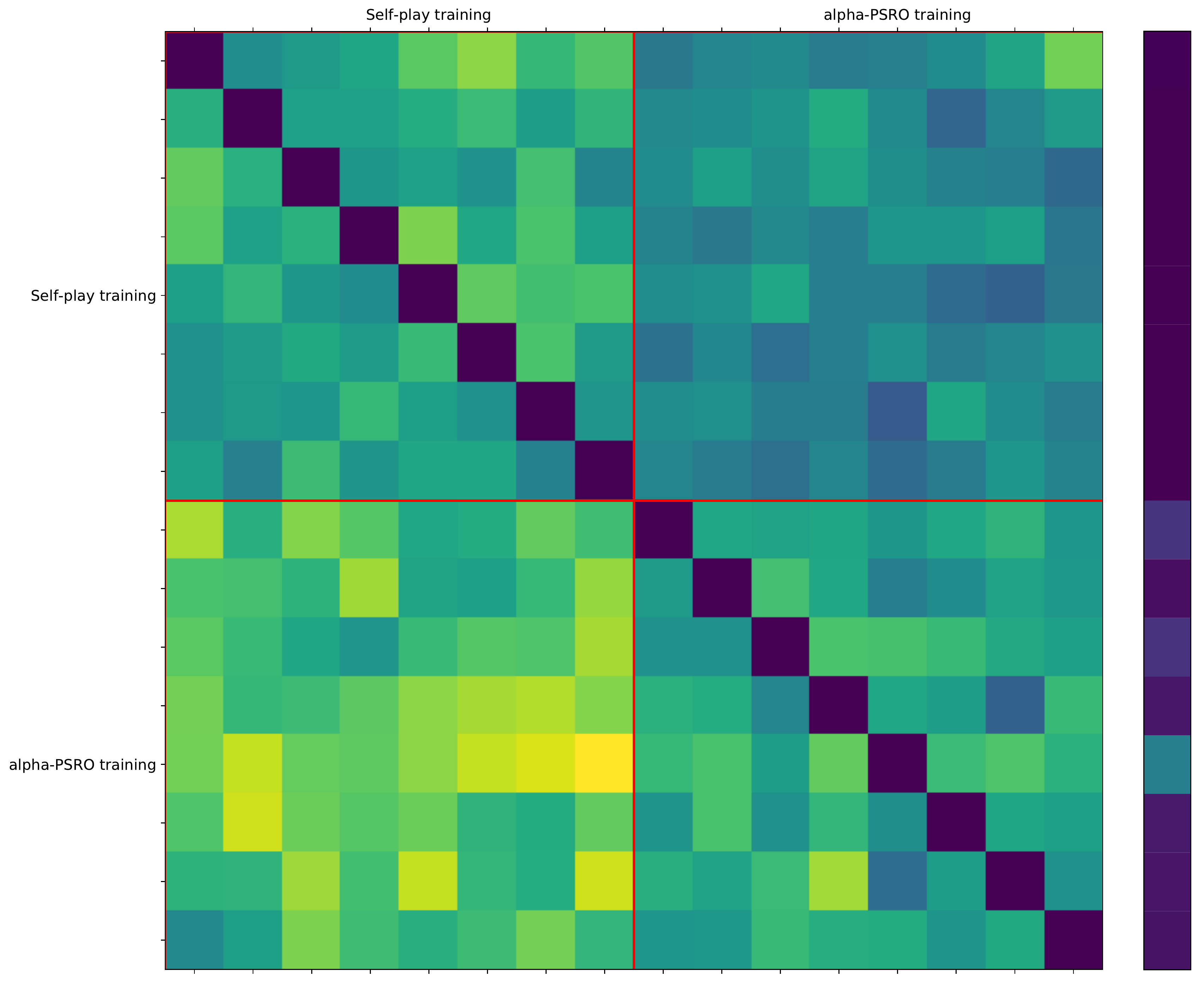}
    \caption{\alphapsro training pipeline vs. training pipeline without PSRO.}
    \label{fig:mujoco_alpha_psro_vs_no_psro}
\end{figure}

\end{appendices}

\end{document}